\documentclass[11pt]{article}
\usepackage{complexity}

\usepackage[english]{babel}
\usepackage[utf8x]{inputenc}
\usepackage[T1]{fontenc}
\usepackage{enumitem}
\usepackage[margin=1.in,marginparwidth=1.75cm]{geometry}

\usepackage[algo2e, ruled, vlined]{algorithm2e}
\usepackage{amsmath}
\usepackage{amsfonts}
\usepackage{graphicx}
\usepackage{amsthm}
\usepackage{amssymb}
\usepackage{dsfont}
\usepackage[algo2e, ruled, vlined]{algorithm2e}
\usepackage{bbm}

\usepackage[colorinlistoftodos]{todonotes}
\usepackage[colorlinks=true, allcolors=blue]{hyperref}

\theoremstyle{theorem}
\newtheorem{theorem}{Theorem}
\newtheorem{claim}{Claim}
\newtheorem{lemma}{Lemma}
\newtheorem{corollary}{Corollary}

\newtheorem{remark}{Remark}
\newenvironment{reminder}[1]{\bigskip
	\noindent {\bf Reminder of #1.}\em}{\smallskip}

\newenvironment{subproof}[1][\proofname]{%
  \begin{proof}[#1]%
}{%
  \end{proof}%
}

\theoremstyle{definition}
\newtheorem{definition}{Definition}

\newenvironment{proofsketch}{\begin{proof}[Proof sketch]}{\end{proof}}

\newcommand{\COLLIDE}{\mathsf{COLLIDE}}
\newcommand{\calA}{\mathcal{A}}

\newcommand{\calE}{{\mathcal{E}}}

\newcommand{\calH}{{\mathcal{H}}}

\newcommand{\calT}{{\mathcal{T}}}

\renewcommand{\AC}{\mathsf{AC}}

\newcommand{\Otilde}{\widetilde{O}}

\newcommand{\bits}{{\{0,1\}}}

\newcommand{\Out}{\mathsf{Out}}

\newcommand{\level}{\mathsf{level}}
\newcommand{\walk}{\mathsf{walk}}
\newcommand{\Mwalk}{\mathsf{Mwalk}}
\newcommand{\stdwalk}{\mathsf{stdwalk}}
\newcommand{\extwalk}{\mathsf{extwalk}}
\newcommand{\good}{\mathsf{good}}
\newcommand{\width}{\mathsf{wd}}
\newcommand{\std}{\mathsf{std}}
\newcommand{\probe}{\mathsf{probe}}
\newcommand{\pre}{\mathsf{pre}}
\DeclareMathOperator\supp{supp}
\DeclareMathOperator*{\Ex}{\mathbb{E}}

\newenvironment{CenterBox}
    {\begin{center}
    \begin{tabular}{p{0.9\textwidth}}
    }
    { 
    \end{tabular} 
    \end{center}
    }


\title{Time-Space Tradeoffs for Element Distinctness and Set Intersection via Pseudorandomness}
\author{Xin Lyu\thanks{Department of EECS, University of California at Berkeley. Email: \texttt{xinlyu@berkeley.edu}} \and Weihao Zhu\thanks{Department of Computer Science, Shanghai Jiao Tong University. Email: \texttt{zhuweihao@sjtu.edu.cn}}}

\begin{document}

\maketitle

\pagenumbering{gobble}

\begin{abstract}
In the \textsc{Element Distinctness} problem, one is given an array $a_1,\dots, a_n$ of integers from $[\mathrm{poly}(n)]$ and is tasked to decide if $\{a_i\}$ are mutually distinct. Beame, Clifford and Machmouchi (FOCS 2013) gave a low-space algorithm for this problem that runs in space $S(n)$ and time $T(n)$ where $T(n) \le \Otilde(n^{3/2}/S(n)^{1/2})$, assuming a random oracle (i.e., random access to polynomially many random bits). A recent breakthrough by Chen, Jin, Williams and Wu (SODA 2022) showed how to remove the random oracle assumption in the regime $S(n) = \polylog(n)$ and $T(n) = \Otilde(n^{3/2})$. They designed the first truly $\polylog(n)$-space, $\Otilde(n^{3/2})$-time algorithm by constructing a small family of hash functions $\calH \subseteq \{h | h:[\mathrm{poly}(n)]\to [n]\}$ with a certain pseudorandom property.

In this paper, we give a significantly simplified analysis of the pseudorandom hash family by Chen \emph{et al}. Our analysis clearly identifies the key pseudorandom property required to fool the BCM algorithm, allowing us to explore the full potential of this construction. Based on our new analysis, we show the following.

\begin{itemize}
    \item As our main result, we give a time-space tradeoff for \textsc{Element Distinctness} without random oracle. Namely, for every $S(n),T(n)$ such that $T\approx \Otilde(n^{3/2}/S(n)^{1/2})$, our algorithm can solve the problem in space $S(n)$ and time $T(n)$. Our algorithm also works for a related problem \textsc{Set Intersection}, for which this tradeoff is tight due to a matching lower bound by Dinur (Eurocrypt 2020).
    \item As a direct application of our technique, we show a more general pseudorandom property of the hash family, which we call the ``$c$-connecting'' property. It might be of independent interest.
    \item The construction by Chen \emph{et al.}~needs $O(\log^3 n\log\log n)$ random bits to sample the pseudorandom hash function. We slightly improve the seed length to $O(\log^3 n)$.

\end{itemize}

\end{abstract}


\newpage
\pagenumbering{arabic}

\section{Introduction}

\emph{Time} and \emph{space} are arguably the two most fundamental measures of computational resources in the theory of computation. Time-space tradeoff studies the (im)possibility of solving computational problems simultaneously in low space and time. Among many problems studied in this area, \textsc{Element Distinctness} is a fundamental one.

\begin{CenterBox}
\textsc{Element Distinctness}: Given (read-only random access to) a list of $n$ positive integers $a_1, \dots, a_n$, each taking value in $[1,\mathrm{poly}(n)]$, decide whether all $a_i$'s are distinct.
\end{CenterBox}
\medskip\noindent\textbf{Upper bounds.} We start our discussion with upper bounds. If we have $O(n)$ words (each word has $O(\log n)$ bits) of working space, we can solve the problem just by sorting. This costs $\Otilde(n)$\footnote{For brevity, throughout this article, we use $\Otilde(f)$ to hide $\mathrm{polylog}(f,n)$ factors.} time
but $\Omega(n)$ space. On the other hand, we can enumerate every pair of indices to check if there is a colliding pair. This costs $\Omega(n^2)$ time but only requires $O(1)$ space. More generally, by applying a low-space sorting algorithm  \cite{DBLP:journals/tcs/MunroP80, DBLP:conf/focs/PagterR98}, \textsc{Element Distinctness} can be solved in time $T(n)$ and space $S(n)$ for every $T(n), S(n)$ such that $T(n)\cdot S(n) \ge n^2$. For \emph{comparison-based} model, this is known to be optimal \cite{DBLP:journals/siamcomp/BorodinFHUW87,DBLP:journals/siamcomp/Yao94}.

Surprisingly, if we consider non-comparison-based algorithms, a better tradeoff can be achieved. In 2013, Beame, Clifford, and Machmouchi \cite{DBLP:conf/focs/BeameCM13} gave a non-comparison-based algorithm (which we call the BCM algorithm henceforth) for \textsc{Element Distinctness} with time-space tradeoff $T(n)\le \Otilde(n^{3/2}/S(n)^{1/2})$, \emph{assuming a random oracle}: read-only random access to polynomially many random bits that do not count towards the space complexity. This assumption is strong: it assumes that there is a large table of random bits to which the algorithm has free and random access. A more practical model is called \emph{one-way} access to random bits. That is, the algorithm can request a random bit (i.e., toss a coin) at each time, but cannot read random bits requested in the past (unless it stores the bits in its own working memory).

\medskip\noindent\textbf{The CJWW algorithm.} For the case that $S(n) = \polylog(n)$, the random oracle assumption was removed in a recent breakthrough by Chen, Jin, Williams and Wu \cite{ChenJWW22}, who gave the first truly low-space algorithm for \textsc{Element Distinctness} that beats pairwise-comparison: their algorithm runs in $O(\polylog(n))$ space and $\Otilde(n^{3/2})$ time, with typical one-way access to random bits. 




However, the algorithm by Chen \emph{et al.}~does not naturally extend to a smooth time-space tradeoff as the BCM algorithm does. The analysis in \cite{ChenJWW22} is involved and carefully tailored to the case $S(n) = O(\polylog(n))$. Therefore, it is not clear from their proof whether the case $S(n) = O(\polylog(n))$ and $S(n) = n^{\Omega(1)}$ have any inherent difference. It was asked as an open question in \cite{ChenJWW22} whether their algorithm can be generalized to a tradeoff result.

\medskip\noindent\textbf{Lower bounds.} One might also wonder whether the tradeoff $T(n)\cdot S(n)^{1/2} \approx \Theta(n^{3/2})$ is tight, even assuming a random oracle. While the current technique fails to give a matching lower bound (the best lower bound for \textsc{Element Distinctness} is barely superlinear \cite{DBLP:journals/toc/Ajtai05}), for a closely related problem, the same algorithmic idea still applies, and a \emph{matching} lower bound exists. We consider the \textsc{Set Intersection} problem.

\begin{CenterBox}
\textsc{Set Intersection}: Given two integer sets represented by two (not necessarily sorted) input lists $(a_1,\dots, a_n)$, $(b_1,\dots, b_n)$ which are promised to not contain duplicates, print all the elements in the intersection $A\cap B$.
\end{CenterBox}

Note that \textsc{Set Intersection} can be seen as a variant of \textsc{Element Distinctness}: now the task is to find collisions between two lists, and the algorithm needs to output \emph{all collisions}. The BCM algorithm applies to \textsc{Set Intersection} and gives the same tradeoff $T(n)\cdot S(n)^{1/2} = \Theta(n^{3/2})$.

The work by Patt-Shamir and Peleg \cite{DBLP:journals/tcs/Patt-ShamirP93} and by Dinur \cite{DBLP:conf/eurocrypt/Dinur20-lowerbound} showed that any $S(n)$-space algorithm for \textsc{Set Intersection} must use time $T(n) \ge \Omega(n^{3/2} / S(n)^{1/2})$, even if the printed elements can be in any order, and each element in $A\cap B$ is allowed to be printed multiple times. Therefore, assuming a \emph{random oracle}, we conclude that $T(n)\cdot S(n)^{1/2}\approx \Theta(n^{3/2})$ is the optimal time-space tradeoff for \text{Set Intersection}.

Chen \emph{et al.}~\cite{ChenJWW22} also gave a randomized algorithm for \textsc{Set Intersection} that runs in time $O(n^{3/2})$ and space $O(\polylog(n))$. Given the lower bound, the running time is optimal in the regime $S(n) = O(\polylog(n))$. Still, it is open whether one can generalize their algorithm to achieve the optimal time-space tradeoff.

\medskip\noindent\textbf{The CJWW pseudorandom hash family.} The main technical tool behind the CJWW algorithm is a construction of a small hash family $\calH\subseteq \{h: [m]\to [n]\}$, which is samplable using $O(\log^3 n\log\log n)$ random bits. This also implies that every $h\in \calH$ admits a short description (namely, the random seed used to generate $h$). They then use a random $\mathbf{h}\sim \calH$ to replace the random oracle required by the BCM algorithm, and show that it does not degrade the success probability of the algorithm too much (thanks to a certain pseudorandom property).

The pseudorandom hash family by Chen \emph{et al.}~is based on the iterative restriction framework \cite{DBLP:journals/acr/AjtaiW89}. Roughly speaking, the iterative restriction framework starts by assuming that $\mathbf{h}:[m]\to [n]$ is truly random function, and gradually fills in entries of $\mathbf{h}$ with pseudorandom elements. In pseudorandomness literature, people have utilized this methodology to construct pseudorandom generators (PRG) for various computational models \cite{DBLP:conf/focs/GopalanMRTV12, DBLP:conf/coco/TrevisanX13, DBLP:journals/siamcomp/HaramatyLV18, DBLP:journals/toc/LeeV20, DBLP:conf/focs/ForbesK18, DBLP:conf/stoc/MekaRT19-width3}. However, in all these PRG results, the statistical tests considered are not adaptive, in the sense that the target circuit/program always reads its input in a \emph{pre-defined} pattern. 

Remarkably, the algorithm by Chen \emph{et al.}~provides a non-trivial example, showing that the iterative restriction construction can fool some classes of highly-adaptive tests, where the future query to the (pseudorandom) hash function $\mathbf{h}$ heavily depends on the previous responses returned by $\mathbf{h}$. This motivates us to further explore the capability of pseudorandom objects constructed by the iterative restriction framework.

\subsection{Our Results}

The main result of this paper is an affirmative answer to the open questions left by \cite{ChenJWW22}. Namely, we extend the algorithm by Chen \emph{et al.}~to the whole time-space tradeoff, matching the tradeoff offered by the BCM algorithm. 
Our result is mainly based on an improved analysis of the pseudorandom hash family constructed by Chen \emph{et al.} ~\cite{ChenJWW22}. Along the way, we also prove some new pseudorandom properties of the hash family, which seem difficult to establish by other means. Therefore, we think the analysis may be of independent interest to the pseudorandomness community. We elaborate on our contributions below.

\medskip\noindent\textbf{Algorithmic results.} We start with the algorithm side. First, we show an improved time-space tradeoff for \textsc{Element Distinctness} and \textsc{Set Intersection} that beats sorting.

\begin{theorem}\label{theo:element-distinctness-algo}
For every complexity bound $S(n), T(n): \mathbb{N}\to \mathbb{N}$ such that $S(n)^{1/2}\cdot T(n) \ge n^{1.5}$, there is a Monte Carlo algorithm solving \textsc{Element Distinctness} in time $O(T(n)\cdot \polylog(n))$ and space $O(S(n)\cdot \polylog(n))$ with one-way access to random bits. Moreover, when there is a colliding pair, the algorithm reports one with high probability.
\end{theorem}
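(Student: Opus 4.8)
The plan is to follow the structure of the BCM algorithm but replace its use of a random oracle with a random hash function $\mathbf{h}\sim\calH$ drawn from the CJWW pseudorandom hash family, and then invoke the improved pseudorandom analysis (the "$c$-connecting" property advertised in the introduction) to argue correctness. Recall that the BCM algorithm, given access to a random function $h:[m]\to[R]$ with $R$ a suitable range, partitions the index set by the value of $h(a_i)$ and performs random walks / pointer-chasing among elements landing in the same bucket, so that a collision $a_i=a_j$ is detected with good probability once the walk length is large enough. With a working space budget of $S=S(n)$ words, one runs roughly $S$ such walks in parallel, each of length about $n^{3/2}/S^{3/2}$ steps (so that total time is $S \cdot n^{3/2}/S^{3/2} \cdot \polylog(n) = n^{3/2}/S^{1/2} \cdot \polylog(n)$, matching the claimed bound $T\cdot S^{1/2}\ge n^{1.5}$), while a collision, if it exists, is found with constant probability and then amplified.

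The main steps, in order, would be: (i) recall the precise form of the BCM algorithm and isolate the single probabilistic event it needs — namely that on the relevant buckets the induced functional graph has the right structure (bounded in-degree / no long "bad" chains except the ones that certify a collision), which is exactly the kind of adaptive test the CJWW family is designed to fool; (ii) instantiate $\calH\subseteq\{h:[m]\to[n]\}$ from the earlier construction with the improved seed length $O(\log^3 n)$, noting that the seed fits in $O(\log^2 n)$ words and so storing $\mathbf{h}$ costs only $\polylog(n)$ space, consistent with one-way access to randomness (sample the seed first, keep it, regenerate $\mathbf{h}(x)$ on demand); (iii) invoke the pseudorandom guarantee to show that the failure probability of the BCM walk under $\mathbf{h}\sim\calH$ exceeds the truly-random failure probability by at most a negligible additive term, uniformly over all $S$ parallel walks (union bound over $\poly(n)$ walks); (iv) handle the range-size issue — the construction maps to $[n]$, so to get buckets of the right size for a given $S$ one composes with a fixed hash or sub-samples, which only changes $\polylog$ factors; (v) assemble the time and space accounting and add a standard amplification layer to boost the success probability and to report an actual colliding pair, as required by the "moreover" clause.

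I expect the main obstacle to be step (iii): arguing that the CJWW family fools the BCM test in the \emph{entire} tradeoff regime $S$ up to $n^{\Omega(1)}$, not just $S=\polylog(n)$. This is precisely the point where the original analysis of \cite{ChenJWW22} was tailored to small space and does not directly generalize; the resolution should come from the simplified analysis promised earlier in the paper, which identifies the clean "$c$-connecting" pseudorandom property of $\calH$ and shows it holds with the right parameters regardless of $S$. Concretely, one must phrase the success of a single length-$\ell$ walk as depending only on whether $\mathbf{h}$ fails to be $c$-connecting for an appropriate $c=c(\ell, n, S)$, verify that $\calH$ is $c$-connecting except with probability $o(1/\poly(n))$ for that $c$, and check that the parameter $c$ stays in the admissible range throughout $n^{1.5}\le S^{1/2}T$. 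The remaining steps (i), (ii), (iv), (v) are essentially bookkeeping: re-deriving the BCM walk-length calculation, the space budget for the seed, the range adjustment, and the amplification, none of which should present conceptual difficulty once the pseudorandom lemma is in hand.
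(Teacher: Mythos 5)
There is a genuine gap in step (iii), and it is precisely at the point you flag as the main obstacle. You propose to control the success of the BCM walk with $S$ parallel starting points by invoking the ``$c$-connecting'' property for a growing $c=c(\ell,n,S)$. That cannot work: the $c$-connecting guarantee (Definition~\ref{def:connecting}, Theorem~\ref{theo:connecting-intro}) is a lower bound of the form $\Omega_c(n^{-c/2})$ on the probability of a \emph{single} random starting vertex reaching $c$ \emph{fixed} targets, and it is only formulated — and only meaningful — for constant $c$, since both the bound itself and the hidden constant degrade to uselessness as $c$ grows. Moreover it is a lower bound on a hitting probability, not an ``except with probability $o(1/\poly(n))$''–style tail bound on an event you can union-bound over walks, so the phrase ``$\calH$ fails to be $c$-connecting'' is a category error: $c$-connecting is a property of the family/distribution, not of an individual draw $\mathbf{h}$. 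What the tradeoff actually requires is a statement about $k=S$ \emph{random} starting vertices hitting two fixed targets, and the paper proves exactly this (Lemma~\ref{lemma:multi-lb}): $\Pr_{\mathbf{h},(\mathbf{x}_i)}[u,v\in\Out_{a,\mathbf{h}}(\{\mathbf{x}_i\})]\ge\Omega(k/F_2(a))$ for the family $\calH^{n,m,t,\kappa}$ with $t=\tfrac12\log(n/k)$. The proof of this lemma is the new technical content — the ``multi-walk'' tree and its extended-walk analysis (Algorithms~\ref{algo:standard-multi-walk}–\ref{algo:extend-multi-walk}, Lemmas~\ref{lemma:multi-extend-is-random} and~\ref{lemma:multi-surgical-upperbound}) — and is a genuine generalization, not a corollary of the $c$-connecting property. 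The $c$-connecting theorem is a separate byproduct proved in Section~\ref{sec:connecting} and plays no role in the tradeoff proof.

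Two smaller issues worth noting. First, the tradeoff in the paper is obtained by \emph{reducing the number of levels} in the hash construction ($t=\tfrac12\log(n/S)$ rather than $\tfrac12\log n$) and by running $\COLLIDE$ on $S$ random starting points per trial; there is no bucketing or range-size composition as in your step (iv), and the per-walk length is $\Theta(\sqrt{n/S})$ (so a trial costs $\Otilde(\sqrt{nS})$), not $n^{3/2}/S^{3/2}$. Second, the correctness argument has a two-case structure depending on whether $F_2(a)$ is large or small: if $F_2(a)\ge 2n$, even a single-start-vertex trial with $t=\tfrac12\log n$ succeeds with constant probability, and only in the complementary case does one invoke Lemma~\ref{lemma:multi-lb}. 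Your outline would need both of these corrections in addition to the main one above.
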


\begin{theorem}\label{theo:set-intersection-algo}
For every complexity bound $S(n), T(n): \mathbb{N}\to \mathbb{N}$ such that $S(n)^{1/2}\cdot T(n) \ge n^{1.5}$, there is a Monte Carlo algorithm solving \textsc{Set Intersection} in time $O(T(n)\cdot \polylog(n))$ and space $O(S(n)\cdot \polylog(n))$ with one-way access to random bits. The algorithm prints elements in no particular order, and the same element may be printed multiple times.
\end{theorem}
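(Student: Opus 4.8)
The plan is to reduce Theorem~\ref{theo:set-intersection-algo} to the \textsc{Element Distinctness} machinery of Theorem~\ref{theo:element-distinctness-algo}, following the BCM/CJWW template, with the key change being that we must \emph{enumerate} all collisions rather than halt on the first one. Recall the BCM approach: to find a collision between the two lists $A=(a_1,\dots,a_n)$ and $B=(b_1,\dots,b_n)$, one samples a pseudorandom hash $\mathbf{h}\sim\calH$ with $h:[m]\to[n]$, considers the functional graph induced by the composition of $\mathbf{h}$ with the input-addressing map, and walks along $\mathbf{h}$-paths to detect a point where two distinct inputs get routed to the same value. Concretely, one interleaves $A$ and $B$ into a single length-$2n$ array, runs the element-distinctness collision-finding routine, and each reported collision either involves $a_i=a_j$ (impossible by the no-duplicates promise), $b_i=b_j$ (also impossible), or $a_i=b_j$, which is exactly an element of $A\cap B$. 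So a single run of the \textsc{Element Distinctness} collision finder, applied to the interleaved array, yields one element of $A\cap B$ (or certifies the intersection is empty) in time $O(T(n)\polylog n)$ and space $O(S(n)\polylog n)$ under the stated tradeoff $S(n)^{1/2}T(n)\ge n^{1.5}$.

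Next I would handle \emph{printing all} intersection elements. The standard trick is iterative removal: after finding a collision pair $(i,j)$ with $a_i=b_j$, we must ``delete'' that value from future searches so the next run finds a fresh one. Since the input is read-only, we cannot literally delete; instead we maintain in working memory a small list $L$ of already-reported values and, whenever the collision-finder would probe input position $k$, we treat $a_k$ (resp.\ $b_k$) as ``inactive'' (e.g.\ remapped to a dummy symbol, or skipped) if its value lies in $L$. Each pass reports one new intersection value, so after $|A\cap B|+1$ passes we have printed everything and the last pass certifies no collisions remain. The total running time is $(|A\cap B|+1)\cdot O(T(n)\polylog n)$; since $|A\cap B|\le n$ and the problem statement only asks for time $O(T(n)\polylog n)$ \emph{per reported element} implicitly (or we absorb the $n$ factor — here we should be careful and instead use a standard bucketing/recursion to avoid an extra factor of $n$), we partition the value universe $[\poly(n)]$ into $O(n)$ buckets, guaranteeing $\Otilde(1)$ intersection elements per bucket in expectation, and run the single-collision finder per bucket; this keeps the amortized cost at $\Otilde(T(n))$ total while the space for $L$ stays $O(\polylog n)$ words (in fact $L$ never needs more than a constant number of entries at a time). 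The space to store the bucket index and $L$ is $O(\polylog n)$, well within the $O(S(n)\polylog n)$ budget.

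The last ingredient is correctness of the pseudorandom hash substitution in this repeated/filtered setting. Here I would invoke the improved analysis of the CJWW hash family from earlier in the paper (the ``$c$-connecting'' property and the clean identification of the pseudorandom property that fools the BCM walk): the crucial point is that filtering out a bounded set $L$ of values, and restarting the walk, only changes the adaptive query pattern in a way still captured by the same pseudorandom test class, so a single $\mathbf{h}\sim\calH$ with seed length $O(\log^3 n)$ fools all the (polynomially many) passes simultaneously after a union bound — or, if cleaner, we draw a fresh independent $\mathbf{h}$ per bucket, costing $O(\log^3 n)$ one-way random bits each and $O(n\log^3 n)$ total, which is fine. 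The main obstacle I anticipate is precisely this: ensuring that the ``deletion by filtering'' does not break the pseudorandom guarantee — the walk on the filtered graph is a \emph{different} adaptive test than the one analyzed for plain \textsc{Element Distinctness}, and I must verify it still falls within the class of tests the $c$-connecting property controls (intuitively it does, since filtering only prunes edges, shortening walks, and the analysis is monotone in that sense, but this needs to be stated carefully). Everything else is a routine adaptation of the BCM reduction from \textsc{Set Intersection} to collision-finding.
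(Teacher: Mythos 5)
Your reduction to the interleaved (concatenated) array is correct, but the main step — handling the requirement to print \emph{all} intersection elements — goes off in the wrong direction, and the paper's proof uses a much simpler mechanism that makes your filtering/bucketing machinery (and the unresolved worry about whether filtering preserves the pseudorandom guarantee) unnecessary.

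The misconception is that a single run of the element-distinctness collision finder ``finds one collision and halts.'' In fact, the algorithm of Theorem~\ref{theo:element-distinctness-algo} is internally a sequence of $\Theta\left((n/S)\log n\right)$ independent trials: in each trial one samples a \emph{fresh} hash $\mathbf{h}\sim \calH^{n,m,t,\kappa}$, $S$ fresh starting vertices, and runs $\COLLIDE$ (Lemma~\ref{lemma:bcm-cycle}), which reports \emph{all} colliding pairs it encounters in the reached component. To move from \textsc{Element Distinctness} to \textsc{Set Intersection} the paper simply (i) concatenates $a,b$ into $c\in[m]^{2n}$, (ii) prints every colliding pair found in any trial, and (iii) bumps the number of trials from $\Theta((n/S)\log n)$ to $\Theta((n/S)\log^2 n)$. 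By Lemma~\ref{lemma:multi-lb}, any fixed colliding pair $(p,q)$ is found in a single trial with probability $\Omega(S/F_2(c)) = \Omega(S/n)$; after $\Theta((n/S)\log^2 n)$ independent trials the miss probability is $n^{-\omega(1)}$, and a union bound over the at most $n$ colliding pairs shows all are printed with high probability. The time is still $\Otilde(n^{3/2}/\sqrt{S})$. The problem statement's allowance that ``the same element may be printed multiple times'' is precisely what makes this work: there is no need for a list $L$ of already-reported values, no filtering of input positions, no bucketing of the value universe, and therefore no new pseudorandomness argument to justify. Your iterative-removal plan would also face a concrete efficiency problem you glossed over — consulting $L$ on every probe multiplies the inner-loop cost by $|L|$, and your bucketing fix changes the adaptive query pattern in a way that the existing analysis does not cover (as you yourself flag).
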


In light of Dinur's lower bounds \cite{DBLP:conf/eurocrypt/Dinur20-lowerbound}, Theorem~\ref{theo:set-intersection-algo} is optimal up to polylogarithmic factors. Even if we believe the tradeoff given by Theorem~\ref{theo:element-distinctness-algo} is not tight, new ideas are required to separate \textsc{Element Distinctness} from its multi-output variant \textsc{Set Intersection}. Using current techniques, Theorem~\ref{theo:element-distinctness-algo} seems hard to improve, even allowing random oracles or considering the non-uniform low-space model (i.e. branching programs).

In the extremely low-space regime, the algorithm by Chen \emph{et al.}~\cite{ChenJWW22} needs $\Omega(\log^3 n\log\log n)$ bits of working memory to store the seed for the hash function. As a byproduct of our improved analysis, we reduce the seed length to $O(\log^3 n)$ bits. Consequently, now we only need $O(\log^3 n)$ bits of working space to start beating the trivial pairwise-comparison algorithm.

\begin{theorem}\label{theo:improved-seed}
Both \textsc{Element Distinctness} and \textsc{Set Intersection} can be solved by a Monte Carlo algorithm that runs in $\Otilde(n^{3/2})$ time, uses $O(\log^3 n)$ bits of working space and no random oracle.
\end{theorem}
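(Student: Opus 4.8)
The plan is to derive Theorem~\ref{theo:improved-seed} directly from Theorems~\ref{theo:element-distinctness-algo} and~\ref{theo:set-intersection-algo} by instantiating them at the extreme low-space end of the tradeoff and then accounting carefully for where the working memory is actually spent. Concretely, one takes $S(n) = \Theta(\polylog(n))$ just large enough that the hypothesis $S(n)^{1/2}\cdot T(n)\ge n^{1.5}$ is met with $T(n) = \Otilde(n^{3/2})$; this already yields a Monte Carlo algorithm running in $\Otilde(n^{3/2})$ time and $\polylog(n)$ space with one-way access to random bits, in particular \emph{no} random oracle, since the random oracle of the BCM algorithm is replaced by a single draw $\mathbf{h}\sim\calH$ whose short seed the algorithm reads once (using one-way access) and then stores. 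What remains is to pin the $\polylog(n)$ down to $O(\log^3 n)$.

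For the space accounting, the working memory of this algorithm splits into three parts: (i) the $O(\log n)$-bit state of the underlying BCM walk run with a constant number of words of ``real'' space; (ii) the seed $s$ of the pseudorandom hash function $\mathbf{h}$; and (iii) the scratch space used to compute a value $\mathbf{h}(x)$ from $s$ and $x$ on demand. Part (i) is clearly $O(\log n)$, and part (iii) is $O(\log^3 n)$ by the explicitness of the iterative-restriction construction (each restriction round is evaluable in small space given its piece of the seed), so the bottleneck is part (ii), the seed length of $\calH$. Thus Theorem~\ref{theo:improved-seed} reduces to improving the seed length of the CJWW family from $O(\log^3 n\log\log n)$ to $O(\log^3 n)$.

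This last improvement is where the refined analysis is used. Recall that $\calH$ is built by iterative restriction: roughly speaking, over $r$ rounds, round $i$ uses a fresh seed $s_i$ feeding a basic pseudorandom primitive (bounded-independence or small-bias hashing) to fill in a constant fraction of the still-undetermined entries of $\mathbf{h}$, and the full seed is the concatenation $s_1\cdots s_r$, of total length $O(\log^3 n\log\log n)$. In the original analysis the per-round error must be pushed down to $1/\polylog(n)$ so that it survives the union bounds over rounds and over the relevant collision events, which inflates the independence (equivalently, the inverse bias) parameter of each primitive — and hence $|s_i|$ — by a $\log\log n$ factor. Our simplified analysis isolates the exact pseudorandom property of $\calH$ needed to fool the BCM walk, and shows that the contributions of the rounds accumulate in a telescoping rather than a union-bound fashion; consequently each round can get away with only $O(\log n)$-wise independence, the $\log\log n$ overhead disappears, and the total seed length drops to $O(\log^3 n)$. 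Plugging this back into (ii), and noting that it does not increase (i) or (iii), yields the claimed $O(\log^3 n)$ space.

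The main obstacle is the step just described: re-establishing the pseudorandom guarantee with the tightened per-round parameters. One has to verify that, with the weaker bounded-independence primitives, the error introduced in each restriction round is still controlled — in particular that trimming the randomness budget of the early rounds does not force a larger budget in later rounds — and that the accumulated deviation from the fully random hash function stays within the slack that the BCM success probability can absorb. This is precisely where having a clean, modular statement of the ``key pseudorandom property'' pays off: once that property is checked for the weaker primitive, the round-by-round hybrid argument can be re-run with the new parameters essentially mechanically.
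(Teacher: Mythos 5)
Your proposal is conceptually aware of what needs to happen (reduce the per-level independence requirement of the CJWW family from $O(\log n\log\log n)$ to $O(\log n)$), but it does not actually prove it, and the explanation you sketch for \emph{why} the $\log\log n$ factor goes away does not match the actual mechanism. You assert that "the contributions of the rounds accumulate in a telescoping rather than a union-bound fashion," which is not what happens here. The paper's mechanism is combinatorial and direct: Lemma~\ref{lemma:extend-is-random} shows that the relevant pseudorandom property only requires each level $\mathbf{h}_d$ to be $c\tau$-wise independent, where $c$ is a fixed constant (the number of observed branches in the walk tree, which is $2$ for Lemma~\ref{lemma:double-lb}) and $\tau = 5\log n$ is the depth cap per level — because along the observed part of the walk tree each $\mathbf{h}_d$ is probed at most $c\tau = O(\log n)$ times, and the extended-walk construction (Algorithm~\ref{algo:extended-walk}) resamples exactly so that those $O(\log n)$ probes always hit fresh entries. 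Setting $\kappa = 20\log n$ therefore suffices, giving seed length $O(t\kappa\log(n+m)) = O(\log^3 n)$ with $t = \frac{1}{2}\log n$. Nothing telescopes; the improvement is that the number of hash evaluations that must "look random" is bounded by a constant times $\tau$, independent of the total length of the walk. Your sketch does not establish this, and without it the theorem is not proved.

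A secondary structural issue: you derive Theorem~\ref{theo:improved-seed} from Theorems~\ref{theo:element-distinctness-algo} and~\ref{theo:set-intersection-algo} by specializing the tradeoff at small $S(n)$. This is logically admissible since the tradeoff theorems do not depend on Theorem~\ref{theo:improved-seed}, but it is the reverse of the paper's organization and it outsources the entire technical burden to proving the tradeoff theorems, whose proofs require the heavier multi-walk machinery of Section~\ref{sec:tradeoff}. The paper instead proves Theorem~\ref{theo:improved-seed} self-containedly and earlier, directly from Lemmas~\ref{lemma:single-ub} and~\ref{lemma:double-lb}: repeat $\Theta(n\log n)$ independent trials, each drawing $\mathbf{h}\sim\calH^{n,m,t,\kappa}$ and $\mathbf{x}\sim[n]$ and running $\COLLIDE(\mathbf{x})$; Lemma~\ref{lemma:single-ub} bounds the per-trial expected time by $\Otilde(\sqrt{n})$, Lemma~\ref{lemma:double-lb} gives a per-trial success probability of $\Omega(1/F_2(a))$ for any fixed colliding pair (with the $\Omega(1/n)$ aggregate following from disjointness of the events), and the space is $O(\log n)$ (walk state) plus $O(t\kappa\log(n+m)) = O(\log^3 n)$ (seed). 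You would do well to at least state Lemmas~\ref{lemma:single-ub} and~\ref{lemma:double-lb} and carry out this short argument rather than defer to the (not yet proved, in your exposition) tradeoff theorems.
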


\medskip\noindent\textbf{Pseudorandomness results.} Before discussing the new pseudorandomness result, we briefly review the BCM algorithm. Let $a:[n]\to [m]$ be a mapping with only one colliding pair $a_p = a_q, p \ne q$ (that is, except for $a_p = a_q$, all other $a_i$'s are distinct). For every hash function $h:[m]\to [n]$, define a $1$-out digraph $G_{a,h}$ on the vertex set $[n]$ with edge set $\{(x, h(a_x)) \}$. For every $x\in [n]$, let $\Out_{a,h}(x)$ denote the set of vertices reachable from $x$ in $G_{a,h}$. Alternatively, $y\in \Out_{a,h}(x)$ if and only if $y = (h\circ a)^{(s)}(x)$ for some $s\ge 0$. Consider sampling a truly random hash function $\mathbf{h}:[m]\to [n]$ and a starting vertex $\mathbf{x}\in [n]$.
We have (by the birthday paradox):
\begin{align}
& \mathbb{E}_{\mathbf{h}, \mathbf{x}} [|\Out_{a,\mathbf{h}}(\mathbf{x})|] \le O(\sqrt{n}). \label{eq:size-bound} \\
& \Pr_{\mathbf{h},\mathbf{x}} [p,q\in \Out_{a,\mathbf{h}}(\mathbf{x})] \ge \Omega\left( \frac{1}{n} \right). \label{eq:2-connect}
\end{align}

Think of $a$ as the input array of an \textsc{Element Distinctness} instance. If $p,q\in \Out_{a,\mathbf{h}}(\mathbf{x})$, $a_p=a_q$ implies that $p$ and $q$ point to the same vertex in $G_{a,\mathbf{h}}$. Thus, we can find the pair $(p, q)$ by running Floyd's cycle finding algorithm (see, e.g., \cite{DBLP:books/aw/Knuth81, Pollard1975AMC}) on $G_{a,\mathbf{h}}$ with starting vertex $\mathbf{x}$, which costs $O(\sqrt{n})$ time and $O(\polylog(n))$ space. 

The BCM algorithm \cite{DBLP:conf/focs/BeameCM13} runs $\Otilde(n)$ independent trials of the cycle-finding procedure, each with a fresh hash. Since each trial succeeds with probability $\Omega(1/n)$, at least one trial succeeds in finding $(p, q)$ with high probability. By~\eqref{eq:size-bound}, the expected running time of one trial is $O(\sqrt{n})$, which brings the total running time to $\Otilde(n^{1.5})$ while the space complexity is $O(\polylog(n))$.


Inspired by the idea behind the BCM algorithm, we formulate the notion of ``$c$-connecting property'' for pseudorandom hash families.

\begin{definition}\label{def:connecting}
Let $\calH \subseteq \{h|h:[m]\to [n]\}$ be a family of hash functions. We say $\calH$ is $c$-connecting, if for every injective\footnote{Our technique can also deal with non-injective mappings and derive bounds that depend on the number of colliding pairs in the mapping. However, we only consider injective mappings for simplicity.} mapping $a:[n]\to [m]$ and every $c$ fixed vertices $1\le u_1 < u_2 < \dots < u_c \le n$, it holds that
\[
\Pr_{\mathbf{h}\sim \calH, \mathbf{x}\sim [n]}[\forall i\in [c], u_i\in \Out_{a,\mathbf{h}}(\mathbf{x})] \ge \Omega_c\left( n^{-c/2} \right).
\]
\end{definition}

For $m \ge n^2$, the probability bound $\Omega\left( n^{-c/2} \right)$ is the best we can hope for: it is easy to show that no hash family can achieve $\omega\left( n^{-c/2} \right)$ (see Appendix~\ref{sec:connecting-ub}). Note that a hash family $\calH$ has to be $2$-connecting to drive the BCM algorithm. Thus, the main result of \cite{ChenJWW22} can be viewed as a construction of a small $2$-connecting hash family. As a byproduct of our analysis, we can generalize the construction to obtain small $c$-connecting hash families for all constant $c\ge 2$.

\begin{theorem}\label{theo:connecting-intro}
For every constant $c \ge 2$, the following is true. For every $m\ge n\ge 2$, there is a $c$-connecting pseudorandom hash family $\calH \subseteq \{h| h:[m]\to [n]\}$. The seed length to sample a function from $\calH$ is $O(\log^2(n)\log(m))$.
\end{theorem}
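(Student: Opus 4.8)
The plan is to reuse the correctness skeleton of the BCM algorithm, but to replace its two ingredients --- the birthday-paradox reachability estimate and the truly random oracle --- by (i) a $c$-target, short-horizon refinement of the birthday bound, and (ii) the pseudorandom analysis of the CJWW iterative-restriction hash family developed in the preceding sections. Concretely, since $a$ is injective and $\mathbf{h}$ is uniform, the composition $f:=\mathbf{h}\circ a$ is a uniformly random function $[n]\to[n]$, so in the truly random case it suffices to understand reachability in the functional graph of a uniform random function, where the targets $u_1,\dots,u_c$ are fully symmetric. Fix a walk length $\ell=\Theta_c(\sqrt n)$ and let $E$ be the event that the length-$\ell$ forward walk $\mathbf{x}, f(\mathbf{x}), \dots, f^{(\ell)}(\mathbf{x})$ is simple and visits all of $u_1,\dots,u_c$; on $E$ each $u_i$ is literally reachable from $\mathbf{x}$, so $E$ is a sub-event of $\{\forall i\ u_i\in\Out_{a,\mathbf{h}}(\mathbf{x})\}$. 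A direct count shows the number of length-$\ell$ simple walks from a vertex $\mathbf{x}\notin\{u_1,\dots,u_c\}$ that contain all $c$ targets is $\Omega_c(\ell^c)\cdot n^{\ell-c}$, and each is realized with probability $n^{-\ell}$, so $\Pr_{\mathrm{unif}}[E]\ge\Omega_c(\ell^c n^{-c})=\Omega_c(n^{-c/2})$, where $\ell=\Theta_c(\sqrt n)$ is chosen large enough to fit all targets yet $O(\sqrt n)$ so that simple walks of that length are abundant.

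The second step feeds the test ``does $E$ hold?'' into the pseudorandom analysis. This test simply simulates the walk for $\ell$ steps --- so it makes at most $\ell=O_c(\sqrt n)$ adaptive queries to $\mathbf{h}$, maintaining only which targets have appeared and whether a vertex has repeated --- and its acceptance probability under a uniform hash is already $\Omega_c(\ell^{-c})$. Running the iterative-restriction argument on (a mild reparametrization of) the CJWW construction, with the number of restriction rounds and the degree of independence used per round both scaled up by a factor depending on $c$ --- so that one round consumes $O_c(\log n\cdot\log m)$ random bits --- drives the discrepancy $|\Pr_{\mathbf{h}\sim\calH}[E]-\Pr_{\mathrm{unif}}[E]|$ below $o_c(n^{-c/2})$. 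Hence $\Pr_{\mathbf{h}\sim\calH,\mathbf{x}}[\forall i\ u_i\in\Out_{a,\mathbf{h}}(\mathbf{x})]\ge\Pr_{\mathbf{h}\sim\calH,\mathbf{x}}[E]=\Omega_c(n^{-c/2})$, which is exactly the $c$-connecting property, and the total seed length is $O_c(\log n)\cdot O_c(\log n\log m)=O(\log^2(n)\log(m))$ for constant $c$.

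The crux, and the step I expect to be the main obstacle, is the pseudorandom transfer of Step 2: we are asking $\calH$ to preserve an event of probability only $n^{-c/2}$, i.e.\ of probability $\ell^{-c}$ relative to the query budget $\ell$, so the pseudorandom error must be polynomially smaller than $n^{-c/2}$ with an exponent that grows with $c$. Crude estimates are fatal here --- in particular one cannot afford to union-bound over the $\Theta(n)$ possible lengths of the walk before it closes into a cycle, which is precisely why $E$ is defined to inspect only an $O_c(\sqrt n)$-length prefix --- and one needs the per-round error contraction of the iterative restriction to be strong enough that $O_c(\log n)$ rounds already suffice, all while keeping the degradation of the hidden constants with $c$ under control. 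The complementary fact that no hash family can beat $n^{-c/2}$ (so Definition~\ref{def:connecting} is not vacuous when $m\ge n^2$) is handled separately in Appendix~\ref{sec:connecting-ub}.
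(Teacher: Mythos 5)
Your Step 1 is fine: for a truly random $h$, viewing $f := h\circ a$ as a uniformly random function on $[n]$ and counting length-$\Theta_c(\sqrt n)$ simple forward walks that hit all $c$ targets gives $\Pr[E]\ge\Omega_c(n^{-c/2})$. But your Step 2 proposes to conclude by bounding a \emph{two-sided} PRG-style discrepancy $|\Pr_{\mathbf{h}\sim\calH}[E]-\Pr_{\mathrm{unif}}[E]| = o_c(n^{-c/2})$ against the test $E$, which makes $\Theta_c(\sqrt n)$ \emph{adaptive} queries to $\mathbf{h}$. This is not what the paper proves, and there is no argument in the paper --- or, as far as I can tell, any known argument --- that the iterative-restriction family fools such a test with that error. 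Each building block $\mathbf{h}_d$ is only $\kappa=\Theta(\log n)$-wise independent, and the paper is explicit (see the ``adversarial attack'' discussion in Section~\ref{sec:proof-overview}) that once a level is queried more than $\kappa$ times adaptively, its responses can be badly biased; the entire machinery of the parallelized walk exists to avoid ever making more than $O(\log n)$ ``observed'' queries to any one level. Appealing to ``the per-round error contraction of the iterative restriction'' as in standard PRG analyses is exactly the move the paper says cannot be made here, because the future queries depend on past responses.

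What the paper actually does is one-sided and does not treat $E$ (or any single adaptive event) as a test to be fooled. Using the walk tree, it writes
\[
\Pr[\,u_1,\dots,u_c\in\Out_{a,\mathbf{h}}(\mathbf{x})\,]\ \ge\ \tfrac12\sum_{\ell^1<\dots<\ell^c}\Pr\bigl[\{\mathbf{T}(\ell^i)\}_i=\{u_i\}_i\bigr],
\]
and for each fixed tuple of tree positions it couples the standard walk with the $S$-extended walk (Lemma~\ref{lemma:extend-lb-standard}), where Lemma~\ref{lemma:extend-is-random} shows the observed entries are exactly $c$-wise uniform (because only $O(c\log n)$ entries of each $\mathbf{h}_d$ are ever touched along $P(S)$, within the $\kappa$-wise independence budget), and then subtracts refutation terms (collisions and long hikes) bounded via Corollaries~\ref{corol:counting} and~\ref{corol:counting2}. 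The positivity of the final expression is arranged by taking $t=\tfrac12\log n - D_c$ so that the main term dominates the subtracted ones. Replacing your Step 2 with this coupling-plus-subtraction argument is not a ``mild reparametrization'' --- it is the whole content of Sections~\ref{sec:cjww-extend} and~\ref{sec:connecting} --- so as written the proposal has a genuine gap at its central step, which you yourself flagged as the crux.
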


Besides being interesting in its own right, we hope Theorem~\ref{theo:connecting-intro} could also raise interest to study more pseudorandom properties of the $1$-out (pseudo-)random digraphs induced by the hash construction.

\section{Proof Overview}\label{sec:proof-overview}

In this section, we discuss the main idea behind our proof. We start with the construction of the pseudorandom hash family $\calH$. 

\medskip\noindent\textbf{The construction.} We present the pseudorandom hash construction below. Our construction slightly simplifies the one in \cite{ChenJWW22}, which is in turn inspired by the \emph{iterative restriction} framework \cite{DBLP:journals/acr/AjtaiW89}.

\begin{itemize}
    \item Let $t = \frac{1}{2}\log n$ and $\kappa = C \log n$, where $C$ is a sufficiently large constant.
    \item Sample $\mathbf{h}_1,\dots, \mathbf{h}_t$. Each $\mathbf{h}_i: [m]\to [n]\cup \{0\}$ is a hash function satisfying the following.
    \begin{itemize}
        \item For every $j\in [m], v\in [n]$, $\Pr_{\mathbf{h}_i}[\mathbf{h}_i(j) = 0] = \frac{1}{2}$ and $\Pr_{\mathbf{h}_i}[\mathbf{h}_i(j) = v] = \frac{1}{2n}$.
        \item $\mathbf{h}_i:[m]\to [n]\cup \{0\}$ is $\kappa$-wise independent\footnote{One way to sample such $\mathbf{h}_i$ is to first sample a $\kappa$-wise independent function $\mathbf{h}_i:[m]\to [2n]$ using the standard method, and then identify $[n+1,2n]$ with $0$.}.
    \end{itemize}
    \item Define the final hash $\mathbf{h}:[m]\to [n]$ as follows. For every $j\in [m]$, we find the smallest $q \le t$ such that $\mathbf{h}_q(j) \ne 0$ and define $\mathbf{h}(j) := \mathbf{h}_q(j)$. If no such $q$ exists, we define $\mathbf{h}(j) := 1$.
\end{itemize}

\medskip\noindent\textbf{Setup.} Recall the statistical test our hash needs to fool: After sampling a random $h$ (from either $\calH$ or other distributions), the BCM algorithm starts from a random $\mathbf{x}_1\sim [n]$ and walks on $G_{a,h}$ by iterating $\mathbf{x}_{i+1} = h(a_{\mathbf{x}_i})$ until reaching a loop. Namely, the algorithm finds $\mathbf{x}_{\mathbf{B}+1} = \mathbf{x}_{j}$ for some $1\le j \le \mathbf{B}$ (Note that $\mathbf{B}$ is a random variable depending on $\mathbf{h}$ and $\mathbf{x}_1$). Suppose $p,q\in [n]^2$ is the colliding pair in the input array (i.e., $a_p = a_q$). To lower bound the success probability of the BCM algorithm, we wish to argue that
\begin{align}
    \Pr[p,q\in \{ \mathbf{x}_{i} \}_{1\le i\le \mathbf{B}} ] \ge \Omega\left( \frac{1}{n} \right). \label{eq:goal-uv}
\end{align}

If the hash function $\mathbf{h}$ is truly random, it is easy to establish \eqref{eq:goal-uv}. On the other extreme, suppose $\mathbf{h}$ is only $K$-wise independent for some small $K\le n^{o(1)}$. In this case, after observing $\mathbf{x}_1,\dots, \mathbf{x}_{K}, \mathbf{x}_{K+1}$, the next vertex $\mathbf{x}_{K+2} = \mathbf{h}(a_{\mathbf{x}_{K+1}})$ may be highly correlated with $(\mathbf{x}_{i})_{1\le i\le K+1}$, because the sequence $(\mathbf{x}_1,\dots, \mathbf{x}_{K+1})$ implies $K$ input-output pairs for the hash (i.e., $\mathbf{h}(a_{\mathbf{x}_i}) = \mathbf{x}_{i+1}$). Since we only assume $\mathbf{h}$ is $K$-wise independent, $\mathbf{x}_{K+2}$ might even be uniquely determined by the length-$(K+1)$ walk history $(\mathbf{x}_1,\dots, \mathbf{x}_{K+1})$.

\subsection{Parallelizing the Sequential Walk}


Let us revisit the issue when we try to use a barely $K$-wise independent hash function $\mathbf{h}$ to run the cycle-finding procedure. Being $K$-wise independent only promises to provide $K$ random elements when we query $K$ entries in $\mathbf{h}$ that are \emph{independent} of $\mathbf{h}$\footnote{More rigorously, the first query is independent of $\mathbf{h}$, and the next $K-1$ queries only depend on the results to previous queries.}. Unfortunately, due to the sequential nature of the random walk, the future queries to $\mathbf{h}$ may be heavily dependent on $\mathbf{h}$ itself. This is the key barrier one has to overcome to prove \eqref{eq:goal-uv}.

The iterative restriction construction offers a nice structure to break the sequential nature of the random walk. In a very high level, given the hierarchical construction $(\mathbf{h}_t,\mathbf{h}_{t-1},\dots, \mathbf{h}_1)$, for each $d\in [t]$, the entries to which we query $\mathbf{h}_{d}$ are mostly determined by $(\mathbf{h}_{t},\dots, \mathbf{h}_{d+1}$), and are nearly independent of $\mathbf{h}_d$ itself. In the following, we formalize this claim by considering a ``communication'' perspective of the random walk.


\medskip\noindent\textbf{An alternative view of the random walk.} We view the $t$ hash functions $\mathbf{h}_1,\dots, \mathbf{h}_t$ as $t$ parties, each holding one level of the hash. We also view the cycle-finding algorithm as one party. Collectively, the $(t+1)$ parties wish to generate a walk sequence for the cycle-finding procedure. By definition, they produce the walk sequence by the following protocol.

\begin{itemize}
    \item The algorithm samples the starting vertex $\mathbf{x}_1\sim [n]$ and sends it to $\mathbf{h}_{t}$.
    \item For each $d\in [t]$, when $\mathbf{h}_d$ receives a vertex $\mathbf{x}$: It first passes $\mathbf{x}$ to $\mathbf{h}_{d-1}$, and asks $\mathbf{h}_{d-1}$ (together with $\mathbf{h}_{<d-1}$) to generate a sequence starting at $\mathbf{x}$. After $\mathbf{h}_{d-1}$ returns a vertex $\mathbf{x'}$ such that $\mathbf{h}_{\le d-1}(a_\mathbf{x'})\equiv 0$. $\mathbf{h}_d$ queries $\mathbf{h}(a_{\mathbf{x'}})$: it either moves a step $\mathbf{x''} = \mathbf{h}_{d}(a_{\mathbf{x'}})$ and passes $\mathbf{x''}$ down to $\mathbf{h}_{d-1}$, or it finds that $\mathbf{h}_{d}(a_{\mathbf{x'}}) = 0$ and returns $\mathbf{x'}$ to the higher level $\mathbf{h}_{d+1}$.
\end{itemize}

One might find the protocol a bit counter-intuitive: by the definition of $h$, to compute $h(\mathbf{x}_1)$, we need to find out the \emph{smallest} $d\in [t]$ such that $h_d(\mathbf{x}_1)\ne 0$. Hence, the most natural choice seems to be sending $\mathbf{x}_1$ to $h_1$ first. However, our top-down protocol is essential in the proof. Intuitively, in a length-$L$ walk, for each $i\in[t]$, the top $i$ levels of hash functions $h_{t},\dots, h_{t-i+1}$ make roughly $\frac{L}{2^i}$ steps of the walk. These $\frac{L}{2^i}$ steps partition the walk sequence into segments, where each segment consists of steps done by $h_1,\dots, h_{i-1}$. One can see that inside the walk sequence there is an implicit hierarchical structure with respect to $h_t,\dots, h_1$ (i.e., the higher level makes fewer steps). The top-down protocol makes the hierarchical structure explicit. Also note that this protocol corresponds to Algorithm~\ref{algo:standard-walk} in the formal proof. 


\medskip\noindent\textbf{Simplifying assumptions.} Directly analyzing the process above seems very difficult. To gain some intuition, we make two unrealistic assumptions for now.


\begin{enumerate}
    \item Each time $\mathbf{h}_{d}$ wants to access an entry $\mathbf{h}_d(a_{\mathbf{x}})$, it has never queried $\mathbf{h}_d(a_{\mathbf{x}})$ before. 
    \item The hash functions $\mathbf{h}_1,\dots, \mathbf{h}_t$ are truly-random (instead of bounded-wise independent).
\end{enumerate}

\medskip\noindent\textbf{Parallelizing the walk.} Given the two assumptions, the process of generating the random walk is equivalent to the following ``parallelized'' process (The equivalence might be hard to see in the first time of reading. See the ``Digest'' paragraph below for an explanation).

\begin{itemize}
    \item The algorithm samples $\mathbf{x}_1$ and passes it to $\mathbf{h}_t$. It asks $\mathbf{h}_t$ (together with $\mathbf{h}_{<t}$) to generate a walk starting at $\mathbf{x}_1$.
    \item $\mathbf{h}_t$, upon receiving $\mathbf{x}_1$, samples $q\sim \mathrm{Geom}(1/2)$\footnote{Recall $\mathrm{Geom}(1/2)$ is the geometric distribution: $\Pr[\mathrm{Geom}(1/2) = k] = 2^{-k}$ for every $k\ge 1$.} and $q-1$ vertices $\mathbf{x}_2,\dots, \mathbf{x}_{q} \sim [n]$. It then passes $\mathbf{x}_1,\mathbf{x}_2,\dots, \mathbf{x}_q$ down to $\mathbf{h}_{t-1}$. Intuitively, $\mathbf{h}_{t}$ asks $\mathbf{h}_{\le t-1}$ to sample $q$ random walks, starting at $\mathbf{x}_1,\dots, \mathbf{x}_q$.
    \item In the decreasing order of $d=(t-1),\dots, 1$: $\mathbf{h}_{d}$ receives $c_d$ vertices from $\mathbf{h}_{d+1}$ ($c_d$ is a random variable depending on $\mathbf{h}_{d+1},\dots,\mathbf{h}_{t}$ and $\mathbf{x}_1$). Then, $\mathbf{h}_d$ re-labels these vertices as $\mathbf{x}_1,\dots, \mathbf{x}_{c_d}$. For each $i\in [c_d]$, it samples $q_i\sim \mathrm{Geom}(1/2)$, and $q_i - 1$ vertices $\mathbf{x}_{i,2},\dots, \mathbf{x}_{i,q_i}$. It also sets $\mathbf{x}_{i,1} := \mathbf{x}_i$. After that, it sends the vertices $(\mathbf{x}_{i,j})_{i\in [c_d], j\in [q_i]}$ to $\mathbf{h}_{d-1}$, asking $\mathbf{h}_{\le d-1}$ to generate $\left|\sum_{i}q_i\right|$ walk sequences with the given starting points.
    
    In particular, when $d=1$, $\mathbf{h}_1$ just outputs these vertices (there is no $\mathbf{h}_0$). They constitute the final walk sequence. 
\end{itemize}

\medskip\noindent\textbf{Digest.} Under the two simplifying assumptions, we claim that the walk sequence generated by the parallelized process is identically distributed as the original walk. 
To see this, note that in the original sequence, when $\mathbf{h}_t$ passes $\mathbf{x}_1$ down to $\mathbf{h}_{t-1}$, it needs to wait for $\mathbf{h}_{\le t-1}$ to do the walk and send back the vertex $\mathbf{x'}$ such that $\mathbf{h}_{\le t-1}(a_{\mathbf{x'}}) \equiv 0$\footnote{Given Assumption $1$ that we are always visiting new entries, and Assumption $2$ that each new entry of $\mathbf{h}_{t-1}$ is $0$ with probability $\frac{1}{2}$, such $\mathbf{x'}$ exists with probability $1$.}. 
However, under Assumption~$1$, we can tell for sure that $\mathbf{h}_t(a_{\mathbf{x'}})$ must have not been reached before, 
and under Assumption~$2$, we know that $\mathbf{h}_t(a_{\mathbf{x'}})$ is a random element from $[n]\cup \{0\}$ with $\Pr[\mathbf{h}_t(a_{\mathbf{x'}})=0] = 1/2$. 

Therefore, without knowing exactly what $\mathbf{x}'$ is, we can tell that $\mathbf{h}_{t}$ sends the second query to $\mathbf{h}_{t-1}$ with probability $\frac{1}{2}$. 
When this does happen, it sends a uniformly random vertex from $[n]$. Moreover, the next time $\mathbf{h}_{t}$ receives a vertex $\mathbf{x}''$ from $\mathbf{h}_{t-1}$, the same argument applies: with probability $\frac{1}{2}$, $\mathbf{h}_{t}$ finds that $\mathbf{h}_{t}(a_{\mathbf{x}''})\ne 0$, walks a step, and sends a new vertex down to $\mathbf{h}_{t-1}$. Otherwise it ends this walk sequence.
In general, independently of $\mathbf{h}_1,\dots, \mathbf{h}_{t-1}$, the number of vertices that $\mathbf{h}_{t}$ passes down to $\mathbf{h}_{t-1}$ obeys $q\sim \mathrm{Geom}(1/2)$, 
and each vertex is uniformly random in $[n]$. 
The same argument also applies to lower level hashes $\mathbf{h}_{t-1},\dots, \mathbf{h}_1$.

We call this new process a parallelized walk, because (under two assumptions) this process breaks the ``sequential'' natural of the random walk. Intuitively, for each $d\in [t]$, $\mathbf{h}_{d}$ receives (roughly) $c_d\approx 2^{t-d}$ starting vertices from $\mathbf{h}_{d+1}$ and is asked to generate $c_d$ sequences. Knowing all these starting vertices, $\mathbf{h}_d$ can process these $c_d$ requests ``in parallel''. After finishing its job, $\mathbf{h}_{d}$ will then pass $c_{d-1}\approx 2^{t-d+1}$ starting vertices to the next level in a batch. One can visualize the parallelized walk as a hierarchical structure with $t$ levels $(\mathbf{h}_t,\dots, \mathbf{h}_1)$ and roughly $2^t = \sqrt{n}$ branches (see Figure~\ref{fig:tree-example} for an illustration). In the following, we denote this structure as the walk tree (see Section~\ref{sec:cjww} for its formal definition).

Next, we will explain how to remove the two simplifying assumptions and prove the desired probability lower bound \eqref{eq:goal-uv}. 

\medskip\noindent\textbf{Lower-bounding \eqref{eq:goal-uv} and removing Assumption $2$.} We first explain how to remove Assumption $2$ and sketch our strategy for proving \eqref{eq:goal-uv}. Having parallelized the walk, this is indeed straightforward. Since we only want to know if a given pair $(p,q)$ have been reached in the walk, we can enumerate two ``branches'' in the walk tree, and observe if they hit $p, q$. Since we have parallelized the walk sequence, each vertex in the walk tree only depends on its ancestors. Recall $\kappa = O(\log n)$ and $\Pr[\mathrm{Geom}(1/2) > \kappa/10] < \frac{1}{\mathrm{poly}(n)}$. With high probability, each $\mathbf{h}_{d}$ makes no more than $\frac{\kappa}{10}$ steps in a branch. Hence, using $\kappa$-wise independent hash functions suffices for fooling the observation of two branches.


Also note that by setting $\kappa$ larger by a constant factor, the same argument holds even if we observe a constant number of branches in the walk tree. Namely, if we only observe $O(1)$ branches in the walk tree, we cannot distinguish between the case that $\mathbf{h}_i$'s are truly random and the case that they are barely $\kappa$-wise independent. In the following, we refer to this property as ``constant-wise independence'' of the walk tree. This observation is important for the $c$-connecting property, as well as the tradeoff result.

Since we have set $t\approx \frac{1}{2}\log n$, there will be roughly $2^{t}$ branches in the walk tree. We have shown that these branches appear to be ``constant-wise'' independent. Therefore, the probability that two of those branches hit $p,q$ is roughly
\[
\frac{2^{t}}{n}\cdot \frac{2^{t}}{n} \approx \frac{1}{n}
\]
as desired.

\medskip\noindent\textbf{Removing Assumption~$1$.} Now we explain the idea for removing Assumption~$1$, which is the most technical part of the proof. To begin with, note that Assumption $1$ is indeed true for a prefix of the walk sequence (namely, before the walk reaches a vertex $\mathbf{x}_{i}$ such that $a_{\mathbf{x}_{i}}=a_{\mathbf{x}_{j}}$ for some $j < i$).

There is an ``obvious'' way to achieve Assumption $1$: during the random walk, if some $\mathbf{h}_{d}$ attempts to query an entry $\mathbf{h}_{d}(a_{\mathbf{x}_{i}})$ for the second (or more) time, it samples a new element from $[n]\cup \{0\}$ to replace $\mathbf{h}_d(a_{\mathbf{x}_{i}})$. In this way, effectively, the hash function is always accessing new entries, and we can lower-bound the probability of hitting both $p$ and $q$ by $\frac{2^{2t}}{n^2} \approx \frac{1}{n}$. However, in this process we also count in some invalid contribution: if the walk visits the same vertex $\mathbf{x}$ twice before hitting $p$ and $q$, then we know the contribution from this walk is ``invalid''. To subtract invalid contribution, we enumerate $i,j,u,v$ such that $i,j<\max(u,v)$, and calculate the probability that $a_{\mathbf{x}_{i}} = a_{\mathbf{x}_j}$ and $(\mathbf{x}_{u},\mathbf{x}_{v})=(p,q)$. In this process, we only observe $4$ branches in the walk tree, by the constant-wise independence of the walk tree, we can (roughly) upper bound the probability by $\frac{1}{n^3}$ (Fixing $a_{\mathbf{x}_i}$, we have $a_{\mathbf{x}_j} = a_{\mathbf{x}_i}$ w.p. $\frac{1}{n}$, and $(\mathbf{x}_{u},\mathbf{x}_{v})=(p,q)$ w.p. $\frac{1}{n^2}$). Summing up $(i,j,u,v)$, the amount of invalid contribution is bounded by $\frac{2^{4t}}{n^3}$. Therefore, $\frac{2^{2t}}{n^2} - \frac{2^{4t}}{n^3}$ would be a valid lower bound for $\Pr[p,q\in \{\mathbf{x}_i\}]$. In the actual proof, we set $t = \frac{1}{2}\log n - 5$. This gives $\frac{2^{2t}}{n^2} - \frac{2^{4t}}{n^3} \ge (2^{-10}-2^{-20})\cdot \frac{1}{n} \ge \Omega(1/n)$, as desired.

Unfortunately, there is a subtle but critical flaw in this argument: the ``resampling'' operation breaks the $\kappa$-wise independence of the hash function! Consider the following $T$-round interaction between a $\kappa$-wise independent hash function $\mathbf{h}_d$ and an adversary $\calA$.

\begin{itemize}
    \item In the $i$-th round, based on the interaction history, $\calA$ chooses and sends $y\in [m]$ to $\mathbf{h}_d$. $\mathbf{h}_d$ sends $\mathbf{h}_d(y)$ back to $\calA$. Meanwhile, $\mathbf{h}_d$ also resets $\mathbf{h}_d(y)$ to a uniformly random element.
\end{itemize}

If $\mathbf{h}_d$ was sampled as a truly-random hash, $\mathbf{h}_d$ would stay truly-random after the interaction. However, if $\mathbf{h}_d$ is from a $\kappa$-wise independent distribution, then $\calA$ might cause $\mathbf{h}_d$ to be highly biased after the interaction. Back to our example, it is not clear how $\mathbf{h}_{d}$ is interacting with $\mathbf{h}_{\le d-1}$. In the worst case, if $\mathbf{h}_{\le d-1}$ interacts with $\mathbf{h}_d$ ``adversarially'' and $\mathbf{h}_d$ chooses to resample for all queries, then $\mathbf{h}_{d}$ will no longer be $\kappa$-wise independent after the first few rounds of communication.

We are ready to introduce our final idea, which is the key to simplify the analysis. Since describing the idea precisely requires quite a bit technical work, we only present the high-level idea here, and refer interested readers to Section~\ref{sec:cjww} for its detail (in particular, see Algorithm~\ref{algo:extended-walk} and Lemma~\ref{lemma:extend-is-random}). Roughly speaking, we show that we do not need to do the ``resampling'' for every query to the hash functions. Recall that to prove the lower bound, we only need to observe two branches in the walk tree. We show that, we only need to do ``resample'' in the ``observed part'' of the walk tree. There are two intuitions for this idea. First, in order to prove that the observed part is random, we do not really care about the remaining part of the walk tree. Therefore, only resampling for the observed part suffices to establish the proof. Second, in the observed part, each hash function is queried for at most $\kappa$ times, the $\kappa$-wise independence suffices to ensure the randomness of query results (in particular, the ``adversarial attack'' issue mentioned above does not exist anymore). 

\subsection{Comparison with CJWW}

Our idea of using iterative restriction construction is directly inspired by \cite{ChenJWW22}. Therefore, it is worthwhile to compare our results with theirs. While our technical analysis shares some similarities with \cite{ChenJWW22}, in order to simplify the \cite{ChenJWW22} analysis to the best possible extent and extend it to our new applications (i.e., the tradeoff result and the $c$-connecting property), we need a host of new ideas, both conceptual ones and technical ones.

\medskip\noindent\textbf{Conceptual idea.} Conceptually, we propose to view the walk sequence as generated by ``communication between $t$ independent hash functions''. While this conceptual idea does not bring any immediate technical consequence, it is nevertheless crucial for obtaining the final proof. The main structure that the iterative restriction framework offers us is that the $t$ building blocks $\mathbf{h}_1,\dots, \mathbf{h}_t$ are independent. It turns out the ``communication'' perspective is a desired way to exploit this structure. Essentially, when we ``parallelize'' the random walk, we are reducing the many-round communication between $\mathbf{h}_i$'s to a one-way, top-down communication from $\mathbf{h}_{t}$ down to $\mathbf{h}_{1}$, which is the key for the final proof. 
    
Moreover, from the communication perspective, it is easy to show that we can use $\kappa$-wise \emph{almost} independent hash functions to generate $\mathbf{h}_1,\dots,\mathbf{h}_d$. This is a claim that seems difficult to obtain through the analysis by \cite{ChenJWW22} (in fact, this answers a question posed in \cite{ChenJWW22}). Although this observation does not reduce the overall seed length\footnote{We need $\Theta(\log n)$ bits to describe an entry of the hash $h:[m]\to [n]$. Therefore, requiring $\mathbf{h}$ to be almost $\kappa$-wise independent reduces to constructing an almost $\Theta(\kappa \log n)$-wise independent binary string, which brings the total seed length to $\Theta(\kappa \log n)$. On the other hand, sampling a perfect $\kappa$-wise independent hash also requires $\Theta(\kappa (\log n+\log m))$ random bits.}, it adds one more evidence suggesting that the communication perspective provides a fairly powerful ``conceptual method'' to analyze pseudorandom objects that consist of several independent building blocks. We hope this idea can help understand more pseudorandom properties of the iterative restriction framework.

The concept of ``communication'' is not new in the pseudorandomness literature (See, e.g., \cite{DBLP:conf/stoc/ImpagliazzoNW94, DBLP:journals/jcss/NisanZ96, DBLP:conf/focs/ImpagliazzoMZ12, DBLP:journals/siamcomp/BravermanRRY14}). However, in all the previous PRG analyses, the messages in communication are always short. For example, to prove the extractor-based PRG for read-once branching program, the standard method works by splitting the program into two halves, and arguing that the ``message'' passed from the first half to the second is so small that we can use an extractor to ``refresh'' the random seed (see \cite{DBLP:journals/siamcomp/BravermanRRY14}). There, it is easy to see that the communication consists of only one short message. In contrast, the cycle-finding procedure considered in our work involves interactive communications among the $d$ levels of hash functions. As a result, the number of messages exchanged may be unbounded. It is highly non-trivial to reduce the analysis to a one-way, top-down communication protocol.

\medskip\noindent\textbf{Technical idea.} Just like \cite{ChenJWW22}, we have to deal with a lot of technicalities to implement all these ideas and intuitions. Along the way, we simplify the proof by \cite{ChenJWW22} from various aspects. Among these simplifications, the most crucial one is the introduction of a new ``extended random walk'' (see Algorithm~\ref{algo:extended-walk}), which is the instantiation of the ``resample-as-you-observe'' idea mentioned in the last section. It saves a lot of case-by-case analysis as was needed in the previous proof, and lends itself well to extensions: having established all necessary machinery in Section~\ref{sec:cjww}, the $c$-connecting property admits a fairly straightforward proof. For the tradeoff result, only one more idea is required: see Section~\ref{sec:tech-extension}.

Finally, we mention that in the formal proof (Section~\ref{sec:cjww}), we borrow some terminologies (walk tree, indexing) from \cite{ChenJWW22} to stay aligned with previous work.





\subsection{Extensions}\label{sec:tech-extension}

From the communication perspective, we have shown that the parallelized walk sequence appears random if we only observe a constant number of vertices in the walk (i.e., the ``constant-wise independent'' property of the walk tree). This allows us to prove the $c$-connecting property (Theorem~\ref{theo:connecting-intro}) easily. However, to obtain the tradeoff result, new ingredients are required.


\medskip\noindent\textbf{Review of the BCM tradeoff.} Let us first review the tradeoff given by the BCM algorithm \cite{DBLP:conf/focs/BeameCM13}. Let $k\le n$. Suppose we have access to a random hash $h: [m]\to [n]$ and $\Otilde(k)$ bits of working memory. In this case, the BCM algorithm randomly selects $k$ starting vertices and produces $k$ walks. We assume $a_p = a_q$ is the only colliding pair in the input array. \cite{DBLP:conf/focs/BeameCM13} shows that, if there are two walks hitting $p$ and $q$, the algorithm can find the pair $(p,q)$ with $\Otilde(k)$ bits of working memory. \cite{DBLP:conf/focs/BeameCM13} further shows that in a single trial, the algorithm succeeds in hitting both $p$ and $q$ with probability $\Omega\left(\frac{k}{n} \right)$, while the running time of one trial is bounded by $\Otilde(\sqrt{kn})$. Running $\Otilde(n/k)$ trials independently succeeds in finding the pair with high probability. The total running time is $\Otilde(n^{3/2}/\sqrt{k})$. 

Now, to fool the tradeoff algorithm with the pseudorandom hash family, it seems we need at least $k$-wise independence to simultaneously monitor the $k$ walks. While we can store the seed for $k$-wise independent hash functions (we have $\Otilde(k)$ working space now), we cannot afford the $\Omega(k)$ time overload to evaluate a $k$-wise independent hash function.

\subsubsection*{A ``local'' analysis of the BCM tradeoff}

We show $k$-wise independence is \emph{not} necessary. In particular, we observe that there is a ``local'' proof of the BCM tradeoff. Roughly speaking, by ``local'' we mean that to lower bound the success probability in one trial, one only needs to consider some simple probabilities that only involve a constant number of vertices in the $k$ walks. Consequently, the hash family constructed from $O(\log n)$-wise independent primitives yields the same $\Omega(k/n)$ success probability, allowing us to remove the random oracle assumption.


To illustrate the idea, we sketch the ``localized proof'' for truly random hash $\mathbf{h}:[m]\to [n]$ below. Suppose $\mathbf{x}^1,\dots, \mathbf{x}^k$ are the $k$ random starting vertices. Let $c > 0$ be a sufficiently small constant. For each starting vertex $\mathbf{x}^i$, consider the first $L = c\sqrt{n/k}$ steps of move in the $i$-th walk. Namely, for each $i\in [k]$, we consider 
\[
\mathbf{x}^i_1 =\mathbf{x}^i, ~~~ \mathbf{x}^i_2 = \mathbf{h}(a_{\mathbf{x}^i_1}),~~~ \dots,~~~ \mathbf{x}^i_L = \mathbf{h}(a_{\mathbf{x}^i_{L-1}}).
\]
Note that it takes $O(\sqrt{nk})$ time to go through these sequences (this corresponds to the running time of the BCM algorithm). Our goal is to prove $\Pr[p, q\in \{ \mathbf{x}^{i}_j \}_{i\in [k], j\in [L]}] \ge \Omega(k/n)$.

\medskip\noindent\textbf{A coupling-based proof.} For the purpose of analysis, we imagine $k$ sequences $(\mathbf{y}^i_1,\dots, \mathbf{y}^i_L), \forall i\in [k]$ that are jointly distributed with $\mathbf{x}^i_j$ and are defined as follows. For each $i\in [k]$, we set $\mathbf{y}^i_1 = \mathbf{x}^i_1$. For each $j=1,\dots, L-1$, if $a_{\mathbf{x}^i_j} = a_{\mathbf{x}^{i'}_{j'}}$ for some $(i',j') < (i, j)$ (in the lexicographical order), we set $\mathbf{y}^i_{j+1},\dots, \mathbf{y}^i_{L}$ as uniform and independent elements from $[n]$ and complete the construction for the $i$-th sequence. Otherwise let $\mathbf{y}^i_{j+1} = \mathbf{x}^i_{j+1}$.

If $\mathbf{h}$ is a truly random hash, the list $(\mathbf{y}^{i}_j )_{i\in [k], j\in [L]}$ contains $kL$ independent and uniform elements. This is because every $\mathbf{y}^i_j$ is obtained by either querying a new entry in $\mathbf{h}$ or sampling a uniformly random element. For every pair $((i_1,j_1), (i_2,j_2))\in ([k]\times [L])^2$, let $\calE^1(i_1,j_1,i_2,j_2)$ denote the event that all of following hold. 
\begin{enumerate}
    \item First, $\mathbf{y}^{i_1}_{j_1} = p$ and $\mathbf{y}^{i_2}_{j_2} = q$.
    \item For every $(i_1, j_3), (i_4, j_4)$ such that $j_3 < j_1$ and $(i_4, j_4) < (i_1, j_3)$, we have $a_{\mathbf{y}^{i_1}_{j_3}} \ne a_{\mathbf{y}^{i_4}_{j_4}}$.
    \item For every $(i_2, j_5), (i_6, j_6)$ such that $j_5 < j_2$ and $(i_6, j_6) < (i_2, j_5)$, we have $a_{\mathbf{y}^{i_2}_{j_5}} \ne a_{\mathbf{y}^{i_6}_{j_6}}$.
    \item For all $(i_7, j_7) < (i_1, j_1)$, it holds $\mathbf{y}^{i_7}_{j_7}\ne p$. For all $(i_8, j_8) < (i_2, j_2)$, it holds $\mathbf{y}^{i_8}_{j_8}\ne q$.
\end{enumerate}
Here, Conditions 2 and 3 ensure that $\mathbf{y}^{i_1}_{j_1} = \mathbf{x}^{i_1}_{j_1}$ and $\mathbf{y}^{i_2}_{j_2} = \mathbf{x}^{i_2}_{j_2}$. Condition $4$ ensures the events $\{\calE^1(i_1,j_1,i_2,j_2)\}_{i_1,j_1,i_2,j_2}$ are mutually disjoint. Moreover, we observe that each $\calE^{1}(i_1,j_1,i_2,j_2)$ implies $\mathbbm{1}\{p, q\in \{ \mathbf{x}^{i}_j \}_{i\in [k], j\in [L]}\}$. Hence, we may conclude that
\[
\Pr[p, q\in \{ \mathbf{x}^{i}_j \}_{i\in [k], j\in [L]}] \ge \sum_{(i_1,j_1),(i_2,j_2)} \Pr[\calE^1(i_1,j_1,i_2,j_2)].
\]
To lower bound $\Pr[\calE^{1}(i_1,j_1,i_2,j_2)]$, we further decompose it into even simpler events by applying a union bound:
\[
\begin{aligned}
 \Pr[\calE^1(i_1,j_1,i_2,j_2)] 
& \ge \Pr[\mathbf{y}^{i_1}_{j_1} = p \land \mathbf{y}^{i_2}_{j_2} = q] - \\
&~~~~ \sum_{j_3, (i_4, j_4)}  \Pr[\mathbf{y}^{i_1}_{j_1} = p \land \mathbf{y}^{i_2}_{j_2} = q \land a_{\mathbf{y}^{i_1}_{j_3}} = a_{\mathbf{y}^{i_4}_{j_4}}] - \\
&~~~~ \sum_{j_5, (i_6, j_6)}  \Pr[\mathbf{y}^{i_1}_{j_1} = p \land \mathbf{y}^{i_2}_{j_2} = q \land a_{\mathbf{y}^{i_2}_{j_5}} = a_{\mathbf{y}^{i_6}_{j_6}}] - \\
&~~~~ \sum_{(i_7,j_7)}  \Pr[\mathbf{y}^{i_1}_{j_1} = p \land \mathbf{y}^{i_2}_{j_2} = q \land \mathbf{y}^{i_7}_{j_7} = p ] - \\
&~~~~ \sum_{(i_8,j_8)}  \Pr[\mathbf{y}^{i_1}_{j_1} = p \land \mathbf{y}^{i_2}_{j_2} = q \land \mathbf{y}^{i_8}_{j_8} = q ] \\
& \ge \frac{1}{n^2} - O\left(\frac{k L^2}{n^3}\right) - O\left( \frac{kL}{n^3} \right) \\
& \ge \frac{1}{n^2} - O\left(\frac{c^2}{n^2}\right).
\end{aligned}
\]
In the summation above, the enumeration of indices ($j_3, i_4, j_4,$ etc.) follows the rule specified by Condition $2$-$4$, which we omit for brevity. 

Choosing $c > 0$ to be small enough, we can lower bound $\Pr[\calE^1(i_1,j_1,i_2,j_2)]$ by $\Omega(1/n^2)$. Finally, we may take a summation over all $(i_1,j_1), (i_2, j_2)$ to get:
\[
\Pr\left[p, q\in \{ \mathbf{x}^{i}_j \}_{i\in [k], j\in [L]} \right] \ge \sum_{(i_1,j_1),(i_2,j_2)} \Pr[\calE^1(i_1,j_1,i_2,j_2)] \ge \Omega\left( \frac{k^2L^2}{n^2} \right) \ge \Omega\left( \frac{k}{n} \right).
\]

\medskip\noindent\textbf{Conclusion.} Note that the proof above is highly local, in the sense that we reduce the task of lower-bounding $\Pr[p, q\in \{\mathbf{x}^i_j\}]$ to analyzing a collection of simpler events, each involving only a constant number of vertices. To show the same lower bound for the pseudorandom hash, we use the same high-level proof strategy. But we will work with the walk tree and use our Lemma~\ref{lemma:extend-is-random}.

\subsection{Future Directions}


Our work raises several directions for further research. We highlight two of them below. We also refer interested readers to \cite{ChenJWW22} for discussions about more related work.

\medskip\noindent\textbf{The power of iterative restriction.} The iterative restriction approach was first developed by Ajtai and Wigderson in their seminal work \cite{DBLP:journals/acr/AjtaiW89}, where they gave the first non-trivial pseudorandom generator (PRG) for constant-depth Boolean circuits (a.k.a. $\AC^0$ circuits). In recent years, people have successfully applied this framework to give PRGs for various computational models \cite{DBLP:conf/focs/GopalanMRTV12, DBLP:conf/coco/TrevisanX13, DBLP:journals/siamcomp/HaramatyLV18, DBLP:journals/toc/LeeV20, DBLP:conf/focs/ForbesK18, DBLP:conf/stoc/MekaRT19-width3}. However, as we have mentioned, in all these PRG results, the target circuit/program always reads its input in a pre-defined pattern. 

Both \cite{ChenJWW22} and our new result suggest that the iterative restriction is more versatile than we thought: they can fool some highly adaptive tests, in which the future query to the hash function heavily depends on previous responses. Also, the analysis is drastically different from the common paradigm in the PRG analysis (i.e., the ``simplify-under-restrictions'' lemma and hybrid argument combo). It would be interesting to see if there is a deeper connection between the new results and previous PRG results. In particular, it is known that the iterative restriction framework can fool fixed-order read-once branching program \cite{DBLP:conf/focs/ForbesK18} (known as ROBP in literature). Note that the cycle-finding procedure can by captured by a special class of branching programs, where the program reads variables in an adaptive fashion, with the promise that every ``acceptance path'' is read-once. Can we formulate a computation model that (1) captures ROBP and the cycle-finding procedure as special cases, and (2) can be fooled by the iterative restriction construction?




\medskip\noindent\textbf{Reduce the seed length further.} Can we reduce the seed length for the hash family further? One natural choice for such improvement would be reducing the $O(\log n)$-level pseudorandom hash family to constant levels. Even more ambitiously, what if we use two hash functions $\mathbf{h}_1:[m]\to [n]\cup \{0\}$ and $\mathbf{h}_2:[m]\to [n]$ and construct $\mathbf{h} = \mathbf{h}_1 + \mathbbm{1}[\mathbf{h}_1 = 0]\cdot \mathbf{h}_2$. Intuitively, we let the walk alternate between $\mathbf{h}_1$ and $\mathbf{h}_2$, with the hope that each $\mathbf{h}_i$ can ``mix'' the bias introduced by the other hash, so that the overall walk sequence appears to be (pseudo)random.

\subsection{Paper Organization}

The rest of the paper is organized as follows. We introduce necessary background knowledge in Section~\ref{sec:preliminaries}. In Section~\ref{sec:cjww}, we develop core tools and lemmas for our analysis. We also prove Theorem~\ref{theo:improved-seed} in the same section. In Section~\ref{sec:tradeoff}, we prove the tradeoff results (i.e., Theorems~\ref{theo:element-distinctness-algo} and \ref{theo:set-intersection-algo}). Finally, we prove Theorem~\ref{theo:connecting-intro} in Section~\ref{sec:connecting}.


\section{Preliminaries}\label{sec:preliminaries}

We assume word RAM model in this paper. The space complexity of an algorithm is defined as the size of its working memory. Besides the working memory, the algorithm also has read-only random access to the input and one-way access to an infinitely long tape of random bits, which do not count towards the space complexity.

$[n]$ denotes $\{1,\dots, n\}$. For a sequence $a = (a_1,\dots, a_n)\in [m]^n$, define its second frequency moment as $F_2(a) = \sum_{1\le i,j\le n} \mathbbm{1}[a_i = a_j]$. Also define $F_{\infty}(a) = \max_{y\in [m]} \{ | x\in [n] : a_x = y | \}$ as the number of occurrences of the most frequent element in $a$. Note that $F_{\infty}(a)\le \sqrt{F_2(a)}$. For \textsc{Element Distinctness} and \textsc{Set Intersection}, we assume the input array contains integers from $[m]$, where $m\le \poly(n)$.

We always use boldface letters (e.g., $\textbf{X}$) to denote random variables. For a random variable $\mathbf{X}$, we use $\supp(\mathbf{X})$ to denote its support. We use $(p_i)_{i\in [L]}$ to denote a list $(p_1,\dots, p_L)$. When the size of the list $L$ is clear from the context, we may omit the outer subscript and simply write $(p_i)$ or $(p_i)_i$. For a statement $E$, we use $\mathbbm{1}\{E\}$ denote the indicator function of $E$, where $\mathbbm{1}\{E\}$ equals $1$ if and only if $E$ is true, and equals $0$ otherwise.

Recall the definition of bounded independence.

\begin{definition}
Let $n, m$ be two integers. Let $\calH$ by a distribution over hash functions mapping $\bits^n$ into $\bits^m$. We say that $\calH$ is $k$-wise independent, if for any $k$ input-output pairs $(x_1,y_1),\dots, (x_k, y_k) \in \bits^{n}\times \bits^m$ where $x_1,\dots, x_t$ are distinct, 
it holds that
\[
\Pr_{h\sim \calH} [ \forall i\in [k], h(x_i) = y_i ]  = 2^{-km}.
\]
\end{definition}

We have the following standard construction of bounded-independence hash functions (check e.g., \cite[Chapter~3.5.5]{DBLP:journals/fttcs/Vadhan12-pseudorandomness}).

\begin{lemma}\label{lemma:k-wise}
For every $n, m, k\ge 1$, there is an explicit $k$-wise independent hash functions $\calH$ that maps $\bits^n$ into $\bits^m$. One can sample a function in $\calH$ using $O(k(n+m))$ random bits. Given a seed $s$, let $h_s\in \mathcal{H}$ be the function described by $s$. One can evaluate $h_s(x)$ in $\poly(n,m,k)$ time.
\end{lemma}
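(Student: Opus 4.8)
The plan is to use the classical polynomial-evaluation construction over a binary field. Set $\ell := \max(n,m)$, which we may take to be at least $\log_2 k$ (otherwise the $k$-wise independence hypothesis can only ever be applied to fewer than $k$ distinct inputs, so the condition is essentially that of full independence, and one argues as below with $\ell$ replaced by the domain size). Fix, once and for all, an irreducible polynomial of degree $\ell$ over $\mathbb{F}_2$, and use it to identify $\bits^\ell$ with the field $\mathbb{F}_{2^\ell}$. Embed the domain $\bits^n$ into $\mathbb{F}_{2^\ell}$ by zero-padding an $n$-bit string to length $\ell$; this map is injective. A seed $s$ then consists of $k$ field elements $c_0,c_1,\dots,c_{k-1}\in\mathbb{F}_{2^\ell}$, sampled uniformly and independently, which costs $k\ell \le k(n+m)$ random bits. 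The hash function is $h_s(x) := \big(c_{k-1}\bar x^{\,k-1} + \cdots + c_1\bar x + c_0\big)$ truncated to its first $m$ bits, where $\bar x\in\mathbb{F}_{2^\ell}$ is the embedding of $x$ and the arithmetic is carried out in $\mathbb{F}_{2^\ell}$.

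Next I would verify the $k$-wise independence guarantee. Fix distinct inputs $x_1,\dots,x_k\in\bits^n$; their embeddings $\bar x_1,\dots,\bar x_k$ are distinct in $\mathbb{F}_{2^\ell}$. For any targets $z_1,\dots,z_k\in\mathbb{F}_{2^\ell}$ there is exactly one polynomial $P$ of degree at most $k-1$ with $P(\bar x_i)=z_i$ for all $i$ (Lagrange interpolation; equivalently, the $k\times k$ Vandermonde matrix in the $\bar x_i$'s is invertible over $\mathbb{F}_{2^\ell}$). Hence the evaluation map $(c_0,\dots,c_{k-1})\mapsto(P(\bar x_1),\dots,P(\bar x_k))$ is a bijection of $(\mathbb{F}_{2^\ell})^k$, so for a uniformly random seed the tuple $(P(\bar x_1),\dots,P(\bar x_k))$ is uniform on $(\mathbb{F}_{2^\ell})^k$; in particular the $k$ values are fully independent and each is uniform. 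Projecting each value onto its first $m$ coordinates — here we use $m\le\ell$ — preserves uniformity and independence, since a coordinate projection of a uniform element of $\bits^\ell$ is a uniform element of $\bits^m$, and projections of independent variables remain independent. This gives $\Pr_{h\sim\calH}[\forall i\in[k],\ h(x_i)=y_i] = 2^{-km}$, as required.

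Finally I would address explicitness and running time. The only preprocessing is obtaining a degree-$\ell$ irreducible polynomial over $\mathbb{F}_2$, which can be done deterministically in $\poly(\ell)$ time; once fixed, addition and multiplication in $\mathbb{F}_{2^\ell}$ cost $\poly(\ell)$ per operation. Given a seed $s$, evaluating $h_s(x)$ reduces to computing $P(\bar x)$ by Horner's rule, i.e.\ $O(k)$ field operations, and then truncating — a total of $\poly(n,m,k)$ time. The step I expect to need the most care is not conceptually deep: it is the bookkeeping of the reduction between the input length $n$, the output length $m$, and the field degree $\ell$ (including the degenerate regimes of $k$), together with the statement that the field representation is explicitly and deterministically constructible in polynomial time; the rest is routine linear algebra over $\mathbb{F}_{2^\ell}$.
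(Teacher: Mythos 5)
The paper does not prove this lemma; it cites it to Vadhan's survey \cite[Chapter~3.5.5]{DBLP:journals/fttcs/Vadhan12-pseudorandomness}, which uses exactly the polynomial-evaluation construction over $\mathbb{F}_{2^\ell}$ that you reproduce. Your argument is correct, including the handling of the seed length via $\ell=\max(n,m)\le n+m$, the truncation step, and the edge cases, so it matches the standard proof the paper relies on.
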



\section{The Pseudorandom Hash Family}\label{sec:cjww}

In this section, we give a significantly simpler analysis of the Chen-Jin-Williams-Wu result. Our analysis yields an improved seed length $O(\log^3 n)$. This section also lays the foundation for the tradeoff results (Section~\ref{sec:tradeoff}) as well as the $c$-connecting property (Section~\ref{sec:connecting}). Tools and lemmas developed in this section can be used to deduce those extensions easily.

This section is organized as follows. In Section~\ref{sec:setup}, we show the construction of the pseudorandom hash family, and state the pseudorandom property we need from it (i.e., Lemmas~\ref{lemma:single-ub} and \ref{lemma:double-lb}). Assuming them, we prove Theorem~\ref{theo:improved-seed} quickly. Towards proving Lemmas~\ref{lemma:single-ub} and \ref{lemma:double-lb}, we develop some technical tools in Section~\ref{sec:cjww-standard} and \ref{sec:cjww-extend}. We prove two lemmas in Section~\ref{sec:cjww-wrapup}.

\subsection{Setup and Proof of Theorem~\ref{theo:improved-seed}}\label{sec:setup}

\paragraph*{The hash construction.} We formally state the construction of the pseudorandom hash family $\calH^{n, m, t, \kappa}$, which is parameterized by four integers $n, m, t, \kappa\in \mathbb{N}^+$.

\begin{itemize}
    \item Sample $\mathbf{h}_1,\dots, \mathbf{h}_t$. For every $i \in [t]$, $\mathbf{h}_i: [m]\to [n]\cup \{0\}$ is a hash function satisfying the following.
    \begin{itemize}
        \item For every $j\in [m], v\in [n]$, $\Pr_{\mathbf{h}_i}[\mathbf{h}_i(j) = 0] = \frac{1}{2}$ and $\Pr_{\mathbf{h}_i}[\mathbf{h}_i(j) = v] = \frac{1}{2n}$.
        \item $\mathbf{h}_i$ is $\kappa$-wise independent.
    \end{itemize}
    \item Define the final hash $\mathbf{h}:[m]\to [n] \cup \{-1\}$ as follows. For every $j\in [m]$, we find the smallest $q \le t$ such that $\mathbf{h}_q(j) \ne 0$ and define $\mathbf{h}(j) := \mathbf{h}_q(j)$. If no such $q$ exists, we define $\mathbf{h}(j) := -1$.
\end{itemize}

For technical reasons, we need the codomain of $h\in \supp(\calH^{n,m,t,\kappa})$ to be $[n]\cup \{-1\}$. By Lemma~\ref{lemma:k-wise}, it requires $O(t \kappa \log(n+m))$ bits to sample $\mathbf{h}\sim \calH^{n,m,t,\kappa}$.

\paragraph*{The digraph.} Let $a_1,\dots, a_n$ be an integer array where for every $x\in [n]$, $a_x\in [m]$. For every function $h:[m]\to [n]$, define from $a$ and $h$ a digraph $G_{a,h}$. The vertex set for $G_{a, h}$ is $[n]$. For each $x\in [n]$, if $\mathbf{h}(a_x) \ne -1$, we add a directed edge $(x, \mathbf{h}(a_x))$. Let $\Out_{a,h}(x)$ denote the set of vertices reachable from $x$ in $G_{a,h}$. For a set $A \subseteq [n]$ of vertices, define $\Out_{a,h}(A)$ as the set of vertices reachable from at least one vertex in $A$ on $G_{a,h}$. Define from $G_{a,h}$ a mapping $f_{a,h}:[n]\to [n]\cup \{0\}$ such that $f_{a,h}(x) = y$ if $(x,y)\in G_{a,h}$, and $f_{a,h}(x) = 0$ if no such $y$ exists. Equivalently, $f_{a,h}(x) = h(a_x)$.

\paragraph*{The BCM Algorithm.} Next, recall the cycle-finding algorithm by Beame \emph{et al.}~

\begin{lemma}[\cite{DBLP:conf/focs/BeameCM13}, Theorem 2.1]\label{lemma:bcm-cycle}
Assuming oracle access to $f_{a,h}:[n]\to [n]\cup \{\star\}$, there is a deterministic algorithm $\COLLIDE(A)$ that takes a set $A \subseteq [n]$ of vertices and finds all the pairs $(y, \{ u : u\in \Out_{a,h}(A), a_u = y\} )$. The algorithm uses $O(|A|\log n)$ space and $\Otilde(|\Out_{a,h}(A)|)$ time.
\end{lemma}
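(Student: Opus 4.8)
The plan is to realise $\COLLIDE(A)$ as a space-bounded exploration of the reachable set $R := \Out_{a,h}(A)$ in the functional graph $G_{a,h}$, with cycle detection as the only non-routine primitive. Everything rests on the structure of $R$. As every vertex of $G_{a,h}$ has out-degree at most $1$, the subgraph induced on $R$ is a disjoint union of at most $|A|$ components, each a collection of in-trees feeding into a single cycle or a single sink. First I would record two consequences. (i) For reachable $u \ne u'$, the equality $a_u = a_{u'}$ forces $f_{a,h}(u) = f_{a,h}(u')$, so every $a$-value class of size at least $2$ inside $R$ sits in the in-neighbourhood $f_{a,h}^{-1}(z) \cap R$ of some \emph{confluence vertex} $z$, meaning a vertex with in-degree at least $2$ inside $R$. (ii) Since $R$ is exactly the union of the $|A|$ maximal walks started from $A$, and a single such walk has at most two vertices mapping to any fixed $z$, we get $|f_{a,h}^{-1}(z) \cap R| \le 2|A|$ for every $z$; and a ``one new confluence per walk'' charge (a fresh walk either produces its own cycle-entry vertex or first merges into already-explored territory at a single vertex) gives at most $|A|$ confluence vertices in all. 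Hence the desired output splits into the singleton classes, which can be streamed out one reachable vertex at a time, and, for each of the $\le |A|$ confluence vertices $z$, the partition of its $\le 2|A|$ reachable in-neighbours by $a$-value.

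Next I would set up the exploration. The primitive is Floyd's cycle-finding algorithm~\cite{DBLP:books/aw/Knuth81, Pollard1975AMC}: from any single start $x$ it returns, in $O(|\Out_{a,h}(x)|)$ time and $O(\log n)$ space, the tail length $\mu$ and cycle length $\lambda$ of the walk from $x$ (or detects that this walk terminates), hence a canonical enumeration of $\Out_{a,h}(x)$ together with its cycle-entry vertex and the at most two walk-predecessors of that vertex. To fold the $|A|$ individual walks into a single pass over $R$ without re-walking vertices, I would process the starting points of $A$ in order while holding, in $O(|A|\log n)$ memory, the confluence vertices found so far plus one representative of each already-found cycle as ``tripwires''; a new walk is halted the moment it reaches a tripwire or is detected, by a nested cycle search, to have re-entered explored territory, at which point its entry vertex is recorded as a confluence vertex, and each traversal step is charged to a distinct newly explored edge of $R$, giving $\Otilde(|R|)$ total time. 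A second pass of the same form, now knowing the confluence set, collects for each confluence vertex its $\le 2|A|$ in-neighbours, which are then bucketed by $a$-value and emitted.

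The hard part will be carrying out this final emission while simultaneously meeting both stated bounds: with only $O(|A|\log n)$ working memory one cannot store $R$, and when $|A|$ is large one cannot even hold all $\le |A|$ in-neighbour lists at once, so the exploration has to interleave the discovery of confluence vertices with the emission of the corresponding classes --- matching each newly revealed in-neighbour against the in-neighbours of its confluence vertex produced so far, and releasing a confluence vertex's list as soon as no further in-edges to it can appear --- exploiting that any revisited vertex has a fully determined, already-accounted-for future. The structural facts (i)--(ii) and Floyd's algorithm are routine; making the exploration at once exhaustive, near-linear-time, and within the tight space bound is where the real work lies, and this is precisely what~\cite[Theorem~2.1]{DBLP:conf/focs/BeameCM13} supplies.
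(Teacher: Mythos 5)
The paper does not prove this lemma; it imports it verbatim as \cite[Theorem~2.1]{DBLP:conf/focs/BeameCM13}, so there is no proof in the source text to compare against. Judged on its own terms, your structural observations are correct and are the right frame: since $G_{a,h}$ has out-degree at most one, $\Out_{a,h}(A)$ is a union of at most $|A|$ ``rho''-shaped walks; collisions $a_u = a_{u'}$ among reachable vertices are exactly the in-neighbour ties at confluence vertices; there are at most $|A|$ confluence vertices (each new walk contributes at most one new vertex of in-degree $\ge 2$ in $R$); and each confluence vertex has at most $2|A|$ in-neighbours in $R$.

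The gap is in the time bound, and you essentially concede it by deferring the ``real work'' back to the citation. With tripwires placed only at previously found confluences and one representative per cycle, a walk that merges into an earlier walk's tail at a \emph{non-tripwire} vertex must re-traverse the rest of that tail before hitting the next tripwire (the cycle entry). If one walk has a tail of length $L = \Theta(|R|)$ and the other $|A|-1$ walks each merge a short distance from its head, and the walks are processed in the wrong order (e.g., merge positions $p_2 < p_3 < \cdots < p_{|A|}$ encountered in that order), the total traversal is $\sum_i (L - p_i) = \Omega(|A| \cdot |R|)$, not $\Otilde(|R|)$. Your charging claim that ``each traversal step is charged to a distinct newly explored edge of $R$'' is exactly what fails on these re-traversed edges. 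The ``nested cycle search'' you invoke to detect re-entry into explored territory is not specified precisely enough to rule this out within $O(|A|\log n)$ space, since you cannot store the visited set and naive pointer-chasing against prior walks costs $\Omega(|R|)$ per check. So the sketch correctly identifies the combinatorial structure, but the crux of the lemma --- achieving $\Otilde(|\Out_{a,h}(A)|)$ time and $O(|A|\log n)$ space simultaneously against an adversarial $f_{a,h}$ --- is not established; it is punted to the citation.
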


To illustrate, suppose that $\COLLIDE$ reports a pair $(y, \{u, v\})$. This indicates that $a_u = a_v = y$ and a collision is found. Algorithms developed in this section (Section~\ref{sec:cjww}) always call $\COLLIDE$ with a singleton $\{x\}$. In Section~\ref{sec:tradeoff}, we will design algorithms that call $\COLLIDE$ with multiple vertices.

\paragraph*{Pseudorandomness of the hash.} The main technical results in this section are the following.

\begin{lemma}\label{lemma:single-ub}
    For every $t \le \frac{1}{2}\log n$, let $\kappa = 20\log n$. Sample $\mathbf{h}\sim \calH^{n,m,t,\kappa}$ and $\mathbf{x}\sim [n]$. Then for every $u\in [n]$, it holds that:
    \[
    \Pr_{\mathbf{h}, \mathbf{x}}[u\in \Out_{a,\mathbf{h}}(\mathbf{x})] \le O\left(\frac{2^t}{n}\right).
    \]
\end{lemma}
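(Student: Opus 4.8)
\textbf{Proof plan for Lemma~\ref{lemma:single-ub}.}

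The plan is to prove the bound in two stages. First I would analyze the idealized \emph{parallelized walk} under the two simplifying assumptions from the proof overview: namely that each level $\mathbf{h}_d$ always queries fresh entries, and that the $\mathbf{h}_i$'s are truly random. Under these assumptions, the walk sequence is identically distributed to the walk tree: level $\mathbf{h}_t$ emits $q \sim \mathrm{Geom}(1/2)$ vertices (uniform in $[n]$), each of which is handed to $\mathbf{h}_{t-1}$ which again expands each into a $\mathrm{Geom}(1/2)$-length run, and so on down to $\mathbf{h}_1$. The key quantity is the expected total number of vertices in this tree: level $t$ contributes an expected $2$ vertices, and each vertex at level $d$ spawns an expected $2$ vertices at level $d-1$, so the expected number of vertices at level $d$ is $2^{t-d+1}$, and the total expected size is $\sum_{d=1}^{t} 2^{t-d+1} = O(2^t)$. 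Since each vertex in the walk tree is uniform in $[n]$ and independent of the fixed target $u$ (given its position and ancestors), the probability that $u$ appears anywhere in the tree is at most $\frac{1}{n}$ times the expected number of vertices, i.e. $O(2^t/n)$, which is exactly the claimed bound.

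Second, I would remove the two assumptions. Removing Assumption~2 (truly-random hashes) is handled by the ``constant-wise independence of the walk tree'' observation: to bound $\Pr[u \in \Out_{a,\mathbf{h}}(\mathbf{x})]$ it suffices to observe the \emph{single branch} of the walk tree that would hit $u$ (together with, for a union bound over positions, one branch at a time), and a single branch queries each $\mathbf{h}_d$ at most $O(\kappa)$ times except with probability $n^{-\omega(1)}$, since $\Pr[\mathrm{Geom}(1/2) > \kappa/10] < n^{-\Omega(1)}$ and there are at most $t \le \frac12\log n$ levels. With $\kappa = 20\log n$, a union bound over the (at most $\mathrm{poly}(n)$) relevant query counts shows that $\kappa$-wise independence makes every vertex-value in the observed branch $\frac{1}{n}$-uniform up to a negligible additive error, so the first-stage bound survives with only a $O(1)$ loss in the constant. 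Removing Assumption~1 is the subtler point: the real walk can revisit a vertex $\mathbf{x}$ so that $\mathbf{h}_d(a_{\mathbf{x}})$ is \emph{not} a fresh query. But for an \emph{upper} bound this only helps — revisiting a previously-seen vertex cannot introduce a new value into $\Out_{a,\mathbf{h}}(\mathbf{x})$ — so the cleanest route is to dominate the real reachable set by the ``extended walk'' of Algorithm~\ref{algo:extended-walk} (which resamples on repeat queries and hence is genuinely a walk tree with fresh queries, matching Lemma~\ref{lemma:extend-is-random}), observe that $\Out_{a,\mathbf{h}}(\mathbf{x})$ is contained in the vertex set of that extended object, and apply the first-stage calculation there.

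I expect the main obstacle to be the bookkeeping needed to make the ``observe one branch'' reduction rigorous in the presence of the resampling step — i.e., setting up the extended walk of Section~\ref{sec:cjww-extend} and invoking Lemma~\ref{lemma:extend-is-random} so that the vertices we inspect are provably generated by genuinely fresh (hence $\kappa$-wise-independently uniform) queries, while simultaneously arguing that the true reachable set $\Out_{a,\mathbf{h}}(\mathbf{x})$ is a subset of what that process produces. The probability estimates themselves are a one-line expectation-plus-union-bound argument once that coupling is in place; by contrast, in the CJWW-style analysis this is where the case analysis lives, and the whole point of the extended-walk machinery developed earlier in this section is to make this step short. Concretely, I would: (i) invoke the extended walk and Lemma~\ref{lemma:extend-is-random} to reduce to the idealized tree; (ii) bound the expected tree size by $O(2^t)$ via the geometric branching; (iii) conclude $\Pr[u \in \Out_{a,\mathbf{h}}(\mathbf{x})] \le O(2^t/n)$ by linearity of expectation and a union bound over tree positions, absorbing the $n^{-\omega(1)}$ error from the bounded-independence truncation into the constant.
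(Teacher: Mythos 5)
Your proposal takes essentially the same approach as the paper: couple the standard walk with the $S$-extended walk for singleton $S = \{\ell\}$, apply Lemma~\ref{lemma:extend-is-random} to get $\Pr[\mathbf{T}^{\{\ell\}}(\ell) = u] = 2^{1-|P(\ell)|}/n$, sum over indices to obtain the $2^t/n$ bound, and control a negligible error term. One small imprecision worth flagging: the set-containment claim that $\Out_{a,\mathbf{h}}(\mathbf{x})$ lies inside the extended walk's vertex set is not literally true (the extended walk resamples and is $\tau$-truncated, so its entries can differ from, and also miss parts of, the true reachable set); what the paper actually uses is a per-index coupling — if the standard walk is ``good'' (no $\tau$ consecutive same-level moves) and $T(\ell)=u$, then $\mathbf{T}^{\{\ell\}}(\ell)=u$ almost surely — plus a separate ``bad walk'' term $\sum_{\ell:\width(\ell)=\tau}\Pr[\mathbf{T}^{\{\ell\}}(\ell)\ne\star] \le n^{-3}$, which is the true source of the $n^{-\omega(1)}$ error you mention (the $\tau$-truncation), not an approximation loss in the bounded-independence step itself, which by Lemma~\ref{lemma:extend-is-random} is exact.
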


\begin{lemma}\label{lemma:double-lb}
    Let $t = \frac{1}{2}\log n$ and $\kappa = 20\log n$. Sample $\mathbf{h}\sim \calH^{n,m,t,\kappa}$ and $\mathbf{x}\sim [n]$. Then for every $u, v\in [n]$, $u\ne v$, it holds that:
    \[
    \Pr_{\mathbf{h}, \mathbf{x}}[u, v\in \Out_{a,\mathbf{h}}(\mathbf{x})] \ge \Omega\left( \frac{1}{F_2(a)} \right).
    \]
\end{lemma}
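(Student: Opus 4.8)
The plan is to formalize the "parallelized walk tree" picture from the proof overview and then execute the inclusion-exclusion / coupling argument sketched there, adapted to the pseudorandom hash. First I would set up the walk tree: with $t = \frac{1}{2}\log n$ levels, the top-down protocol generates a branching structure where level $d$ (corresponding to $\mathbf{h}_d$) spawns roughly $\mathrm{Geom}(1/2)$ children for each vertex it receives, and the leaves of this tree form the walk sequence $\{\mathbf{x}_i\}$ that $\COLLIDE$ would explore from $\mathbf{x}$. Because $\kappa = 20\log n$ and $\Pr[\mathrm{Geom}(1/2) > \kappa/10]$ is $1/\poly(n)$, with all but negligible probability no hash function is queried more than $\kappa/10$ times along any single branch. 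The key enabling tool is Lemma~\ref{lemma:extend-is-random} (the "resample-as-you-observe" extended walk, established in Section~\ref{sec:cjww}), which lets me treat a constant number of observed branches in the tree as if they were generated by truly random hash functions, provided each hash is queried only $O(\log n)$ times total across those branches.

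Next I would run the localized lower bound. Fix $u,v$ with $u\ne v$. In the parallelized/extended walk I enumerate ordered pairs of branches and ask for the probability that one branch's walk passes through $u$ and the other's through $v$ at specified positions, where additionally (i) no vertex in either observed branch repeats before reaching its target (so the resampling did not corrupt the contribution), and (ii) the target positions are the \emph{first} occurrences of $u$ and $v$ (so the events are disjoint). As in the overview, each such "clean" event has probability essentially $\frac{1}{n^2}$ in the truly-random model (which Lemma~\ref{lemma:extend-is-random} transfers to our setting up to $1+o(1)$ factors), and the bad events — a repeated $a$-value along a prefix, or an earlier occurrence of $u$ or $v$ — are subtracted via a union bound; each of those involves only $O(1)$ vertices/branches, so again Lemma~\ref{lemma:extend-is-random} applies, and each bad event costs roughly $\frac{1}{n^3}$. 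The number of (branch, position) tuples is $\approx 2^{2t} = n$ for the main term and $\approx 2^{4t} = n^2$ for the bad terms, so with $t = \frac{1}{2}\log n$ (and the constant shift, i.e. effectively $t = \frac12\log n - O(1)$ absorbed in the $\Omega(\cdot)$) the main term dominates and yields $\Pr[u,v\in \Out_{a,\mathbf{h}}(\mathbf{x})] \ge \Omega(1/n)$ when $a$ is injective. The general $1/F_2(a)$ bound follows by the same count, since a vertex $w$ with $a_w$ of multiplicity $\mu$ behaves like it has "$\mu$ outgoing targets," so the effective branching factor that needs to hit $u$ and $v$ scales the denominator from $n^2$ up to $\sum_{i,j}\mathbbm{1}[a_i=a_j] = F_2(a)$; equivalently one reworks the $\frac{1}{n}$ estimates above into $\frac{F_\infty(a)}{n}$-type bounds and uses $F_\infty(a) \le \sqrt{F_2(a)}$.

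The main obstacle is making the "observe a constant number of branches in the walk tree, conclude constant-wise independence" step fully rigorous \emph{together with} the resampling subtlety flagged in the overview: naive resampling on every query destroys $\kappa$-wise independence, so I must carefully argue that in the extended walk only the $O(1)$ branches we actually inspect get resampled, that along those branches each $\mathbf{h}_d$ is touched at most $\kappa$ times, and that conditioning on the tree structure (the $\mathrm{Geom}(1/2)$ child counts at each node, which depend on higher levels) does not secretly leak more than $\kappa$ constraints to any single hash. This is precisely what Algorithm~\ref{algo:extended-walk} and Lemma~\ref{lemma:extend-is-random} are built to handle, so modulo invoking them correctly the remaining work is the bookkeeping of the inclusion-exclusion sum and verifying the $2^{2t}$ vs.\ $2^{4t}$ balance. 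A secondary technical point is handling the low-probability event that some branch is "too long" ($\mathrm{Geom}$ exceeds $\kappa/10$) or the tree is too large; this contributes only $1/\poly(n)$ and is dwarfed by the $\Omega(1/F_2(a)) \ge \Omega(1/n^2)$ target, so it can be discarded at the end.
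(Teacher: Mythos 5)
Your overall scaffolding — pass to the extended walk $\walk^S$, use Lemma~\ref{lemma:extend-is-random} to treat a constant number of observed branches as independent, then do an inclusion--exclusion over ``clean'' hits of $(u,v)$ minus the refutation events (repeated $a$-values along a prefix, earlier occurrences of $u$ or $v$, too-long branches) — matches the paper's structure for Section~\ref{sec:cjww-wrapup}. However, there is a concrete gap in how you claim to get the $\Omega(1/F_2(a))$ bound for non-injective $a$, and it is not cosmetic.

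The paper does \emph{not} sum over the full walk tree of depth $t=\tfrac12\log n$; it restricts to the sub-tree $I^\gamma$ of indices supported on the first $\gamma$ levels, where $\gamma = \tfrac12\log(n^2/F_2(a))-C$ is \emph{tuned to the input $a$}. This tuning is what makes the balance work: the main term is $\approx 2^{2\gamma}/n^2 \approx 1/(2^{2C}F_2(a))$ while the dominant error term (four observed indices with a prefix collision) is $\approx F_2(a)\cdot 2^{4\gamma}/n^4 \approx 1/(2^{4C}F_2(a))$, so a large constant $C$ suffices. If instead you keep $\gamma = t-O(1)$ as you suggest (``constant shift absorbed in $\Omega$''), the main term is $\approx 1/n$ but the collision error term is $\approx F_2(a)/n^2$, and since $F_2(a)\geq n$ with strict inequality whenever $a$ is not injective, the error term overwhelms the main term for any $a$ with $F_2(a) = \omega(n)$; the lower bound you would derive is negative. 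Your heuristic that ``a vertex with multiplicity $\mu$ behaves like it has $\mu$ outgoing targets, so the denominator scales from $n^2$ to $F_2(a)$'' does not describe how $F_2(a)$ actually enters: the clean-event probability is still $1/n^2$ regardless of $a$; $F_2(a)$ enters only through the subtracted collision term, which is why the depth must shrink as $F_2(a)$ grows. Reworking $1/n$ into $F_\infty(a)/n$ quantifies the error terms correctly (the paper does exactly this for the case $\ell^2\in\{\ell^3,\ell^4\}$) but does not repair the imbalance.

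A secondary omission: the $\gamma$-tuning only makes sense when $\gamma \geq 5$, i.e.\ $F_2(a)\leq cn^2$ for a small constant $c$. The paper covers the complementary regime with the trivial bound $\Pr[\mathbf{x}=u \wedge \mathbf{h}(a_u)=v]\geq 1/(2n^2)\geq \Omega(1/F_2(a))$, which your write-up does not mention. To turn your proposal into a correct proof, you need (i) the explicit restriction to the depth-$\gamma$ sub-tree with $\gamma$ chosen as above, and (ii) the separate trivial bound for the high-$F_2(a)$ regime.
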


Assuming Lemma~\ref{lemma:single-ub} and \ref{lemma:double-lb}, we can prove Theorem~\ref{theo:improved-seed}.

\begin{reminder}{Theorem~\ref{theo:improved-seed}}
Both \textsc{Element Distinctness} and \textsc{Set Intersection} can be solved by a Monte Carlo algorithm that runs in $\Otilde(n^{3/2})$ time, uses $O(\log^3 n)$ bits of working space and no random oracle.
\end{reminder}

\begin{proof} We start with the algorithm for \textsc{Element Distinctness}. 

\paragraph*{Solving \textsc{Element Distinctness}.} Let $a\in [m]^n$ be the input. We set $t = \frac{1}{2}\log n$ and $\kappa = 20\log n$. Our algorithm repeats the following process for $\Theta(n\log n)$ times:

\begin{itemize}
    \item Draw a random hash $\mathbf{h}\sim \calH^{n,m,t,\kappa}$ and a starting vertex $\mathbf{x}\sim [n]$. Try to find a colliding pair by running $\COLLIDE(\mathbf{x})$ on $G_{a,\mathbf{h}}$. 
\end{itemize}
The algorithm reports YES if it does not find any colliding pair. Otherwise it reports NO. 

By Lemma~\ref{lemma:single-ub} and \ref{lemma:bcm-cycle}, the expected running time of one trial is $\Otilde\left( n \cdot \frac{2^t}{n} \right) \le \Otilde(\sqrt{n})$. Next, we argue that if $a$ is a NO instance to \textsc{Element Distinctness}. Then with probability $\Omega\left( \frac{1}{n} \right)$, one trial succeeds in finding a pair $(p, q)$ such that $p\ne q$ but $a_p = a_q$.

Indeed, for every $1\le p < q \le n$ with $a_p = a_q$, let $\calE(p,q)$ denote the event the $\COLLIDE(\mathbf{x})$ outputs $(p,q)$ on $G_{a, \mathbf{h}}$. Then by Lemma~\ref{lemma:double-lb}, we have $\Pr_{\mathbf{x},\mathbf{h}}[\calE(p,q)] \ge \Omega\left( \frac{1}{F_2(a)} \right)$. Also note that events $\{\calE(p,q):p<q,a_p=a_q\}$ are mutually disjoint (because $\Out_{a,\mathbf{h}}(\mathbf{x})$ cannot contain two colliding pairs). Therefore, we have
\[
\Pr_{\mathbf{x}, \mathbf{h}}[\text{find a colliding pair in $\Out_{a,\mathbf{h}}(\mathbf{x})$}] \ge \sum_{1\le p < q \le n, a_p = a_q} \Omega\left( \frac{1}{F_2(a)} \right) \ge \Omega\left( \frac{F_2(a) - n}{2\cdot F_2(a)} \right) \ge \Omega\left(\frac{1}{n}\right)
\]
provided that $F_2(a) \ge n+1$ (i.e., the input $a$ contains at least one colliding pair).

If $a$ is a YES instance, the algorithm always reports YES. If $a$ is a NO instance, the algorithm finds a colliding pair in at least one trial with probability $1 - \left( 1 - \Omega\left( \frac{1}{n} \right) \right)^{n\log n}\ge 1 - n^{\Omega(1)}$. The expected running time is $\Otilde(n^{3/2})$. The space usage is bounded by $O(\log n)$ plus the space required to store (the description of) $h$, which is $O(t\kappa \log(n+m)) \le O(\log^3 n)$. Overall the space complexity is $O(\log^3 n)$.

\paragraph*{Solving \textsc{Set Intersection}.} Now we present the algorithm for \textsc{Set Intersection}. Suppose $a,b\in [m]^n$ are the input arrays. Define $c\in [m]^{2n}$ as the concatenation of $a$ and $b$. The algorithm repeats the following process for $O(n\log^2 n)$ times:
\begin{itemize}
    \item Sample $\mathbf{h}\sim \calH^{n,m,t,\kappa}$ and $\mathbf{x}\sim [n]$. Run $\COLLIDE(\mathbf{x})$ on $G_{c,\mathbf{h}}$. If $\COLLIDE(\mathbf{x})$ returns a colliding pair $(p,q)$, print $c_p$.
\end{itemize}
The running time is $\Otilde(n^{3/2})$. We argue the correctness now. Suppose $c_p = c_q$ is a colliding pair. With probability $\Omega\left( \frac{1}{F_2(c)} \right) = \Omega\left( \frac{1}{n} \right)$, the algorithm finds $(p,q)$ in one trial. Since we have $O(n\log^2 n)$ independent trials, the probability that the algorithm misses $(p,q)$ is bounded by $\left( 1 - \Omega\left( \frac{1}{n} \right)\right)^{n\log^2 n} \le n^{-\omega(1)}$. Union-bounding over all colliding pairs concludes the proof.
\end{proof}

Next, we prove Lemma~\ref{lemma:single-ub} and \ref{lemma:double-lb}.


\subsection{The Recursive Perspective of the Walk}\label{sec:cjww-standard}

To better exploit the hierarchical structure of $h$, we consider a recursion-based perspective of the walk on the graph $G_{a, h}$, as shown in Algorithm~\ref{algo:standard-walk}. Algorithm~\ref{algo:standard-walk} starts with a given input $x\in [n]$ and produces a (infinite size) tensor $T\colon \mathbb{N}^t \to [n]\cup \{\star\}$. In the following, we use $\walk^{\std}$ to refer to Algorithm~\ref{algo:standard-walk} and use $T=\walk^\std(h, x)$ to denote the resulting tensor when running $\walk^\std$ on $G_{a,h}$ with starting vertex $x\in [n]$.

\begin{algorithm2e}[h]
\caption{The Standard Walk}
\label{algo:standard-walk}
\SetKwInput{KwInput}{Input}                
\SetKwInput{KwVariable}{Global Variables}              
\LinesNumbered

\DontPrintSemicolon
  
    \KwInput{$n,m\ge 1$. $t$ hash functions $h_1,\dots, h_t:[m]\to [n]\cup \{0\}$. The array $(a_1,\dots, a_n)\in [m]^n$. The starting vertex $x\in [n]$.}
    
    \SetKwFunction{FMain}{Main}
    \SetKwFunction{FWalk}{stdwalk}
    \SetKw{Break}{break}
    
    \KwVariable{
        \; ~~~~ A tensor $T:\mathbb{N}^t\to [n]\cup \{\star\}$, initialized with $\star$'s.
        \; ~~~~ A set $D\subseteq [m]$, initialized with $\emptyset$.
    }
    
    \SetKwProg{Fn}{Program}{:}{\KwRet}
    \Fn{\FMain}{
        $\ell \gets (0, 0, \dots, 0) \in \mathbb{N}^{t}$ \tcp*{The index $\ell$ tracks the progress of the walk}
        $T(\ell) \gets x$ \tcp*{$T(\vec{0})$ records the starting vertex}
        $\stdwalk(t, x, \ell)$ \tcp*{Start the walk}
        \KwRet $T$\;
    }
    
    \SetKwProg{Fn}{Function}{:}{}
    \Fn{\FWalk{$i$, $x$, $\ell$}}{
        \If{$i=0$}{     
            \KwRet $x$ \tcp*{There is no level-$0$ hash}
        }
        \While{\emph{True}}{
            $x\gets \stdwalk(i-1, x, \ell)$  \tcp*{Move according to $h_{<i}$ first}
            \If{$a_x\in D$}{                
                \KwRet $x$                  \tcp*{If $h(a_x)$ has been queried, halt the walk}
            }
            $\ell_i \gets \ell_i + 1$    \tcp*{Prepare to make a level-$i$ move}
            \If{$h_i(a_x) = 0$}{
                \Break ;            \tcp*{If $h_i(a_x)=0$, need to look at $h_{>i}(a_x)$}
            }
            \Else{              
                $D\gets D \cup \{a_x\}$ \tcp*{``Remember'' that $h(a_x)$ has been used}
                $T(\ell) \gets h_i(a_x)$  \tcp*{Record the move in the tensor $T$}
                $x\gets h_i(a_x)$     \tcp*{Make a level-$i$ move}

            }
        }
        \KwRet $x$\;
    }
\end{algorithm2e}


\subsubsection{Understanding the standard walk}

Before we continue, we set up necessary pieces of notation and state some basic facts about the structure of $\walk^\std$. We start with the definition of ``index'' to the tensor $T$.

\begin{definition}\label{def:index}
We use the term \emph{index} to refer to $t$-dimensional integer vectors $\ell = (\ell_1,\dots, \ell_t)$. We introduce the following total ordering for indices: for two indices $\ell^1 \ne \ell^2$, let $i\in [t]$ be the \emph{largest} integer such that $\ell^1_i \ne \ell^2_i$. We say $\ell^1 < \ell^2$ if $\ell^1_i < \ell^2_i$. The width of an index $\ell$ is defined as $\width(\ell) := \max_{1\le i\le t} \{ \ell_i \}$. Call an index $\ell$ $\tau$-bounded if $\width(\ell) \le \tau$. The level of an index is $\level(\ell) := \max\{q : \forall i< q, \ell_i = 0\}$. In particular, we define $\level(\vec{0}) := t + 1$. For every $0\le i\le t$, we use $\ell_{<i}, \ell_{>i}$ to denote the length-$(i-1)$ prefix and length-$(t-i)$ suffix of $\ell$, respectively.
\end{definition}

\emph{Throughout the whole paper}, we reserve the letters ``$\ell$'', ``$r$'' and ``$\omega$'' for indices. 

When Line $17$-$19$ is executed in a function call $\stdwalk(i, x, \ell)$, we say the algorithm makes a level-$i$ move. Note that we use the index $\ell$ in the function call to track the ``progress'' of the walk. The index should be interpreted as follows. Suppose we call $\stdwalk(i, x, \ell)$ with $\ell = (\ell_1, \ell_2,\dots, \ell_t)$. Then it is always guaranteed that $\level(\ell) > i$. Moreover, $\ell_{t}$ counts the number of level-$t$ moves before reaching $\stdwalk(i, x, \ell)$. For every $j\in [i+1, t-1]$, $\ell_{j}$ further counts the number of level-$j$ moves after the very last level-$(j+1)$ move. See Figure~\ref{fig:walk-example} for an example.

\begin{figure}[h]
\centering
\includegraphics[width=14cm]{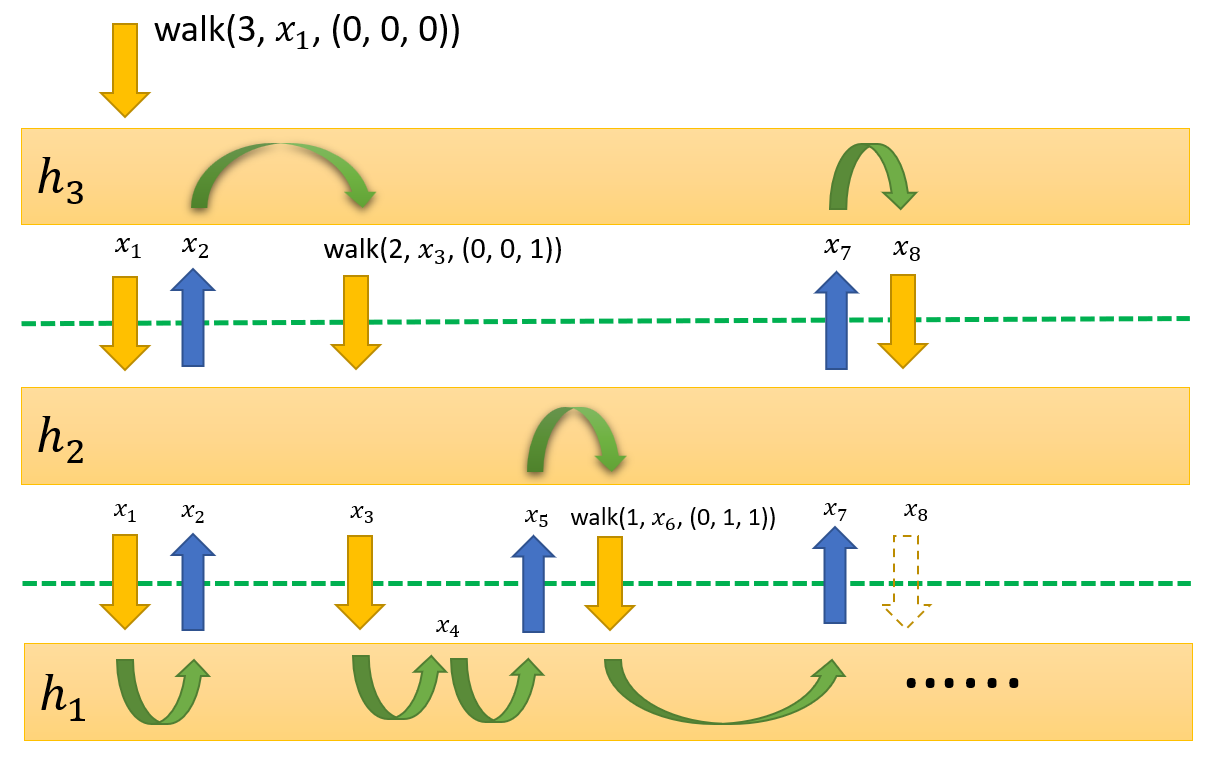}
\caption{One running example of the standard walk with $t=3$ levels. Orange and blue arrows represent function calls and returns, respectively. Note that we highlighted $3$ function calls $\stdwalk(3, x_1, (0,0,0)), \stdwalk(2, x_3, (0,0,1)), \stdwalk(1, x_6, (0,1,1))$ for better illustration.}
\label{fig:walk-example}
\end{figure}

We further observe that entries of the tensor $T$ are connected as a rooted and ordered tree.

\begin{definition}\label{def:walk-tree}
For every $t\ge 1$, define the level-$t$ walk tree $\calT^{(t)}$ as follows. $\calT^{(t)}$ is a rooted and ordered tree of infinite size. Vertices of $\calT^{(t)}$ are indexed by tuples $\{\ell = (\ell_1,\dots, \ell_t)\in \mathbb{N}^t\}$. The root of $\calT^t$ is $\vec{0} = (0, \dots, 0)$. For every pair $(\ell, r)$, $\ell$ is the parent of $r$, if there exists $i\in [t]$ such that $\ell_{>i} = r_{>i}$, $\ell_{<i} = r_{<i} \equiv 0$ (i.e. the length-$(i-1)$ prefixes of $\ell$ and $r$ are all zeros) and $r_{i} = \ell_i + 1$. Children of a vertex are sorted in the order of their indices (as per Definition~\ref{def:index}). For every $\ell\in \mathbb{N}^t$, let $P^t(\ell)$ denote the set of vertices from root to $\ell$ (inclusive). For a set of indices $S = \{ \ell^1, \dots, \ell^q \}$, define $P^{(t)}(S) = \bigcup_{1\le i\le q} P^{(t)}(\ell^i)$.
\end{definition}

When it is clear from context, we will omit the superscript ``$t$'' in $\calT^{(t)}$ and $P^{(t)}$. The following observations show the connections between the tensor $T$ and the walk tree $\calT$.

\begin{figure}[h]
\centering
\includegraphics[width=14cm]{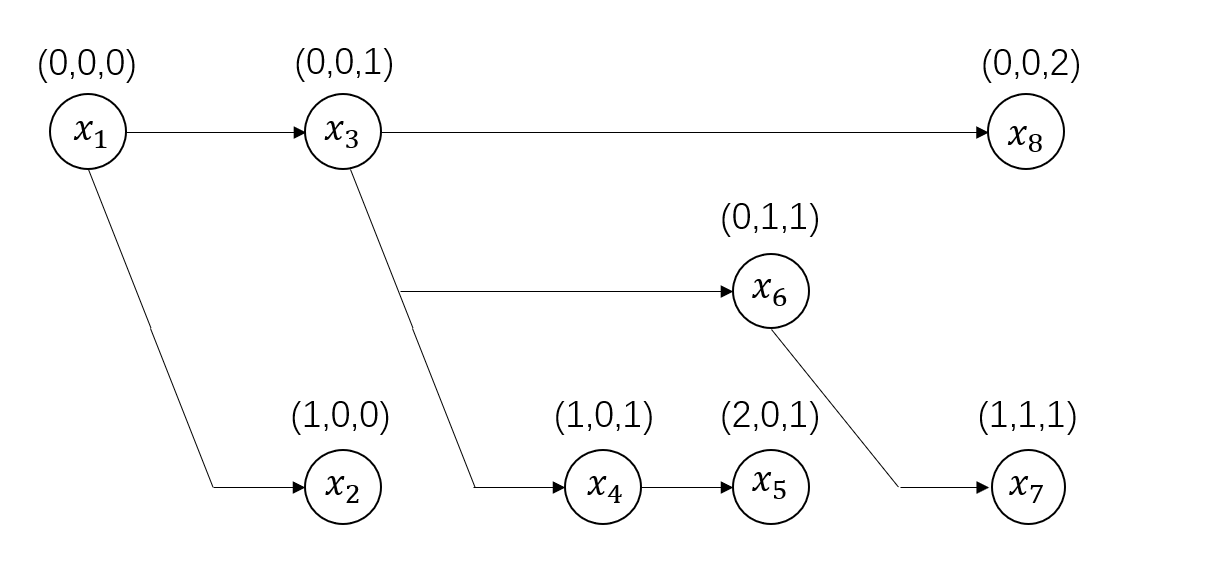}
\caption{The level-$3$ walk tree $\calT^{(3)}$ corresponding to the running example in Figure \ref{fig:walk-example}.}
\label{fig:tree-example}
\end{figure}

\begin{enumerate}
    \item For every $r\ne \vec{0}$, suppose Algorithm~\ref{algo:standard-walk} has updated $T(r)$ in Line 18. Then, for every $\ell$ being an ancestor of $r$ in $\calT$, it holds that $T(\ell) \ne \star$. Contrapositively, if $T(\ell) = \star$, then for every $r$ being a descendant of $\ell$, it holds $T(r) = \star$.
    \item For every $\ell \ne \vec{0}$, define
    \[
    \probe(\ell) = 
    \begin{cases}
    a_x & \text{ If $T(\ell)$ was set by accessing $h_{\level(\ell)}(a_x)$ in Line 18 of Algorithm~\ref{algo:standard-walk}, } \\
    \star & \text{Otherwise (i.e., Algorithm~\ref{algo:standard-walk} has never tried to write $T(\ell)$).}
    \end{cases}
    \]
    For every $\ell$ with $\probe(\ell) \ne \star$, we have $\probe(\ell) = a_{T(\pre(\ell))}$, where $\pre(\ell) := \sup\{ r : r < \ell, T(r) \ne \star\}$. Intuitively, $\pre(\ell)$ is the last entry of $T$ updated before updating $T(\ell)$. We only consider $\pre(\ell)$ for those $\ell$'s with $T(\ell) \ne \star$. We observe that $\pre(\ell)$ is either the parent of $\ell$ or a descendant of $\ell$'s parent. Also note that the definition of $\probe, \pre$ is associated with a tensor $T$.
\end{enumerate}

We introduce an equivalence relation ``$\simeq$'' for $[n]$, where we say $u \simeq v$ if $a_u = a_v$. Note that if we find $T(\ell) \simeq T(r)$ for some $\ell < r$, it means that the walk enters a loop after updating $T(r)$. Finally, note that we use a set $D$ to remember the entries of $h$ that we have queried. When $\walk^\std$ reaches a point $r$ such that $T(\ell)\simeq T(r)$ holds for some $\ell < r$, it will terminate the walk.

\subsubsection{Analyzing the standard walk} 

Back to our discussion, we first observe that the tensor $T$ exactly represents the set $\Out_{a,h}(x)$ in the sense that
\[
\Out_{a,h}(x) = \{ u\in [n]: \exists \ell\in \mathbb{N}^t, T(\ell) = u \}.
\]
Now, consider sampling a list of random $\mathbf{h}_1,\dots, \mathbf{h}_t$ together with a starting vertex $\mathbf{x}$. Let $\mathbf{h}, \mathbf{T}$ be two random variables denoting the resulting hash function and tensor, respectively. Suppose $u,v\in [n]$ is a pair such that $a_u= a_v, u\ne v$. We wish to find the probability that $u,v\in \Out_{a,\mathbf{h}}(\mathbf{x})$. For every pair of indices $\ell^1\ne \ell^2$, denote $\calE(\ell^1,\ell^2)$ as the event ``$\mathbf{T}(\ell^1)=u\land \mathbf{T}(\ell^2) = v$''. Because the standard walk halts immediately after finding a pair $\ell < r$ such that $\mathbf{T}(\ell) \simeq \mathbf{T}(r)$. We observe that $\{\calE(\ell^1,\ell^2)\}$ are mutually disjoint. Therefore, we have
\begin{align}
&  \Pr_{\mathbf{h_1},\dots, \mathbf{h}_t, \mathbf{x}}[u,v\in \Out_{a,\mathbf{h}}(\mathbf{x})] 
  = \sum_{\ell^1 \ne \ell^2}\Pr_{\mathbf{h}_1,\dots,\mathbf{h}_t, \mathbf{x}}\left[  \mathbf{T}(\ell^1) = u \land \mathbf{T}(\ell^2) = v \right]. \label{eq:translate-to-tree}
\end{align}

In the following, we introduce new tools (``extended walks'') to prove the lower bound for every $\tau$-bounded pair $(\ell^1,\ell^2)$.

\subsection{The Extended Walk}\label{sec:cjww-extend}
 
Analyzing the distribution $(\mathbf{T}(\ell^1), \mathbf{T}(\ell^2))$ through $\walk^\std$ directly seems very hard. To circumvent the issue, we will couple the standard walk with a family of carefully designed extended walks.

Fix $\tau = 5\log n$. For any finite set $S\subseteq \mathbb{N}^t$ of $\tau$-bounded\footnote{See Definition~\ref{def:index}.} indices, we construct a randomized $S$-extended walk as shown in Algorithm~\ref{algo:extended-walk}. In the following, we also use $\walk^S$ to denote the $S$-extended walk and use $\mathbf{T}^S \sim \walk^S(h,x)$ to denote the resulting tensor\footnote{Note that even if $h, x$ are fixed, the extended walk still has its internal randomness.} when running $\walk^S$ on $h$ with starting vertex $x$. 

Compared with $\walk^\std$, we implement two major modifications in $\walk^S$:
\begin{itemize}
    \item When $\walk^S$ encounters a loop, it does not halt the walk immediately, neither does it run forever. There is a threshold $\tau = 5 \log n$. We implement $\walk^S$ so that inside each function call $\extwalk(i, x, \ell)$, $\walk^S$ makes at most $\tau$ steps of level-$i$ move (Line $9$). This ensures that the algorithm always has a chance of visiting every index in $S$.
    \item Second, we use $t$ sets $D_1, \dots, D_t \subseteq [n]$ to ``remember'' the coordinates of $h_1,\dots, h_t$ that we have probed along $P(S)$ (Line $22$-$23$). When moving along $P(S)$ and trying to access $h_i(a_x)$, we always check if $h_i(a_x)$ has been used for any previous index in $P(S)$ (Line $12$). If so, we sample a uniform one-time edge to avoid accessing $h_i(a_x)$ (Line $13$).
\end{itemize}

Next, we demonstrate the insight behind Algorithm~\ref{algo:extended-walk} by showing its two advantages. In Section~\ref{sec:analyze-extended-walk}, we show that analyzing the distribution of $(\mathbf{T}^S(\ell^1), \mathbf{T}^S(\ell^2))$ is significantly easier than analyzing $(\mathbf{T}(\ell^1), \mathbf{T}(\ell^2))$ directly. In Section~\ref{sec:extended-couple}, we show how lower bounds for the extended walk translate to that for the standard walk.

\begin{algorithm2e}[H]
\caption{The $S$-Extended Walk}
\label{algo:extended-walk}
\SetKwInput{KwInput}{Input}                
\SetKwInput{KwVariable}{Global Variables}              
\LinesNumbered
\DontPrintSemicolon
  
    \KwInput{
        \; ~~~~ $n,m, t\ge 1$. $t$ hash functions $h_1,\dots, h_t$. The input array $a_1,\dots, a_n$.
        \; ~~~~ A starting vertex $x\in [n]$. 
        \; ~~~~ A set of indices $S \subseteq \mathbb{N}^t$.
    }
    
    \SetKwFunction{FMain}{Main}
    \SetKwFunction{FWalk}{extwalk}
    \SetKw{Break}{break}
    
    \KwVariable{
        \; ~~~~ A constant $\tau \gets 5 \log n$
        \; ~~~~ A tensor $T:[0, \tau]^t\to [n]\cup \{\star\}$, initialized with $\star$'s
        \; ~~~~ $t$ sets $D_1,D_2,\dots, D_t\gets \emptyset$. \tcp*{Different sets for different levels}
    }
    
    \SetKwProg{Fn}{Program}{:}{\KwRet}
    \Fn{\FMain}{
        $\ell \gets (0, 0, \dots, 0) \in \mathbb{N}^{t}$ \;
        $T(\ell) \gets x$ \;
        $\extwalk(t, x, \ell)$ \;
        \KwRet $T$\;
    }
    
    \SetKwProg{Fn}{Function}{:}{}
    \Fn{\FWalk{$i$, $x$, $\ell$}}{
        \If{$i=0$}{
            \KwRet $x$ \;
        }
        \While(\tcp*[f]{Make at most $\tau$ steps of level-$i$ move}){$\ell_i < \tau$}{
            $x\gets \extwalk(i-1, x, \ell)$ \;
            $\ell_i \gets \ell_i + 1$ \;
            \If(\tcp*[f]{Avoid accessing $h_i(a_x)$ twice in $P(S)$}){$\ell \in P(S)$ and $a_x\in D_i$}{ 
                $(\alpha, \beta) \sim \{0,1\}\times [n]$ \tcp*{Sample $(\alpha,\beta)$ uniformly randomly}
            }
            \Else(\tcp*[f]{Access $h_i(a_x)$ for the first time in $P(S)$}){
                \If{$h_i(a_x) = 0$}{
                    $(\alpha, \beta)\gets (0, 0)$ 
                }
                \Else{
                    $(\alpha, \beta)\gets (1, h_i(a_x))$ 
                }
            }
            \If{$\alpha = 0$}{
                \Break \tcp*{If $\alpha = 0$, end the current level-$i$ walk}
            }
            \Else{
                \If{$\ell \in P(S)$}{
                    $D_i \gets D_i \cup \{a_x\}$ \tcp*{``Remember'' $a_x$ only when $\ell\in P(S)$}
                }
                $T(\ell) \gets \beta$ \;
                $x\gets \beta$ \tcp*{If $\alpha = 1$, make a level-$i$ move}
            }
        }
        \KwRet $x$\;
    }
\end{algorithm2e}



\subsubsection{Analyzing the extended walk}\label{sec:analyze-extended-walk}

The extended walk behaves nicely if we only observe $\mathbf{T}^S(\ell)$ for those $\ell\in S$. Formally, we have the following lemma.

\begin{lemma}\label{lemma:extend-is-random}
Let $S = \{ \ell^1, \dots, \ell^c \}$ be $c$ $\tau$-bounded indices. Suppose $\mathbf{h}\sim \calH^{n,m,t,c\tau}$ and $\mathbf{x}\sim [n]$. Let $\mathbf{T}^S\sim \walk^S(\mathbf{h}, \mathbf{x})$. For every $u_1,u_2,\dots, u_c \in [n]$, we have
\[
\Pr_{\mathbf{h}, \mathbf{x}, \mathbf{T}^S} \left[ \mathbf{T}^S(\ell^i) = u_i, \forall i\in [c] \right] = \frac{2^{-|P(S)|+1}}{n^{c}}.
\]
\end{lemma}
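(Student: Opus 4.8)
\textbf{Proof plan for Lemma~\ref{lemma:extend-is-random}.}

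The plan is to induct on the structure of the walk tree, exposing the randomness of $\mathbf{h}_1,\dots,\mathbf{h}_t$ (and the internal one-time edges) in the order in which the algorithm consults them, and showing that every value recorded at an index of $P(S)$ is a fresh uniform sample from $[n]\cup\{0\}$ (conditioned on the ``shape'' of the walk, i.e.\ on which indices in $P(S)$ actually get written a non-$\star$ value). First I would fix the set $P(S)$ and enumerate its indices in increasing order $r^1 < r^2 < \cdots < r^{|P(S)|}$ (with $r^1 = \vec 0$). I claim that Algorithm~\ref{algo:extended-walk}, restricted to the indices of $P(S)$, performs exactly $|P(S)|-1$ ``decisions'' of the form ``sample $(\alpha,\beta)$'', one for each $r^j$ with $j\ge 2$: when the algorithm is about to write $T(r^j)$ it either (Line 13) samples $(\alpha,\beta)\sim\{0,1\}\times[n]$ directly, or (Lines 15--19) reads $h_i(a_x)$ for $i=\level(r^j)$ and $x = T(\pre(r^j))$ — and crucially, by the bookkeeping in $D_i$ together with Line 12, this read is always to a coordinate that $h_i$ has not been asked about previously \emph{along $P(S)$}. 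I would first establish this ``freshness along $P(S)$'' invariant carefully: the only coordinates of $h_i$ that the extended walk ever remembers (adds to $D_i$) are those probed while $\ell\in P(S)$, and before each such probe Line 12 checks membership in $D_i$, so no coordinate of $h_i$ is read twice along $P(S)$.

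Next I would count the reads. Along $P(S)$, each level-$i$ hash $h_i$ is consulted at most once per index of $P(S)$ at level $i$; since $S$ has $c$ indices and each is $\tau$-bounded, the number of level-$i$ indices in $P(S)$ is at most $c\tau$, hence $h_i$ is read at most $c\tau$ times along $P(S)$. (More precisely one should observe that $P(S)$ is contained in the union of $c$ root-to-leaf paths, and each such path visits each level at most $\tau$ times; a level-$i$ probe occurs only at indices of level exactly $i$, whose count along $P(S)$ is $\le c\tau$.) Therefore $\kappa = c\tau$-wise independence of each $\mathbf h_i$ is enough: conditioned on all earlier reads and on $\mathbf x$, each new read $h_i(a_x)$ is uniform on $[n]\cup\{0\}$ with $\Pr[=0]=\tfrac12$, exactly as a truly random $h_i$ would behave. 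The one-time edges sampled in Line 13 are genuinely uniform and independent by construction.

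With these two facts I would run the induction. Process the indices $r^2,\dots,r^{|P(S)|}$ in order; at step $j$ the algorithm produces a pair $(\alpha_j,\beta_j)$ that, conditioned on everything revealed so far, is distributed as: $\alpha_j=0$ w.p.\ $\tfrac12$, and $\alpha_j=1$ w.p.\ $\tfrac12$ in which case $\beta_j\sim[n]$ uniformly and independently. If $\alpha_j=0$ then $T(r^j)$ and all of $T$ on the subtree of $P(S)$ hanging below $r^j$ stay $\star$; if $\alpha_j=1$ then $T(r^j)=\beta_j$. Hence the assignment $(T^S(r))_{r\in P(S)}$ is determined by the independent coin flips $(\alpha_j)_j$ and the uniform labels $(\beta_j)_j$: the subset $W\subseteq P(S)$ of indices that get a non-$\star$ label is exactly the ``reachable'' set determined by the $\alpha_j$'s, and on $W$ the labels are i.i.d.\ uniform on $[n]$. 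In particular each $\ell^i\in S$ is itself in $P(S)$, so for the event $\{\mathbf T^S(\ell^i)=u_i\ \forall i\}$ to happen we need every $\ell^i$ to be reachable (all $\alpha_j=1$ for $r^j$ on the root-to-$\ell^i$ path) — the union of these paths is all of $P(S)$, so \emph{every} $\alpha_j$ must be $1$, which happens with probability $2^{-(|P(S)|-1)}$ — and then the $|P(S)|-1$ fresh uniform labels must hit the prescribed values $u_1,\dots,u_c$ at the indices $\ell^1,\dots,\ell^c$; since the $\ell^i$ are distinct members of $P(S)$ this has conditional probability $n^{-c}$. Multiplying gives $2^{-|P(S)|+1}/n^c$, as claimed.

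The main obstacle I expect is the bookkeeping in the second paragraph: rigorously arguing that along $P(S)$ no coordinate of any $h_i$ is probed twice (so that $\kappa$-wise independence really kicks in and the ``adversarial resampling attack'' from the proof overview cannot occur), and that the total number of level-$i$ probes along $P(S)$ is genuinely at most $c\tau$ rather than something larger. This needs a precise understanding of how the index $\ell$ and the recursion depth evolve in Algorithm~\ref{algo:extended-walk} — in particular that a write to $T(\ell)$ with $\level(\ell)=i$ is triggered only inside $\extwalk(i,\cdot,\cdot)$, that $\ell\in P(S)$ is checked both before remembering $a_x$ in $D_i$ and before re-probing, and that $P(S)$ is prefix-closed as a set of tree-paths so that the walk never ``skips'' an ancestor it needs. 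Once the freshness-and-counting invariant is nailed down, the probability computation itself is the short part.
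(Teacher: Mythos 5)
Your plan captures several of the right ingredients — the role of the sets $D_i$ and Line~12 in avoiding repeated probes along $P(S)$, the count of at most $c\tau$ level-$i$ indices in $P(S)$ so that $c\tau$-wise independence is the right threshold, and the final product $2^{-(|P(S)|-1)}\cdot n^{-c}$ — and these coincide with what the paper uses. However, your organizing step (``process $r^2,\dots,r^{|P(S)|}$ in \emph{index} order and argue that each new $(\alpha_j,\beta_j)$ is a fresh uniform given everything revealed so far'') is where the argument does not go through, and the obstacle you flag (checking the freshness invariant) is not actually the hard part.

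The real difficulty is this. Suppose $r^j\in P(S)$ has $\level(r^j)=d$, so the walk is about to probe $h_d(a_x)$ with $x = T^S(\pre(r^j))$. The index $\pre(r^j)$ is typically \emph{not} in $P(S)$ (it is the terminus of the intervening off-$P(S)$ lower-level walk), so $a_x$ is a complicated function of $\mathbf{h}_1,\dots,\mathbf{h}_{d-1}$; worse, $a_x$ may coincide with locations where $\mathbf{h}_d$ was already queried at off-$P(S)$ indices, and the algorithm never ``resamples'' those. So ``the $P(S)$-probes are to distinct coordinates'' is not the same as ``each $P(S)$-probe is a fresh random query given the conditioning.'' Moreover, in index order you have already conditioned on level-$<d$ observations $T^S(r^{j'})$ — and those \emph{do} constrain $\mathbf{h}_1,\dots,\mathbf{h}_{d-1}$, whose values feed into $a_x$. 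Arguing that $\mathbf{h}_d(a_x)$ is nonetheless exactly uniform under this interleaved conditioning is precisely the technical work, and your step (3) assumes it rather than proving it. The paper avoids all of this by inducting \emph{downwards on level} rather than in index order: for each $d$ it conditions only on $T^S(B^{\ge d+1})$ (by the inductive hypothesis this leaves $\mathbf{h}_1,\dots,\mathbf{h}_d$ uniform), then \emph{fixes} $h_1,\dots,h_{d-1}$ so the lower-level walk filling the gap between consecutive $P(S)$ level-$d$ probes becomes deterministic, and simulates only the level-$d$ function calls anchored at $P(S)$; the off-$P(S)$ level-$d$ queries never enter the simulation, so repeated locations there are harmless. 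Because the simulation's output distribution does not depend on the choice of $h_1,\dots,h_{d-1}$, Bayes' rule then shows the new conditioning leaves $\mathbf{h}_1,\dots,\mathbf{h}_{d-1}$ uniform, which is what lets the induction descend. This downward-on-level decomposition and the Bayes step are the crux of the lemma, and they are missing from your plan; the freshness-along-$P(S)$ bookkeeping you emphasize is comparatively routine.
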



\begin{proof}
Let $\mathbf{h}_1,\dots, \mathbf{h}_t$ denote the building hashes for $\calH^{n,m,t,c\tau}$. Recall that each of $\mathbf{h}_i$ is $c\tau$-wise independent. For each $d\in [t+1]$, define $B^d\subseteq P(S)$ as the subset of $P(S)$ that contains all level-$d$ indices. Let $B^{\ge d} = \bigcup_{j=d}^{t+1} B^j$. Denote $b_d = |B^{\ge d}|$. Let $r^1,\dots, r^{b_d}$ enumerate all the indices in $B^{\ge d}$. 

\begin{claim}
For every $d\in [t+1]$, the following is true. For every $v_1,\dots, v_{b_d}\in [n]$, conditioning on $\mathbf{T}^S(r^i) = v_i, \forall i\in [b_d]$, $\mathbf{h}_1,\dots, \mathbf{h}_{d-1}$ are still uniformly distributed.
\end{claim}

We prove the claim by \emph{downwards} induction on $d$. For the case $d=t+1$, we have $B^{d} = \{\vec{0}\}$. Since $\mathbf{T}(\vec{0}) = \mathbf{x}$, it is clearly independent of $\mathbf{h}_1,\dots, \mathbf{h}_{t}$. Suppose the claim holds for $d+1\le t+1$, we prove it for the case of $d$. We condition on 
\[
\big[ \mathbf{T}^S(r) = T^S(r), \forall r \in B^{\ge d+1} \big],
\]
where ${T^S(r)}_{r\in B^{\ge d+1}}$ is a list of non-star elements. By the induction hypothesis, $\mathbf{h}_1,\dots, \mathbf{h}_{d}$ are uniformly distributed.

We partition $B^{d}$ into $s$ groups $B^d = \bigsqcup_{j=1}^s B^d_j$, where two indices lie in the same group if and only if they have the same $(t-d)$-suffix. Note that $s\le c$. For each $j\in [s]$, denote $B^d_j = \{ \ell^{d,j,1},\ell^{d,j,2},\dots, \ell^{d,j,|B^d_j|}\}$ where $\ell^{d,j,1} < \ell^{d,j,2} < \dots < \ell^{d,j,|B^d_j|}$. Note that we have $\ell^{d,j,1}_d = 1$ (otherwise $\ell^{d,j,1}$ would not be the first index in $B^d_j$). Let $f^{d,j}$ be the parent of $\ell^{d,j,1}$. Observe that $f^{d,j}\in B^{\ge d+1}$ and we have conditioned on that $\mathbf{T}^S(f^{d,j}) = T^S(f^{d,j}) \ne \star$. For brevity, we also denote $\ell^{d,j,0} := f^{d,j}$.

Suppose $B^d_1,\dots, B^d_s$ are sorted in the order of their indices\footnote{Since indices in a group have the same suffix, indices from different groups do not have interleaving orders.}. For every $j\in [s]$, given $\mathbf{T}^S(f^{d,j}) = T^S(f^{d,j})$, we know that $\walk^S(\mathbf{h},\mathbf{x})$ has a function call of the form $\extwalk(d, T^S(f^{d,j}), f^{d,j})$. 

Fix a list of $h_1,\dots, h_{d-1}$. We study how $(\mathbf{T}^S(\ell^{d,j,p}))_{j\in [s], p\in [|B^d_j|]}$ depends on $\mathbf{h}_d$ by simulating these function calls $\big(\extwalk(d, T^S(f^{d,j}), f^{d,j})\big)_{j\in [s]}$ in order\footnote{There might be other level-$d$ function calls that lie between these calls. They might even depend on $\mathbf{h}_1,\dots, \mathbf{h}_d$ arbitrarily. However, since these calls always return and do not change the set $D_d$, we do not need to observe them.}.

In more detail, the simulation consists of $s$ rounds. It either (1) finds $\mathbf{T}(\ell) = \star$ for some $\ell \in B^d$ and outputs FAIL, or (2) outputs a list $(\mathbf{T}^S(\ell^{d,j,p}))_{j\in [s], p\in [|B^d_j|]}$. The simulation also maintains the set $D_d$, which is empty in the beginning of the simulation. For $j=1,\dots, s$, the simulation runs the While-loop in $\extwalk(d, {T}^S(f^{d,j}),f^{d,j})$ for $|B^d_j|$ turns. For each $p\in [|B^d_j|]$, assume that $\extwalk(d, T^S(f^{d,j}), f^{d,j})$ does not return before the $p$-th turn. Then:
\begin{itemize}
    \item $\extwalk(d, T^S(f^{d,j}), f^{d,j})$ recursively calls $\extwalk(d-1, \mathbf{T}^S(\ell^{d,j,p-1}), \ell^{d,j,p-1})$. Here, $\mathbf{T}^S(\ell^{d,j,p-1})$ might be (1) an entry of $\mathbf{h}_d$, (2) $T^S(\ell^{d,j,0})$ (if $p = 1$), or (3) a random element coming from Line 13 of Algorithm~\ref{algo:extended-walk}.
    \item Having fixed $h_1,\dots, h_{d-1}$, the simulation can simulate $\extwalk(d-1, \mathbf{T}^S(\ell^{d,j,p-1}), \ell^{d,j,p-1})$, which returns a vertex $x^{j,p}$. Note that $x^{j,p}$ only depends on $\mathbf{T}^S(\ell^{d,j,p-1})$ and $h_1,\dots, h_{d-1}$. Now, we consider two cases.
    \begin{itemize}
        \item If $a_{x^{j,p}}\notin D_d$, then the simulation queries $\mathbf{h}_d(a_{x^{j,p}})$. With probability $\frac{1}{2}$, it finds $\mathbf{h}_d(a_{x^{j,p}}) = 0$. In this case, the simulation fails. Otherwise, a new entry of $\mathbf{h}_d$ is probed and it sets $\mathbf{T}^S(\ell^{d,j,p}) = \mathbf{h}_{d}(a_{x^{j,p}})$.
        \item If $a_{x^{j,p}} \in D_d$, again with probability $\frac{1}{2}$ the simulation fails due to $\mathbf{\alpha} = 0$ (Line $13$). Otherwise the simulation samples $\mathbf{T}^S(\ell^{d,j,p})$ to be a uniformly random element from $[n]$.
    \end{itemize}
    \item After this turn of simulation, the simulation updates $D_d \gets D_d \cup \{a_{x^{j,p}}\}$.
\end{itemize}

By the design of Algorithm~\ref{algo:extended-walk}, the simulation simulates (a part of) $\walk^S(\mathbf{h},\mathbf{x})$ faithfully. Note that the simulation consists of $s$ function calls and $|B^d|$ turns in total. For every fixed $h_1,\dots, h_{d-1}$, with probability $2^{-|B^d|}$, the simulation does not fail. We condition on this event. Further observe that the simulation queries at most $|B^d| \le c\tau$ entries from $\mathbf{h}_d$, and each entry of $\mathbf{h}_d$ is used for at most once. Since $\mathbf{h}_d$ is $c\tau$-wise independent, we conclude that the output of the simulation $(\mathbf{T}^S(\ell^{d,j,p}))_{j,p}$ contains independent and uniform elements from $[n]$, \emph{regardless of how $h_1,\dots, h_{d-1}$ behave}.

Now, since the output distribution of the simulation does not change with $h_1,\dots, h_{d-1}$, we can use Bayes' rule to conclude that conditioning on an output of the simulation, the distributions of $\mathbf{h}_1,\dots, \mathbf{h}_{d-1}$ are still uniform. This completes the induction.

To conclude, we use the induction for $t$ turns, which consists of $t$ simulations (one for each layer $d\in [t]$). With probability $\prod_{d=1}^{t} 2^{-|B^d|} = 2^{-|P(S)|+1}$, none of the simulations fails. Conditioning on this event, $(\mathbf{T}^S(\ell))_{\ell \in P(S)}$ contains $|P(S)|$ uniform and independent elements from $[n]$, which implies that $(\mathbf{T}^S(\ell^i))_{\ell^i \in S}$ contains uniform and independent elements. This completes the proof.
\end{proof}

\begin{remark}
If $\mathbf{h}_1,\dots, \mathbf{h}_t$ are $\delta$-almost $c\tau$-wise independent, the same argument still holds up to a small error $O(n^{c\tau} t \delta)$. Since this does not reduce the overall seed length to sample $\mathbf{h}_1,\dots, \mathbf{h}_t$, we do not give the formal proof here.
\end{remark}

\begin{remark}
Lemma~\ref{lemma:extend-is-random} is the \emph{only} place where the pseudorandom property (i.e., bounded independence) of $\mathbf{h}_1,\dots, \mathbf{h}_t$ is exploited. Except for Lemma~\ref{lemma:extend-is-random}, all other claims and lemmas in Section~\ref{sec:cjww-extend} still hold if we fix a list of $h_1,\dots, h_t$ and $x$.
\end{remark}

\subsubsection{Coupling with the standard walk}\label{sec:extended-couple}

Suppose $S = \{\ell^1,\ell^2,\dots, \ell^c \}$ where $\ell^1  < \dots < \ell^c$. Fix $h=(h_1,\dots, h_t)$ and the starting vertex $x$. We discuss how $\walk^S(h,x)$ is related to $\walk^{\std}(h,x)$. Let $T = \walk^\std(h,x)$ and $\mathbf{T}^S \sim \walk^S(h,x)$. Note that $\walk^S(h, x)$ does not deviate from $\walk^{\std}(h, x)$ until
\begin{enumerate}
    \item the walk reaches a point $\ell^4$ where $T(\ell^3) \simeq T(\ell^4)$ for some $\ell^3 < \ell^4$, or
    \item the standard walk makes $\tau$ consecutive level-$i$ moves inside a single call $\extwalk(i,*, *)$.
\end{enumerate}
If neither of the two cases happens before visiting $\ell^c$, with probability $1$ we have $(\mathbf{T}^S(\ell^1),\dots,\mathbf{T}^S(\ell^c)) = (T(\ell^1),\dots, T(\ell^c))$. On the other hand, seeing $\mathbf{T}^S(\ell^i) = u_i$ does not necessarily indicate $T(\ell^i) = u_i$. By the discussion above, there might be two types of ``false positives'', and we have the following two types of refutations for invalid contributions.


\begin{definition}\label{def:type-1}
Let $S$ be a set of indices and $\max S$ be the largest index in $S$. For a list of hashes $h_1,\dots, h_t$ and a starting vertex $x\in [n]$, let $T^S\in \supp(\walk^S(h, x))$. Let $r^1,r^2$ be two indices such that $r^1<r^2 < \max S$. We call $\{r^1,r^2\}$ a \emph{type-$1$ refutation} for $T^S$, if $T^S(r^1) \simeq T^S(r^2)$.
\end{definition}

\begin{definition}\label{def:type-2}
Let $S$ be a set of indices and $\max S$ be the largest index in $S$. For a list of hashes $h_1,\dots, h_t$ and a starting vertex $x\in [n]$, let $T^S\in \supp(\walk^S(h, x))$. Let $r < \max S$ be an index with $\width(r) = \tau$. We call $\{r\}$ a \emph{type-$2$ refutation} for $T^S$, if $T^S(r) \ne \star$.
\end{definition}


Since refutations involve indices (e.g. $r$) that may not belong to $S$. It is not clear how we can analyze $\mathbf{T}^S(r)$. Let $T^S\in \supp(\walk^S(h,a))$. In the following, we propose a notion of ``surgical refutation'' for $T^S$. We show 
\begin{itemize}
    \item If $T^S$ admits at least one refutation (of either type), it also admits a surgical refutation.
    \item It is easy to analyze the probability of having a surgical refutation.
\end{itemize}

\paragraph*{Fine-grained subtraction of invalid contribution.} Now we present the definition and analysis of ``surgical refutations'' mentioned above. In this part, we fix $(h_1,\dots, h_t)$ and $x\in [n]$. For every $T^S\in \supp(\walk^S(h,x))$, we study the structure of $T^S$. We associate with $T^S$ the information $\probe:\mathbb{N}^{t}\to [m]\cup \{\star\}$ and $\pre:\mathbb{N}^{t}\to [m]\cup \{\star\}$. They are defined as follows.
\begin{itemize}
\item For every $\ell \ne \vec{0}$, define
\[
\probe(\ell) = 
\begin{cases}
a_x, & \text{ If the walk is at vertex $x$ before updating $T^S(\ell)$ in Line 24 of Algorithm~\ref{algo:extended-walk}, } \\
\star, & \text{Otherwise (i.e., Algorithm~\ref{algo:extended-walk} has never tried to write $T^S(\ell)$).}
\end{cases}
\]
\item For every $\ell$ with $\probe(\ell) \ne \star$, we define $\pre(\ell) := \sup\{ r : r < \ell, T^S(r) \ne \star\}$. By definition, we have $\probe(\ell) = a_{T^S(\pre(\ell))}$. Intuitively, $\pre(\ell)$ is the last entry of $T^S$ updated before updating $T^S(\ell)$. We only consider $\pre(\ell)$ for those $\ell$'s with $T^S(\ell) \ne \star$. We observe that $\pre(\ell)$ is either the parent\footnote{See Definition \ref{def:walk-tree} for the definition of parent.} of $\ell$, or a descendant of $\ell$'s parent.
\end{itemize}

\begin{definition}\label{def:surgical}
Let $T^S\in \supp(\walk^S(h, x))$. Let $A$ be a set of at most two indices. We call $A$ a \emph{surgical} refutation for $T^S$, if all of the following hold.
\begin{itemize}
    \item $A$ is a type-$1$ or type-$2$ refutation as per Definition~\ref{def:type-1} or~\ref{def:type-2}.
    \item For every $i\in [t]$ and every two distinct level-$i$ indices $r^1, r^2\in P(A) \setminus P(S)$, it holds that $\probe(r^1) \ne \probe(r^2)$.
    \item For every $i\in [t]$, every two level-$i$ indices $r^1\in P(A) \setminus P(S)$ and $r^2\in P(S)$, it holds that $\probe(r^1) \ne \probe(r^2)$.
\end{itemize}
\end{definition}

We say two indices $r^1, r^2$ are \emph{conflicting}, if they are at the same level and $\probe(r^1) = \probe(r^2)$. We establish two lemmas about surgical refutations. First, we shall prove that if a tensor $T^S$ admits a refutation, it must admit a surgical refutation.

\begin{lemma}\label{lemma:exists-surgical}
Let $T^S\in \supp(\walk^S(h, x))$. If $T^S$ admits a refutation of either type, then $T^S$ also admits a surgical refutation.
\end{lemma}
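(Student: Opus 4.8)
I would prove Lemma~\ref{lemma:exists-surgical} by a ``minimality'' argument: among all refutations of $T^S$, pick one that is, in a suitable sense, the ``earliest'' or ``smallest'', and argue that minimality forces the two extra conditions in Definition~\ref{def:surgical} (the no-conflict conditions on $P(A)\setminus P(S)$). So suppose $T^S$ admits some refutation; I would like to extract a surgical one from it.

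\textbf{Step 1: reduce to considering the earliest refutation.} First I would observe that if $A=\{r^1,r^2\}$ is a type-$1$ refutation (i.e.\ $T^S(r^1)\simeq T^S(r^2)$ with $r^1<r^2<\max S$), then among all type-$1$ refutations we may take the one minimizing $r^2$, and then minimizing $r^1$ — equivalently the refutation ``discovered first'' as the extended walk proceeds. Similarly for type-$2$ refutations, take the one minimizing $r$. If both types occur, take whichever is discovered first. The key point of choosing the earliest refutation is: for every index $\omega$ strictly preceding the refutation (in particular every index in $P(A)$ other than the refuting indices themselves), the extended walk up to $\omega$ has \emph{not yet} seen any loop or hit the width-$\tau$ cap, so along $P(A)$ the walk is still ``fresh'' — each $\probe(\cdot)$ value encountered in $P(A)\setminus P(S)$ corresponds to an entry of some $h_i$ that the extended walk is probing, and crucially (by the definition of $\walk^S$, which only records probed coordinates into $D_i$ when $\ell\in P(S)$) the behavior on $P(A)\setminus P(S)$ is governed purely by the fixed hashes $h_1,\dots,h_t$.

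\textbf{Step 2: rule out conflicts via minimality.} Now suppose the earliest refutation $A$ is not surgical, i.e.\ there exist two conflicting indices $r^1,r^2$ at the same level $i$ with $r^1<r^2$, where $r^2\in P(A)\setminus P(S)$ and $r^1\in P(A)\cup P(S)$ (covering both the second and third bullets of Definition~\ref{def:surgical}). Conflicting means $\probe(r^1)=\probe(r^2)$, so $a_{T^S(\pre(r^1))}=a_{T^S(\pre(r^2))}$, i.e.\ $T^S(\pre(r^1))\simeq T^S(\pre(r^2))$. I would argue that $\{\pre(r^1),\pre(r^2)\}$ is itself a refutation, and that it is \emph{earlier} than $A$ — because $\pre(r^2)<r^2\le\max(A)$ and $r^2$ is (by the ancestry structure: $\pre$ of an index is its parent or a descendant of its parent) strictly below $\max S$ or $\max A$ in the walk order, so $\pre(r^1)\simeq\pre(r^2)$ is a type-$1$ refutation discovered before $A$ is. This contradicts minimality. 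The only subtlety: if $\pre(r^1)$ or $\pre(r^2)$ lies in $P(S)$, I still get a genuine type-$1$ refutation as long as the relevant $T^S$ entries are non-star, which holds because $r^1,r^2$ have $\probe\ne\star$ forces their predecessors to have non-star tensor values. One also needs $\pre(r^1)<\pre(r^2)$, which follows from $r^1<r^2$ together with the order-preservation of $\pre$ on indices with non-star entries.

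\textbf{Main obstacle.} The delicate part is the bookkeeping around the ordering ``$<$'' on indices versus the chronological order in which the extended walk visits entries, and making precise what ``the earliest refutation'' means when type-$1$ and type-$2$ refutations are interleaved — plus verifying that when I replace $A$ by $\{\pre(r^1),\pre(r^2)\}$ the new pair still satisfies the defining inequalities of a refutation ($r^1<r^2<\max S$ for type-$1$). I expect I will need the observations recorded after Definition~\ref{def:walk-tree} (that $\pre(\ell)$ is the parent of $\ell$ or a descendant of the parent, and that ancestors of an index with non-star tensor value also have non-star values) to pin down these inequalities. Everything else is a short contradiction argument once the right minimal object is chosen.
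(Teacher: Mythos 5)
Your minimality argument has a genuine gap, and it comes from a case your ``WLOG'' does not actually cover. In Step~2 you take the refutation $A$ with minimal $\max A$, assume a conflict $r^1<r^2$ at level $i$, and set $A'=\{\pre(r^1),\pre(r^2)\}$, arguing $\max A'=\pre(r^2)<r^2\le\max A$. For this you explicitly assume the \emph{larger} conflicting index $r^2$ lies in $P(A)\setminus P(S)$. But the second bullet of Definition~\ref{def:surgical} is symmetric between $P(A)\setminus P(S)$ and $P(S)$, so the conflict may instead have $r^1\in P(A)\setminus P(S)$ and $r^2\in P(S)\setminus P(A)$ with $r^1<r^2$ and $r^2>\max A$. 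In that configuration $r^2$ is only constrained to lie in $P(S)$, hence $r^2\le\max S$, and $\pre(r^2)<r^2$ gives $\max A'<\max S$ but not $\max A'<\max A$. Nothing in the walk-tree structure forces $\pre(r^2)<\max A$, so the minimality contradiction does not fire in this case, and the argument does not terminate.

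The paper's proof sidesteps this by using a different well-founded quantity, namely the largest level $i$ at which the surgical condition is violated. From any type-$1$ refutation it picks the first conflicting level-$i$ index $r^q\in P(A)\setminus P(S)$ and replaces $A$ by $A'=\{\pre(r^q),\pre(r^p)\}$ (or $\{\pre(r^q),\pre(r^*)\}$ with $r^*\in P(S)$), and the key structural fact is that $P(A')$ restricted to levels $\ge i$ stays inside $P(A)\cup P(S)$ — because $\pre$ of a level-$i$ index is its parent or a descendant of the parent at level $<i$. Hence the set of violated levels strictly moves downward, and after at most $t$ rounds the refutation is surgical. That potential is insensitive to whether $\pre(r^*)$ exceeds $\max A$, which is exactly what rescues the case you miss. (Your Case-1-style observation — that when there are no type-$1$ refutations every type-$2$ refutation is automatically surgical — is correct and matches the paper.) To salvage the pure minimality route you would need a well-founded ordering on refutations that demonstrably decreases under the replacement $\{\pre(r^1),\pre(r^2)\}$ in \emph{all} configurations including $r^2\in P(S)\setminus P(A)$; $\max A$ alone does not do it.
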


\begin{proof}
Depending on whether $T^S$ admits any type-$1$ refutation, we consider two cases. 

\paragraph*{Case 1.} Suppose $T^S$ does not admit any type-$1$ refutation. Then for every $r^1<r^2 < \max(S)$, it holds that $T^S(r^1)\not\simeq T^S(r^2)$. Consequently, we have $\probe(r^1)\ne \probe(r^2)$ for every $r^1 < r^2\le \max(S)$. Hence, any type-$2$ refutation for $T^S$ would be a surgical refutation.

\vspace{-3mm}
\paragraph*{Case 2.} Now let us consider the case that $T^S$ admits at least one type-$1$ refutation. Take $A = \{\ell^1, \ell^2\}$ to be an arbitrary type-$1$ refutation. If $A$ is a surgical refutation, we are done. Otherwise, take the \emph{largest} $i\in [t]$ such that the surgical requirement (Definition~\ref{def:surgical}) is violated at level $i$. We try to find a new refutation based on $A$, as follows.

Suppose there are $b > 0$ level-$i$ indices in $P(A)\setminus P(S)$. We sort them in the increasing order. Let $r^1 < r^2 < \dots < r^b$ denote the sorted sequence. Choose the first $q\in [b]$ such that one of the following holds.
\begin{itemize}
    \item There is $p < q$ such that $r^q$ is conflicting with $r^p$.
    \item There is $r^*\in P(S)$ such that $r^q$ is conflicting with $r^*$.
\end{itemize}
In the former case, let $\ell^3 = \pre(r^q), \ell^4 = \pre(r^p)$. In the latter case, let $\ell^3 = \pre(r^q), \ell^4 = \pre(r^*)$. 

Now we consider $A' = \{\ell^3, \ell^4\}$. Since $T^S(\ell^3)\simeq T^S(\ell^4)$, $A'$ is a refutation. For every $r\in P(A')$ of level $j > i$, we have $r\in P(A) \cup P(S)$. Therefore, by our choice of $i$, indices in $P(A')\setminus P(S)$ of level larger than $i$ are not conflicting with indices in $P(A')\cup P(S)$. For every level-$i$ index $r\in P(A')\setminus P(S)$, we have $r = r^u$ for some $u < q$. By our choice of $q$, $r$ is not conflicting with indices in $P(A') \cup P(S)$.

Therefore, given a non-surgical refutation that violates the requirement at level $i$, we can find a new refutation $A'$ for $T^S$ where the surgical requirement is satisfied for every level $j\ge i$. If $A'$ still fails to be a surgical refutation, we take the \emph{largest} $j < i$ such that the surgical requirement is violated at level $j$. We use the same procedure to replace $A'$ with another refutation that satisfies the requirement for level $j$ and above. Since there are only $t$ levels in total, we can repeat this process until finding a surgical refutation. This completes the proof.
\end{proof}

Our second lemma shows that surgical refutations are easier to analyze.

\begin{lemma}\label{lemma:surgical-upper-bound}
Fix $h=(h_1,\dots, h_t)$ and $x\in [n]$. Let $S = \{\ell^1,\dots, \ell^c\}$ be a set of $\tau$-bounded indices and $A$ be a set of at most two indices. Let $\mathbf{T}^S \sim \walk^S(h, x)$ and $\mathbf{T}^{S\cup A} \sim \walk^{S\cup A}(h, x)$. Then, for every $u_1,\dots, u_c \in [n]$, it holds that
\[
\begin{aligned}
& ~~~~ \Pr_{\mathbf{T}^S}\left[ \left( \forall i\in [c],  \mathbf{T}^S(\ell^i)  = u_i  \right) \land A \text{ is a surgical refutation for $\mathbf{T}^S$} \right] \\
&\le \Pr_{\mathbf{T}^{S\cup A}}\left[ \left( \forall i\in [c],  \mathbf{T}^{S\cup A}(\ell^i)  = u_i  \right) \land A \text{ is a refutation for $\mathbf{T}^{S\cup A}$} \right].
\end{aligned}
\]
\end{lemma}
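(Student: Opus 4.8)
The plan is to exhibit a coupling between $\walk^S(h,x)$ and $\walk^{S\cup A}(h,x)$ under which, whenever $A$ is a surgical refutation for $\mathbf{T}^S$ and $\mathbf{T}^S(\ell^i)=u_i$ for all $i$, the coupled run of $\walk^{S\cup A}$ also satisfies $\mathbf{T}^{S\cup A}(\ell^i)=u_i$ for all $i$ \emph{and} $A$ is a refutation for $\mathbf{T}^{S\cup A}$. Once such a coupling exists, the claimed inequality is immediate, since the event on the left is then a subevent of the event on the right under the coupling. The natural coupling is the ``obvious'' one: run both walks with the same hashes $h_1,\dots,h_t$ and the same starting vertex $x$, and feed them the \emph{same} stream of internal random bits, i.e.\ whenever either algorithm reaches Line~13 and samples $(\alpha,\beta)\sim\{0,1\}\times[n]$, it reads from a common tape. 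The only difference between the two runs is the defining set of indices, $S$ versus $S\cup A$, which affects (i) which indices are treated as lying in $P(\cdot)$ on Line~12 and Line~22, and (ii) possibly the support/domain issues around $\max S$ versus $\max(S\cup A)$.

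The key step is a structural claim: \emph{under this coupling, if $A$ is a surgical refutation for $\mathbf{T}^S$, then the two runs make exactly the same moves and write exactly the same values at every index of $P(S)$, and moreover the run of $\walk^{S\cup A}$ actually reaches and correctly records the indices of $A$ so that $A$ witnesses $T^{S\cup A}(\ell^3)\simeq T^{S\cup A}(\ell^4)$ (type-1) resp.\ $T^{S\cup A}(r)\ne\star$ (type-2).} To prove this I would proceed by induction along the execution order of the walk (equivalently, in increasing order of indices, processing $P(S\cup A)=P(S)\cup P(A)$). The two runs agree on $P(S)$ trivially until the first index $\ell\in P(A)\setminus P(S)$ is encountered; at such an index $\walk^{S\cup A}$ now enters the ``$\ell\in P(S\cup A)$'' branch on Line~12, so it either accesses a \emph{fresh} entry $h_i(a_x)$ (if $a_x\notin D_i$) or resamples. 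The surgical condition is exactly what guarantees the first alternative happens whenever $\walk^S$ took the ``$h_i(a_x)$'' branch: the first two bullets of Definition~\ref{def:surgical} say that no level-$i$ index in $P(A)\setminus P(S)$ has its $\probe$-value colliding with another level-$i$ index in $P(A)\setminus P(S)$ or in $P(S)$, so the accumulated sets $D_i$ (which in the $\walk^{S\cup A}$ run now also collect $a_x$ along $P(A)$, Line~22) never cause a resample at an index of $P(A)$ that would have been a genuine hash access in $\walk^S$. Hence along $P(A)\setminus P(S)$ the $\walk^{S\cup A}$ run reads only previously-unqueried entries of the $h_i$'s, and these entries are the same in both runs (same $h$), so it reproduces exactly the moves that $\walk^{S}$ would have made had it not halted early. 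In particular $\walk^{S\cup A}$ does reach $\max(S\cup A)$ along $P(A)$ and records $A$ faithfully, making $A$ a refutation for $\mathbf{T}^{S\cup A}$. Finally, the values written at indices of $P(S)$ are unaffected: adding indices of $A$ to the ``remembered'' sets $D_i$ can only matter if some $a_x$ along $P(A)$ equals some $a_{x'}$ queried later along $P(S)$, but such a coincidence is again ruled out by the third bullet of Definition~\ref{def:surgical} (no level-$i$ index of $P(A)\setminus P(S)$ conflicts with a level-$i$ index of $P(S)$), together with the observation that $\max S$-truncation in $\walk^S$ and $\max(S\cup A)$-truncation in $\walk^{S\cup A}$ agree on $P(S)$ since every index of $A$ is below $\max(S\cup A)$ by definition of a refutation.

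I expect the main obstacle to be the bookkeeping around the interaction of the two $D_i$-update rules (Line~22: remember $a_x$ only when $\ell\in P(\cdot)$) with the interleaving of $P(S)$ and $P(A)$ indices in execution order, and in particular arguing carefully that $\walk^{S\cup A}$ never ``resamples'' at an index where $\walk^S$ made a genuine hash move, and vice versa — this is precisely where all three clauses of the surgical definition get used, and where the downward/level-by-level structure of the walk tree (an index's $\probe$ depends only on $\pre(\ell)$, which is its parent or a descendant of its parent) must be invoked to control which queries have already landed in $D_i$ by the time a given index is processed. A secondary technicality is handling the type-2 case uniformly with the type-1 case: for a type-2 refutation $A=\{r\}$ with $\width(r)=\tau$, one must check that $\walk^{S\cup A}$, which only makes at most $\tau$ level-$i$ moves per call, still reaches index $r$ (i.e.\ $r$ is on a path all of whose widths are at most $\tau$, which holds because $r$ is $\tau$-bounded and surgical) and writes a non-star value there, so that $\{r\}$ is a refutation for $\mathbf{T}^{S\cup A}$. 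Once these points are nailed down, summing the coupled indicator inequality over the probability space gives the stated bound.
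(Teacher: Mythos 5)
Your proposal is correct and follows essentially the same route as the paper: the paper fixes a realization $T^*$ of $\mathbf{T}^S$ satisfying the stated conditions and shows $\Pr[\mathbf{T}^S = T^*] = \Pr[\mathbf{T}^{S\cup A} = T^*]$ by observing that, since $A$ is surgical, $\walk^{S\cup A}$ performs exactly the same Line-13 resamples as $\walk^S$; your shared-tape coupling is the same observation phrased via coupled random variables rather than a pointwise-probability bijection. The central structural insight — the non-conflict clauses of Definition~\ref{def:surgical} rule out any ``extra'' resample in $\walk^{S\cup A}$ triggered by a probe collision involving $P(A)\setminus P(S)$, so the two walks remain in lockstep on $P(S\cup A)$ — is identical in both proofs.
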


Intuitively, Lemma~\ref{lemma:surgical-upper-bound} says that we can reduce the question of analyzing surgical refutations to analyzing general refutations in a related extended walk (i.e., $\walk^{S\cup A}$). The latter question is much easier: since we only observe the ``extended'' part in $\walk^{S\cup A}$, we can upper bound it just by Lemma~\ref{lemma:extend-is-random}. 

\begin{proof}
Let $T^* \in \supp(\walk^S(h,x))$ be an instantiation of $\mathbf{T}^S$ satisfying the predicate in the lemma statement. Namely, $T^*(\ell^i) = u_i$ for every $i\in [c]$ and $A$ is a surgical refutation for $T^*$ with respect to $S$. We claim
\[
\Pr_{\mathbf{T}^S}[\mathbf{T}^S = T^*] = \Pr_{\mathbf{T}^{S\cup A}}[\mathbf{T}^{S\cup A} = T^*].
\]
Taking a summation over all such $T^*$ would conclude the proof, as
\[
\begin{aligned}
& ~~~~ \Pr_{\mathbf{T}^S}\left[ \left( \forall i\in [c],  \mathbf{T}^S(\ell^i)  = u_i  \right) \land A \text{ is a surgical refutation for $\mathbf{T}^S$} \right] \\
& = \sum_{T^*} \Pr_{\mathbf{T}^S}[\mathbf{T}^S = T^*] \cdot \mathbbm{1}\left\{\left( \forall i\in [c],  T^*(\ell^i)  = u_i  \right) \land A \text{ is a surgical refutation for $T^*$}\right\} \\
& \le \sum_{T^*} \Pr_{\mathbf{T}^{S\cup A}}[\mathbf{T}^{S\cup A} = T^*] \cdot \mathbbm{1}\left\{\left( \forall i\in [c],  T^*(\ell^i)  = u_i  \right) \land A \text{ is a refutation for $T^*$}\right\} \\ 
&= \Pr_{\mathbf{T}^{S\cup A}}\left[ \left( \forall i\in [c],  \mathbf{T}^{S\cup A}(\ell^i)  = u_i  \right) \land A \text{ is a refutation for $\mathbf{T}^{S\cup A}$} \right].
\end{aligned}
\]


Now we justify the claim. Imagine running $\walk^{S}(h, x)$ and $\walk^{S\cup A}(h, x)$ in parallel. During the execution of $\walk^S(h,a)$ and $\walk^{S\cup A}(h,a)$, we say a \emph{regular conflict} happens at level $i$, if there are two level-$i$ vertices $r^1, r^2 \in P(S)$ such that $r^1\ne r^2$ and $\probe(r^1) = \probe(r^2)$. We say an \emph{extra conflict} happens at level $i$, if there are two distinct level-$i$ indices $r^1,r^2\in P(A\cup S)$ such that $\probe(r^1) = \probe(r^2)$ and $\{r^1,r^2\} \not\subseteq P(S)$. Note that both $\walk^S$ and $\walk^{S\cup A}$ need to sample a one-time edge (Line $13$ of Algorithm~\ref{algo:extended-walk}) whenever there is a regular conflict. In addition, $\walk^{S\cup A}$ also needs to sample a one-time edge for each extra conflict.

To produce $T^*$ in $\walk^S(h, x)$, there is only one way to sample edges for regular conflicts. Moreover, since $A$ is a surgical refutation for $T^*$ \footnote{We stress that the definition of surgical refutation for $T^*$ always implicitly depends on $S$.}, by Definition~\ref{def:surgical}, there is no extra conflict in producing $T^*$. Therefore, if $\walk^{S\cup A}(h, x)$ samples the same edge as $\walk^S(h, x)$ does for each regular conflict, $\walk^{S\cup A}$ will not encounter any extra conflict and will output the same result $T^*$. This shows that $\Pr_{\mathbf{T}^S}[\mathbf{T}^S = T^*] = \Pr_{\mathbf{T}^{S\cup A}}[\mathbf{T}^{S\cup A} = T^*]$ as desired.
\end{proof}

Combining Lemma~\ref{lemma:exists-surgical}, \ref{lemma:surgical-upper-bound} and the discussion, the following lemma is established.

\begin{lemma}\label{lemma:extend-lb-standard}
For every list of hashes $h_1,\dots, h_t$ and starting vertex $x\in [n]$, consider the standard walk $T = \walk^{\std}(h, x)$. For every set $S= \{ \ell^1 < \ell^2 < \dots < \ell^c\}$ of $\tau$-bounded indices and every $u_1,\dots, u_c\in [n]$, it holds that
\[
\begin{aligned}
&~~~~ \mathbbm{1}\left\{ T(\ell^i) = u_i, \forall i\in [c] \right\} \\
&\ge \Pr_{\mathbf{T}^S\sim \walk^S(h, x)} \left[ \mathbf{T}^S(\ell^i) = u_i, \forall i\in [c] \right] -\\
&~~~~ \sum_{\substack{r^1 < r^2 < \ell^c \\ A = \{r^1, r^2\}}} \Pr_{\mathbf{T}^{S\cup A} \sim \walk^{S\cup A}(h, x)} \left[ \left(  \mathbf{T}^{S\cup A}(\ell^i) = u_i, \forall i\in [c] \right) \land (\mathbf{T}^{S\cup A}(r^1) \simeq \mathbf{T}^{S\cup A}(r^2)) \right] - \\
&~~~~ \sum_{\substack{r^3 < \ell^c \\ A = \{r^3\}: \width(r^3) = \tau} } \Pr_{\mathbf{T}^{S\cup A} \sim \walk^{S\cup A}(h, x)} \left[ \left(  \mathbf{T}^{S\cup A}(\ell^i) = u_i, \forall i\in [c] \right) \land \mathbf{T}^{S\cup A}(r^3)\ne \star \right].
\end{aligned}
\]
\end{lemma}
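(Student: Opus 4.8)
The plan is to argue pointwise, for fixed $h_1,\dots,h_t$ and $x$, and then bound each of the three terms on the right-hand side by comparing the standard walk against the family of extended walks, just as the preceding lemmas set up. The key dichotomy is whether the standard walk $T = \walk^\std(h,x)$ ``agrees'' with an idealized run of $\walk^S(h,x)$ up to the index $\ell^c = \max S$. First I would observe that $\walk^S(h,x)$ coincides with $\walk^\std(h,x)$ on all indices $\le \ell^c$ \emph{unless} one of the two deviation events occurs before visiting $\ell^c$: either (Case~1) the walk encounters a loop, i.e.\ there exist $r^1 < r^2 < \ell^c$ with $T(r^1)\simeq T(r^2)$, or (Case~2) some function call $\extwalk(i,\ast,\ast)$ makes $\tau$ consecutive level-$i$ moves, i.e.\ there exists $r^3 < \ell^c$ with $\width(r^3)=\tau$ and $T(r^3)\ne\star$. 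If neither happens, then with probability $1$ over the internal randomness of $\walk^S$ we have $\mathbf{T}^S(\ell^i) = T(\ell^i)$ for all $i\in[c]$, so $\mathbbm{1}\{T(\ell^i)=u_i,\forall i\}$ already dominates $\Pr_{\mathbf{T}^S}[\mathbf{T}^S(\ell^i)=u_i,\forall i]$, and the two subtracted sums are a pure loss; the inequality holds trivially.

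Next I would handle the case where at least one of the two deviation events occurs. The point is that in this case the event $\{\mathbf{T}^S(\ell^i)=u_i,\forall i\}$ inside $\walk^S$ may fire ``spuriously'' (a false positive relative to the standard walk), and we must pay for it via the subtracted terms. Concretely, whenever $\{\forall i,\ \mathbf{T}^S(\ell^i)=u_i\}$ holds for a particular instantiation $T^* \in \supp(\walk^S(h,x))$, that $T^*$ admits a refutation of type~1 or type~2 (since one of the deviation events occurred, and these are exactly the type-1/type-2 refutation conditions of Definitions~\ref{def:type-1} and~\ref{def:type-2}). By Lemma~\ref{lemma:exists-surgical}, $T^*$ then admits a \emph{surgical} refutation $A=\{r^1,r^2\}$ (type~1, with $r^1<r^2<\ell^c$) or $A=\{r^3\}$ (type~2, with $\width(r^3)=\tau$, $r^3<\ell^c$). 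Charging each such $T^*$ to \emph{some} surgical refutation $A$ it admits, and then summing over all candidate $A$'s (overcounting is fine since we only need an upper bound on the ``bad'' probability mass), we get
\[
\Pr_{\mathbf{T}^S}\!\left[\forall i,\ \mathbf{T}^S(\ell^i)=u_i\right]
\le \sum_{A}\Pr_{\mathbf{T}^S}\!\left[\left(\forall i,\ \mathbf{T}^S(\ell^i)=u_i\right)\land A\text{ is a surgical refutation for }\mathbf{T}^S\right],
\]
where $A$ ranges over pairs $\{r^1,r^2\}$ with $r^1<r^2<\ell^c$ and singletons $\{r^3\}$ with $r^3<\ell^c$, $\width(r^3)=\tau$. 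Then apply Lemma~\ref{lemma:surgical-upper-bound} term-by-term to replace each surgical-refutation probability under $\walk^S$ by the corresponding (weaker) refutation probability under $\walk^{S\cup A}$; for $A=\{r^1,r^2\}$ the event ``$A$ is a refutation'' is exactly $\mathbf{T}^{S\cup A}(r^1)\simeq\mathbf{T}^{S\cup A}(r^2)$, and for $A=\{r^3\}$ it is exactly $\mathbf{T}^{S\cup A}(r^3)\ne\star$. Rearranging gives precisely the claimed inequality, since in this case the left side $\mathbbm{1}\{\dots\}$ is $\ge 0$ and the right side is $\le 0$.

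The main obstacle I expect is the bookkeeping around which deviation-event/refutation a given $T^*$ is charged to, and making sure the charging is consistent across the two cases without double-subtracting or under-subtracting. In particular one must be careful that: (i) in the ``good'' case the subtracted sums really are harmless (they are nonnegative, so subtracting them only weakens the right-hand side — this is fine because we're proving a $\ge$ lower bound on the indicator); (ii) in the ``bad'' case, every $T^*$ with $\forall i,\ \mathbf{T}^S(\ell^i)=u_i$ genuinely admits \emph{at least one} refutation so that Lemma~\ref{lemma:exists-surgical} applies — this follows because if $T^*$ admitted no type-1 and no type-2 refutation, then the run never deviated from the standard walk on indices $\le\ell^c$, forcing $T(\ell^i)=u_i$ and landing us back in the good case; and (iii) the index ranges in the two subtracted sums in the lemma statement match exactly the ranges of $A$ produced by Lemma~\ref{lemma:exists-surgical} (both require $r<\ell^c=\max S$). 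Once these three points are nailed down, combining the good and bad cases yields the lemma in both regimes simultaneously, and no further computation is needed.
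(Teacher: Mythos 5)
The proposal is correct and follows essentially the same approach as the paper: charge each ``false-positive'' realization $T^*$ of $\walk^S$ to a surgical refutation via Lemma~\ref{lemma:exists-surgical}, then apply Lemma~\ref{lemma:surgical-upper-bound} to bound that charge by the corresponding probability under $\walk^{S\cup A}$. The paper's own proof is slightly shorter, casing directly on whether the left-hand indicator is $0$ or $1$ rather than on whether $T$ suffers a deviation event before $\ell^c$, but the extra case split you perform is harmless and the core argument is identical.
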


\begin{proof}
If $\mathbbm{1}\left\{ T(\ell^i) = u_i, \forall i\in [c] \right\} = 1$, the inequality holds because the right hand side is bounded by $1$. Now suppose that $\mathbbm{1} \left\{ T(\ell^i) = u_i, \forall i\in [c] \right\} = 0$. For every $T^*\in \supp(\walk^S(h,x))$ such that $T^*(\ell^i) = u_i, \forall i\in [c]$, by Lemma~\ref{lemma:exists-surgical} there is a surgical refutation $A$ for $T^*$. By Lemma~\ref{lemma:surgical-upper-bound}, the contribution from $T^*$ is subtracted in the probability term regarding $\mathbf{T}^{S\cup A} \sim \walk^{S\cup A}(h,x)$.
\end{proof}

Lemma~\ref{lemma:extend-lb-standard} is somewhat remarkable in that it holds for every fixed $h$ and $x$. One can extend Lemma~\ref{lemma:extend-lb-standard} to an average version by taking expectations over $\mathbf{h}$ and $\mathbf{x}$ (We will use the average version in all applications).

\subsection{Wrapping-up}\label{sec:cjww-wrapup}

We are ready to prove Lemma~\ref{lemma:single-ub} and \ref{lemma:double-lb}. We start with the easier one.

\begin{reminder}{Lemma~\ref{lemma:single-ub}}
    For every $t \le \frac{1}{2}\log n$, let $\kappa = 20\log n$. Sample $\mathbf{h}\sim \calH^{n,m,t,\kappa}$ and $\mathbf{x}\sim [n]$. Then for every $u\in [n]$, it holds that:
    \[
    \Pr_{\mathbf{h}, \mathbf{x}}[u\in \Out_{a,\mathbf{h}}(\mathbf{x})] \le O\left(\frac{2^t}{n}\right).
    \]
\end{reminder}

\begin{proof}
For every fixed $h_1,\dots, h_t$ and $x$, consider $T = \walk^{\std}(h,x)$.
\begin{itemize}
    \item We say $T$ is a good walk, if $\walk^{\std}(h, x)$ does not make $\tau$ consecutive level-$t$ moves inside a function call. For every $\tau$-bounded index $\ell$ such that $T(\ell) = u$, we have $\mathbf{T}^{\{\ell\}}(\ell) = u$ almost surely\footnote{Recall $h_1,\dots,h_t$, $x$ have been fixed. The probability is only over the internal randomness of $\walk^{\{\ell\}}$. Moreover, since $T$ is good and $T(\ell) \ne \star$, we know $\walk^{\{\ell\}}$ does not deviate from $\walk^{\std}$ before reaching $\ell$.}. We can upper-bound the probability that $T$ is good and hits $u$ by summing up $\sum_{\ell} \Pr[\mathbf{T}^{\{\ell\}}(\ell) = u]$.
    \item Otherwise, we say $T$ is a bad walk. For a worst case analysis, we may assume that a bad walk always hits $u$. To bound the probability of being bad, suppose $\ell$ is the smallest index with $\width(\ell) = \tau$ and $T(\ell) \ne \star$. Then, we have $\mathbf{T}^{\{\ell\}}(\ell)\ne \star$ almost surely (because $\walk^{\{\ell\}}$ does not deviate from $\walk^{\std}$ before reaching $\ell$). Therefore, we can upper-bound the probability of being a bad walk by $\sum_{\ell:\width(\ell)=\tau}\Pr[\mathbf{T}^{\{\ell\}}(\ell)\ne \star]$.
\end{itemize}
Now we take expectation over $\mathbf{h}$ and $\mathbf{x}$. We have
\[
\Pr_{\mathbf{h},\mathbf{x}}[u\in \Out_{a,\mathbf{h}}(\mathbf{x})] = \sum_{\ell} \Pr_{\mathbf{T}^{\ell}\sim \walk^{\{\ell\}}(\mathbf{h},\mathbf{x})}[\mathbf{T}^{\{\ell\}} = u] + \sum_{\ell:\width(\ell) = \tau} \Pr_{\mathbf{T}^{\ell}\sim \walk^{\{\ell\}}(\mathbf{h},\mathbf{x})}[ \mathbf{T}^{\{\ell\}}(\ell)\ne \star]\\
\]
Using Lemma~\ref{lemma:extend-is-random}, the first term is bounded by
\[
\sum_{\ell}\frac{2^{1-|P(\ell)|}}{n} = \frac{1}{n} \left( \prod_{i=1}^{t} \left( \sum_{\ell_i=0}^{\infty}  2^{-\ell_i} \right)\right) \le \frac{2^t}{n}.
\]
For the second term, we enumerate $d\in[t]$ such that $\ell_d = \tau$. Recall $\tau = 5\log n$. Using Lemma~\ref{lemma:extend-is-random} again, we can bound the second term by
\[
\sum_{d=1}^{\tau} 2^{-\tau}\cdot 2^{t} \le 2^{-\tau}\cdot \tau \cdot2^{t} \le \frac{1}{n^3}.
\]
Combing two bounds together completes the proof.
\end{proof}

Before proving Lemma~\ref{lemma:double-lb}, we need one more technical lemma and its corollary. The proofs are in Appendixs~\ref{app:technical}.

\begin{lemma}\label{theo:counting}
For any fixed positive integer $c$ and $t$, let $S^t$ be the set of all $t$-dimensional indices. Denote $f(c,t) = \sum_{S\subseteq S^t, |S|=c}2^{1-|P^{(t)}(S)|}$. Then, we have
\[
f(c,t) \le 2^{c\cdot t}.
\]
\end{lemma}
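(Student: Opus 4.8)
The plan is to prove the bound $f(c,t) \le 2^{ct}$ by induction on $t$, peeling off the top level of the walk tree. Recall that every index $\ell \in S^t$ is a $t$-tuple $(\ell_1,\dots,\ell_t)$, and for a set $S$ of $c$ such indices, $|P^{(t)}(S)|$ counts the vertices on the union of the root-to-$\ell$ paths in $\calT^{(t)}$. The key structural observation is that these paths decompose cleanly by the top coordinate: if we group the indices of $S$ according to the value of their last coordinate $\ell_t$, say the distinct values occurring are $v_1 < v_2 < \dots < v_s$ (so $s \le c$), then the portion of $P^{(t)}(S)$ coming from ``level-$t$ steps'' is exactly $1 + v_s$ vertices (the root $\vec 0$ together with the level-$t$ indices $(0,\dots,0,1), \dots, (0,\dots,0,v_s)$ that must be traversed), while the remaining vertices split into $s$ independent sub-trees, one hanging off each $(0,\dots,0,v_j)$, each of which is a copy of $\calT^{(t-1)}$ carrying the $(t-1)$-dimensional prefixes of the indices with last coordinate $v_j$.

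Concretely, I would write $|P^{(t)}(S)| = (1 + v_s) + \sum_{j=1}^{s} |P^{(t-1)}(S_j)|$, where $S_j$ is the multiset of length-$(t-1)$ prefixes $\ell_{\le t-1}$ over indices $\ell \in S$ with $\ell_t = v_j$ (one should check distinctness of prefixes within a group, which holds since the full indices are distinct and share the same last coordinate; if $c_j := |S_j|$, then $\sum_j c_j = c$). Plugging into the sum defining $f(c,t)$ and summing over the choices of $v_1 < \dots < v_s$ and the partition of the $c$ indices into the groups, I get roughly
\[
f(c,t) \;\le\; \sum_{s=1}^{c} \;\sum_{\substack{c_1 + \dots + c_s = c \\ c_j \ge 1}} \binom{c}{c_1,\dots,c_s} \Bigl(\sum_{0 \le v_1 < \dots < v_s} 2^{-(1+v_s)}\Bigr) \prod_{j=1}^{s} f(c_j, t-1),
\]
where the multinomial coefficient accounts for assigning labeled indices to groups and the inner geometric-type sum $\sum_{0 \le v_1 < \dots < v_s} 2^{-v_s}$ is a convergent series bounded by an absolute constant depending only on $s$ (it equals $\binom{?}{?}$-type expression; in any case it is at most $2$ for all $s$, since $\sum_{v_s \ge s-1} \binom{v_s}{s-1} 2^{-v_s}$ is a bounded quantity — I would just bound it crudely by, say, $4$ or prove it is $\le 2^{s}$, whichever makes the arithmetic cleanest). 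Using the induction hypothesis $f(c_j, t-1) \le 2^{c_j(t-1)}$, the product telescopes to $2^{(t-1)\sum_j c_j} = 2^{(t-1)c}$, and what remains is a purely combinatorial factor $\sum_s \sum_{\text{partitions}} \binom{c}{c_1,\dots,c_s} \cdot (\text{const})^s$ which is a constant depending only on $c$ (at most $c! \cdot c \cdot (\text{const})^c$ or so). The base case $t = 1$ is direct: $S^1 = \mathbb{N}$, $P^{(1)}(S) = \{0,1,\dots,\max S\}$ has size $1 + \max S$, and $f(c,1) = \sum_{0 \le v_1 < \dots < v_c} 2^{-\max\{v_i\}} \cdot 2$, a convergent sum bounded by $2^c$ after a short calculation.

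The main obstacle, I expect, is bookkeeping the constant: a naive induction gives $f(c,t) \le C(c)^t$ for some constant $C(c) > 2^c$, which is too weak — we need the base of the exponent to be exactly $2^c$, with all the genuinely-$c$-dependent slack absorbed into a multiplicative constant that does \emph{not} grow with $t$. To get this, the induction hypothesis must be strengthened to something like $f(c,t) \le g(c) \cdot 2^{ct}$ for a suitable function $g$ with $g(1) \le$ (small), and one must verify that the recursion $g(c) \ge \sum_s \sum_{\text{partitions}} \binom{c}{c_1,\dots,c_s}(\text{const})^s \prod_j g(c_j)$ is satisfiable (e.g. by taking $g(c) = K^c \cdot c!$ for a large enough absolute constant $K$, or by a generating-function argument). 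Alternatively — and this may be cleaner — one avoids the constant entirely by proving the sharper statement $f(c,t) \le 2^{ct}$ directly via the observation that the ``$2^{1-|P(S)|}$'' weights are exactly the probability (up to the factor of $2$) that the $c$ specified leaves are all reached in a walk tree with i.i.d.\ $\mathrm{Geom}(1/2)$ branching, so $\sum_{|S|=c} 2^{-|P(S)|}$ is at most (number of ways to pick $c$ reached vertices in expectation) which one bounds by $\Ex[\binom{N_t}{c}]$ where $N_t$ is the total number of tree vertices reached at depth $\le t$; since $\Ex[N_t] = 2^t$ and higher factorial moments of this branching process are controlled, $\Ex[\binom{N_t}{c}] \le \binom{O(2^t)}{c} = O(2^{ct})$. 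I would present whichever of these two routes — careful strengthened induction, or the branching-process moment calculation — yields the stated clean bound $2^{ct}$ with the least arithmetic; given the paper's style I lean toward the explicit recursive counting with the strengthened hypothesis.
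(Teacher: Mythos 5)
Your decomposition by the top coordinate $\ell_t$ is exactly the paper's decomposition, and a single induction on $t$ (maintaining the bound for all $c$ simultaneously) does close, so the overall strategy is sound. However, the writeup as given does not constitute a proof, for three concrete reasons.

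First, the path-count formula is off: when you peel off the level-$t$ spine, the vertex $(0,\dots,0,v_j)$ is both the $v_j$-th vertex on the spine \emph{and} the root of the $j$-th hanging copy of $\calT^{(t-1)}$, so the correct identity is
\[
|P^{(t)}(S)| = (1+v_s) + \sum_{j=1}^{s}\bigl(|P^{(t-1)}(S_j')|-1\bigr),
\]
which is what the paper uses (cf.\ ``$|P^{(t)}(S)|-1 = k + \sum_i(|P^{(t-1)}(S_i')|-1)$''). Your version $|P^{(t)}(S)| = (1+v_s) + \sum_j |P^{(t-1)}(S_j')|$ overcounts by $s$, which makes $2^{1-|P^{(t)}(S)|}$ off by $2^s$ and invalidates the claimed recursion. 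Second, the multinomial coefficient $\binom{c}{c_1,\dots,c_s}$ should not be there: once the $\ell_t$-values $v_1<\dots<v_s$ and the subsets $S_j'\subseteq S^{t-1}$ are chosen, the subset $S\subseteq S^t$ is determined uniquely, so the decomposition is a bijection, not an overcounting map. Third, and most importantly, the bound does \emph{not} survive ``bounding the geometric sum crudely by $4$''; the identity $\sum_{0\le v_1<\dots<v_s} 2^{-v_s} = \sum_{k\ge s-1}\binom{k}{s-1}2^{-k} = 2$ (the paper's Lemma~\ref{lemma:combinatorics}) is essential. With it, the corrected recursion reads $f(c,t)=2\sum_{s}\sum_{c_1+\dots+c_s=c,\,c_j\ge1}\prod_j f(c_j,t-1)$; plugging in $f(c_j,t-1)\le 2^{c_j(t-1)}$ and counting the $2^{c-1}$ compositions of $c$ gives exactly $2\cdot 2^{c-1}\cdot 2^{c(t-1)} = 2^{ct}$. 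So the ``strengthened hypothesis $g(c)\cdot 2^{ct}$'' you contemplate is both unnecessary and, with a slack coefficient like~$4$, impossible to close at $c=1$ (it would force $4\le 2$). In short: same decomposition as the paper, right instinct to induct on $t$, but the arithmetic slips and the missing binomial identity are genuine gaps; the paper instead organizes the same computation as an outer induction on $c$ and an inner telescoping recursion on $t$, and both versions hinge on Lemma~\ref{lemma:combinatorics}.
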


\begin{corollary}\label{corol:counting}
For any fixed positive integer $c$ and $t$, it holds that
\[
\sum_{\ell^1,\ldots,\ell^c\in S^t}2^{1-|P^{(t)}(\{\ell^1,\ldots,\ell^c\})|}\le c!\cdot 2^{c\cdot (t+1)}
\]
\end{corollary}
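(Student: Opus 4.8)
The plan is to reduce the ordered sum over tuples $(\ell^1,\dots,\ell^c)$ to the unordered sum $f(c,t)$ handled by Lemma~\ref{theo:counting}, at the cost of a $c!$ factor coming from permutations, plus a correction for degenerate tuples in which some coordinates coincide. First I would observe that for any fixed tuple $(\ell^1,\dots,\ell^c)\in (S^t)^c$ of \emph{distinct} indices, the quantity $2^{1-|P^{(t)}(\{\ell^1,\dots,\ell^c\})|}$ depends only on the underlying set $\{\ell^1,\dots,\ell^c\}$, and exactly $c!$ ordered tuples give rise to each such set. Hence the part of the sum coming from tuples with all coordinates distinct equals $c!\cdot f(c,t)\le c!\cdot 2^{ct}$ by Lemma~\ref{theo:counting}.

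Next I would handle the degenerate tuples, where $\{\ell^1,\dots,\ell^c\}$ has size $c'<c$. For such a tuple, $2^{1-|P^{(t)}(\{\ell^1,\dots,\ell^c\})|}=2^{1-|P^{(t)}(S')|}$ where $S'$ is the underlying set of size $c'$. Grouping degenerate tuples by their underlying set $S'$: each $S'$ of size $c'$ is hit by at most $c^{c}$ ordered tuples (a crude bound: each of the $c$ coordinates is one of $\le c$ elements of $S'$), and summing $2^{1-|P^{(t)}(S')|}$ over all $S'$ with $|S'|=c'$ gives $f(c',t)\le 2^{c't}\le 2^{(c-1)t}$. So the total degenerate contribution is at most $\sum_{c'=1}^{c-1} c^{c}\cdot 2^{c't}\le c\cdot c^{c}\cdot 2^{(c-1)t}$. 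Since $c$ is a fixed constant, this is $O_c(2^{(c-1)t})$, which is dominated by, say, $2^{c(t+1)}$ once we absorb constants; more carefully, $c!\cdot 2^{ct}+c^{c+1}\cdot 2^{(c-1)t}\le c!\cdot 2^{ct}\cdot(1+ c^{c+1}/(c!\,2^t))\le c!\cdot 2^{c(t+1)}$ for all $t\ge 1$ (and the small cases $t=0$ can be checked directly, or folded into the constant). Combining the two contributions yields
\[
\sum_{\ell^1,\dots,\ell^c\in S^t}2^{1-|P^{(t)}(\{\ell^1,\dots,\ell^c\})|}\le c!\cdot 2^{c(t+1)},
\]
as claimed.

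The only mildly delicate point is the bookkeeping for degenerate tuples: one must make sure the crude counting bound on the number of ordered tuples mapping to a given small set is genuinely independent of $t$ (it is — it depends only on $c$), so that this term stays a lower-order contribution in $t$ and can be absorbed into the stated bound. Everything else is a routine application of Lemma~\ref{theo:counting} together with the permutation-counting argument. In fact, if one does not care about the precise constant, the cleanest write-up is: $\sum_{(\ell^i)}2^{1-|P^{(t)}(\{\ell^i\})|}\le c!\sum_{S:|S|\le c}2^{1-|P^{(t)}(S)|}=c!\sum_{c'=1}^{c}f(c',t)\le c!\cdot c\cdot 2^{ct}\le c!\cdot 2^{c(t+1)}$, where the first inequality uses that every set of size $\le c$ is the image of at most $c!$ ordered $c$-tuples once we also allow repeats — this needs the slightly less crude observation that an ordered $c$-tuple with underlying set $S$ of size $c'$ corresponds to a surjection $[c]\to S$, and the number of surjections $[c]\to[c']$ is at most $c!$ for every $c'\le c$. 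I would use this second, cleaner version in the final proof.
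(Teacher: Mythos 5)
The overall idea of reducing to $f(\cdot,t)$ by grouping ordered tuples by their underlying set is reasonable, but both of your counting bounds break down. In the ``cleaner'' version you assert that the number of surjections $[c]\to[c']$ is at most $c!$ for every $c'\le c$; this is false -- for example $\mathrm{Surj}(4,3)=3!\cdot S(4,3)=6\cdot 6=36 > 4!=24$, so the inequality $\sum_{(\ell^i)}2^{1-|P(\{\ell^i\})|}\le c!\sum_{|S|\le c}2^{1-|P(S)|}$ is not justified. The first version, which uses the crude per-set count $c^c$, leads to the numerical claim $1+c^{c+1}/(c!\,2^t)\le 2^c$ ``for all $t\ge 1$''; this already fails at $c=4,t=1$, where $1+4^5/(4!\cdot 2)\approx 22.3 > 16=2^c$. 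So neither version, as written, actually proves the corollary for all positive integers $c$ and $t$, even though the corollary itself is true (and not tight) in those cases.

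The paper avoids this bookkeeping entirely by a different decomposition: it first passes from ordered tuples to \emph{nondecreasing} tuples $\ell^1\le\cdots\le\ell^c$, paying a factor of $c!$ (each nondecreasing tuple has at most $c!$ rearrangements), and then partitions the nondecreasing tuples by the length-$(c-1)$ equality pattern $d_i=\mathbbm{1}\{\ell^i\ne\ell^{i+1}\}$. There are $2^{c-1}$ patterns; a pattern with $k$ zeros corresponds bijectively to sets of size $c-k$, so its total contribution to the nondecreasing sum is exactly $f(c-k,t)\le 2^{(c-k)t}\le 2^{ct}$. Summing over patterns gives $c!\cdot 2^{c-1}\cdot 2^{ct}\le c!\cdot 2^{c(t+1)}$. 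The key difference is that the paper's per-set multiplicity is implicitly $c!\binom{c-1}{c'-1}$ (from $c!$ times the number of patterns yielding size $c'$), which always dominates $\mathrm{Surj}(c,c')$, whereas your explicit surjection/crude counts do not combine cleanly with the target constant. If you want to salvage your route, you would need to replace the $\le c!$ surjection claim with the correct count $\mathrm{Surj}(c,c')$ and then verify $\sum_{c'=1}^c\mathrm{Surj}(c,c')\cdot 2^{c't}\le c!\cdot 2^{c(t+1)}$, which requires a (nontrivial) bound on the ordered Bell number; the paper's pattern decomposition sidesteps this.
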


We prove Lemma~\ref{lemma:double-lb} now. Recall the statement.

\begin{reminder}{Lemma~\ref{lemma:double-lb}}
    Let $t = \frac{1}{2}\log n$ and $\kappa = 20\log n$. Sample $\mathbf{h}\sim \calH^{n,m,t,\kappa}$ and $\mathbf{x}\sim [n]$. Then for every $u, v\in [n]$, $u\ne v$, it holds that:
    \[
    \Pr_{\mathbf{h}, \mathbf{x}}[u, v\in \Out_{a,\mathbf{h}}(\mathbf{x})] \ge \Omega\left( \frac{1}{F_2(a)} \right).
    \]
\end{reminder}

\begin{proof}

First note that $\frac{1}{2n^2}$ is a trivial lower bound: with probability $\frac{1}{n}$ we have $\mathbf{x} = u$. Conditioning on this being true, with probability $\frac{1}{2n}$ we have $\mathbf{h}(\mathbf{x}) = v$. In the following, we assume that $F_2(a)\le c n^2$ holds for a sufficiently small constant $c$. In this case, we prove a lower bound of $\Omega(1/F_2(a))$.

Let us first recall \eqref{eq:translate-to-tree}, re-stated below.
\begin{align}
\Pr_{\mathbf{h_1},\dots, \mathbf{h}_t, \mathbf{x}}[u,v\in \Out_{a,\mathbf{h}}(\mathbf{x})] 
= \sum_{\ell^1 \ne \ell^2}\Pr_{\mathbf{h}_1,\dots,\mathbf{h}_t, \mathbf{x}}\left[   \mathbf{T}(\ell^1) = u \land \mathbf{T}(\ell^2) = v \right] \label{eq:translate-to-tree-2}
\end{align}

We assume $t:= \frac{1}{2}\log n\ge 5$. Let $5\le \gamma \le t$ be a parameter to be specified later. Let $I^{\gamma} = \{ \ell\in [0, \tau]^t : \ell_{> \gamma} \equiv 0 \}$ be the set of indices with its $(t-\gamma)$-suffix being all-zero. We only consider the contribution of $\tau$-bounded pairs from $I^\gamma\times I^{\gamma}$ to the right hand side of \eqref{eq:translate-to-tree-2}. Fix one such pair $(\ell^1, \ell^2)$. Suppose $\ell^1 < \ell^2$ (the case that $\ell^1 > \ell^2$ is analogous). By Lemma~\ref{lemma:extend-is-random} and \ref{lemma:extend-lb-standard}, we have:
\[
\begin{aligned}
& ~~~~ \Pr_{\mathbf{h}_1,\dots,\mathbf{h}_t, \mathbf{x}}\left[  (\mathbf{T}(\ell^1), \mathbf{T}(\ell^2)) = (u,v) \right]  \\
& \ge \Pr_{\mathbf{T}^{e} \sim \mathbf{T}^{\{\ell^1,\ell^2\}}}[ (\mathbf{T}^e(\ell^1), \mathbf{T}^e(\ell^2)) = (u, v) ] - & (\triangleq E_1(\ell^1,\ell^2)) \\
& ~~~~ \sum_{\ell^3 < \ell^4 < \ell^2} \Pr_{\mathbf{T}^e \sim \mathbf{T}^{\{\ell^1,\ell^2, \ell^3, \ell^4\}}}[ ( \mathbf{T}^e(\ell^1), \mathbf{T}^e(\ell^2)) = (u, v) \land \mathbf{T}^e(\ell^3) \simeq \mathbf{T}^e(\ell^4) ] -  & (\triangleq E_2(\ell^1,\ell^2)) \\
& ~~~~ \sum_{\ell^5 < \ell^2: \width(\ell^5) = \tau} \Pr_{\mathbf{T}^e \sim \mathbf{T}^{\{\ell^1,\ell^2,  \ell^5\}}}[ (\mathbf{T}^e(\ell^1), \mathbf{T}^e(\ell^2))= (u, v) \land \mathbf{T}^e(\ell^5) \ne \star ]  & ( \triangleq E_3(\ell^1,\ell^2)).\\
\end{aligned}
\]
Note that we use $E_i(\ell^1,\ell^2), i\in \{1,2,3\}$ to denote three quantities on the right hand side. Now, we sum up all $\tau$-bounded pairs from $I^\gamma\times I^\gamma$, and bound the summation of $E_i(\ell^1,\ell^2), i\in [3]$ by a series of straightforward but somewhat lengthy manipulation. 

First, for $E_1(\ell^1,\ell^2)$ we have
\[
\mathrm{SE}_1 := \sum_{\ell^1\ne \ell^2} E_1(\ell^1,\ell^2) = \sum_{\ell^1\ne \ell^2} \frac{2^{1-|P(\{\ell^1,\ell^2\})|}}{n^2} = \sum_{\ell^1, \ell^2}\frac{2^{1-|P(\{\ell^1,\ell^2\})|}}{n^2}-\sum_{\ell}\frac{2^{1-|P(\ell)|}}{n^2}. 
\]
Using Lemma~\ref{lemma:combinatorics}, we have $\sum_{\ell}\frac{2^{1-|P(\ell)|}}{n^2} \le \frac{2^{\gamma}}{n^2}$. Note that $|P(\{\ell^1,\ell^2\})|\le |P(\ell^1)|+|P(\ell^2)|-1$. Therefore,
\begin{align}
\sum_{\ell^1,\ell^2\in I^\gamma\times I^\gamma} \frac{2^{1-|P(\{\ell^1,\ell^2\})|}}{n^2}
&\ge \left(\sum_{\ell}\frac{2^{1-|P(\ell)|}}{n}\right)^2 \notag \\
&=\frac{1}{n^2}\left(\sum_{\ell}\prod_{i=1}^{\gamma}2^{-\ell_i}\right)^2 \notag \\
&= \frac{1}{n^2}\left(\prod_{i=1}^{\gamma}\sum_{\ell_i=0}^{\tau}2^{-\ell_i}\right)^2 \notag \\
&= \frac{1}{n^2}\cdot (2-2^{-\tau})^{2\gamma} \notag  \\
&\ge \frac{2^{2\gamma}}{2 n^2}. \label{eq:cjww-first-trick}
\end{align}
Here, the last inequality holds since $\gamma \le t \ll 2^{\tau}$. Therefore, we conclude that
\begin{align}
\mathrm{SE}_1 = \sum_{\ell^1\ne \ell^2} \frac{2^{1-|P(\{\ell^1,\ell^2\})|}}{n^2} \ge  \frac{2^{2\gamma}}{2n^2} - \frac{2^{\gamma}}{n^2} \ge \frac{2^{2\gamma}}{3n^2}.\label{eq:cjww-end-1}
\end{align}
The last inequality holds because we have assumed that $\gamma \ge 5$. 

Next, we turn to $\sum_{\ell^1\ne \ell^2}E_2(\ell^1,\ell^2)$. We have
\[
\mathrm{SE}_2 := \sum_{\ell^1 \ne \ell^2} E_2(\ell^1,\ell^2) \le \sum_{\ell^1, \ell^2, \ell^3, \ell^4} \frac{2^{1-|P(\{\ell^1,\ell^2, \ell^3,\ell^4\})|} \cdot F_2(a) }{n^4} + \sum_{\ell^1,\ell^2,r} \frac{2^{1-|P\{\ell^1,\ell^2, r\}|}\cdot F_{\infty}(a) }{n^3}.
\]
We explain the right hand side. For every fixed pair $\ell^1 < \ell^2$ (the case that $\ell^1 > \ell^2$ is analogous), the first summation considers the case that $\ell^1\notin \{\ell^3,\ell^4\}$. In this case, with probability $\frac{F_2(a)}{n^2}$ we have $\mathbf{T}^e(\ell^3)\simeq \mathbf{T}^e(\ell^4)$. The second term considers the case that $\ell^1\in \{\ell^3,\ell^4\}$. In this case, let $r\in \{\ell^3, \ell^4\} \setminus \{\ell^1\}$. Conditioning on $\mathbf{T}^e(\ell^1) = u, \mathbf{T}^e(\ell^2)=v$ (which happens with probability $\frac{1}{n^2}$), the probability that $\mathbf{T}^e(r) \simeq \mathbf{T}^e(\ell^1)$ is bounded by $\frac{F_{\infty}(a)}{n}$. Also note that the summation on the right hand side may enumerate $\ell^3,\ell^4$ such that $\ell^3 = \ell^4$. This would not be a problem since we are upper-bounding $\mathrm{SE}_2$.

Since indices in $I^{\gamma}$ can be equivalently seen as $\gamma$-dimensional indices, we use Corollary \ref{corol:counting} to deduce that
\[
\begin{aligned}
 & \sum_{\ell^1, \ell^2, \ell^3, \ell^4} \frac{2^{1-|P(\{\ell^1,\ell^2, \ell^3,\ell^4\})|} \cdot F_2(a) }{n^4} 
\le \frac{F_2(a)}{n^4}\cdot 24\cdot  2^{4(\gamma+1)},  \\
& \sum_{\ell^1,\ell^2,r} \frac{2^{1-|P\{\ell^1,\ell^2, r\}|}\cdot F_{\infty}(a) }{n^3} \le \frac{\sqrt{F_2(a)}}{n^3} \cdot 6 \cdot 2^{3(\gamma+1)}.
\end{aligned}
\]
Consequently, 
\begin{align}
    \mathrm{SE}_2 \le \frac{F_2(a)}{n^4}\cdot 24\cdot  2^{4(\gamma+1)} + \frac{\sqrt{F_2(a)}}{n^3} \cdot 6 \cdot 2^{3(\gamma+1)}. \label{eq:cjww-end-2}
\end{align}
Now we handle the last summation: $\sum_{\ell^1 \ne \ell^2} E_3(\ell^1,\ell^2)$.
\[
\begin{aligned}
    \mathrm{SE}_3 := \sum_{\ell^1 \ne \ell^2} E_3(\ell^1,\ell^2) \le \sum_{\ell^1,\ell^2,\ell^5:\width(\ell^5) = \tau} \frac{2^{1-|P(\{\ell^1,\ell^2, \ell^5\})|}}{n^2}.
\end{aligned}
\]
For every integer $i\in [\gamma]$ and $t$-dimensional index $\ell$, define $\ell_{-i}:=(\ell_1,\ldots,\ell_{i-1},\ell_{i+1},\ldots,\ell_{t})$, which is a $(t-1)$-dimensional index. Since $\width(\ell^5) = \tau$, we enumerate $i\in [\gamma]$ and $\ell^5$ such that $\ell^5_i=\tau$. We observe that
\[
|P^{(t)}(\{\ell^1,\ell^2,\ell^5\})|\ge \tau+|P^{(t-1)}(\{\ell^1_{-i},\ell^2_{-i},\ell^5_{-i}\})|.
\]
Consequently, we have
\begin{align}
\mathrm{SE}_3 &\le \sum_{\ell^1,\ell^2,\ell^5:\width(\ell^5) = \tau} \frac{2^{1-|P(\{\ell^1,\ell^2, \ell^5\})|}}{n^2} \notag \\
&\le \frac{1}{n^2}\sum_{i=1}^{\gamma}~\sum_{\ell^1,\ell^2,\ell^5:\ell^5_i= \tau}2^{1-|P^{(\gamma)}(\{\ell^1,\ell^2, \ell^5\})|} \notag \\
&\le \frac{1}{n^2}\sum_{i=1}^{\gamma}~\sum_{\ell^1,\ell^2,\ell^5:\ell^5_i= \tau}2^{1-\tau-|P^{(\gamma-1)}(\{\ell^1_{-i},\ell^2_{-i}, \ell^5_{-i}\})|} \notag \\
&= \frac{1}{n^2}\sum_{i=1}^{\gamma}2^{-\tau}\cdot (\tau+1)^2\cdot \sum_{\ell^1_{-i},\ell^2_{-i},\ell^5_{-i}}2^{1-|P^{(\gamma-1)}(\{\ell^1_{-i},\ell^2_{-i}, \ell^5_{-i}\})|} \notag \\
&\le \frac{1}{n^2}\sum_{i=1}^{\gamma}2^{-\tau}\cdot (\tau+1)^2\cdot 3!\cdot 2^{3\gamma} & \text{(Corollary \ref{corol:counting})} \notag \\
&= \frac{6\gamma(\tau+1)^2}{n^2}\cdot 2^{3\gamma-\tau}. \label{eq:cjww-third-trick}
\end{align}
Combining \eqref{eq:cjww-end-1}, \eqref{eq:cjww-end-2} and \eqref{eq:cjww-third-trick} together, we have
\[
\begin{aligned}
&~~~~ \Pr_{\mathbf{h_1},\dots, \mathbf{h}_t, \mathbf{x}}[u,v\in \Out_{a,\mathbf{h}}(\mathbf{x})] \ge \mathrm{SE}_1 - \mathrm{SE}_2 - \mathrm{SE}_3 \\
&\ge \frac{2^{2\gamma}}{3 n^2} - \frac{F_2(a)}{n^4}\cdot 24\cdot 2^{4(\gamma+1)}- \frac{\sqrt{F_2(a)}}{n^3} \cdot 6 \cdot 2^{3(\gamma+1)} - \frac{6\gamma(\tau+1)^2}{n^2}\cdot 2^{3\gamma-\tau}. \\
\end{aligned}
\]
Recall the threshold $\tau = 5\log n$. Now, we set $\gamma =\frac{1}{2}\log(n^2/F_2(a))-C$, where $C = 10$. Since we have assumed that $F_2(a)\le cn^2$ holds for a sufficiently small constant $c > 0$, we may assume that $5\le \gamma \le t$. Then,
\[
\begin{aligned}
\Pr_{\mathbf{h_1},\dots, \mathbf{h}_t, \mathbf{x}}[u,v\in \Out_{a,\mathbf{h}}(\mathbf{x})] 
\ge \frac{1}{F_2(a)} \left( \frac{1}{3\cdot 2^{2C}} - \frac{24\cdot 2^4}{2^{4C}}  - \frac{6\cdot 2^3}{2^{3C}} \right) - O\left( \frac{1}{n^3} \right) \ge \Omega\left( \frac{1}{F_2(a)} \right),
\end{aligned}
\]
as desired.
\end{proof}

\section{Improved Time-Space Trade-Off}\label{sec:tradeoff}

In this section, we prove the main result of the paper: Theorem~\ref{theo:element-distinctness-algo} and \ref{theo:set-intersection-algo}. In Section~\ref{sec:tradeoff-setup}, we state a new property we need from the hash family (Lemma~\ref{lemma:multi-lb}). Assuming it, we prove Theorem~\ref{theo:element-distinctness-algo} and \ref{theo:set-intersection-algo} quickly. We prove Lemma~\ref{lemma:multi-lb} in the rest of the section.

\subsection{Setup and Proof of Main Results}\label{sec:tradeoff-setup}

Suppose we have $\Otilde(k)$ bits of working memory for solving \textsc{Element Distinctness} and \textsc{Set Intersection}. We will use the hash family $\calH^{n, m, t, \kappa}$ with parameter $t = \frac{1}{2}\log(n/k)$ and $\kappa = 20\log n$. We first state the pseudorandomness properties of $\calH^{n,m,t,\kappa}$ we need for the tradeoff algorithm. Assuming these properties, we show the algorithms.

\begin{lemma}\label{lemma:multi-ub}
    Let $t = \frac{1}{2}\log (n/k)$ and $\kappa = 20\log n$. Sample $\mathbf{h}\sim \calH^{n,m,t,\kappa}$ and $\mathbf{x}_1,\dots, \mathbf{x}_k\sim [n]$. For every $u\in [n]$, it holds that:
    \[
    \Pr_{\mathbf{h}, \mathbf{A}}[u\in \Out_{a,\mathbf{h}}(\{\mathbf{x}_i\})] \le O\left( \sqrt{\frac{k}{n}} \right).
    \]
\end{lemma}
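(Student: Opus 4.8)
The plan is to derive Lemma~\ref{lemma:multi-ub} directly from the single-start bound Lemma~\ref{lemma:single-ub} by a union bound over the $k$ starting vertices, with no new machinery. The only point that needs checking is that the parameter regime lines up.

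First I would record the obvious set-theoretic identity $\Out_{a,\mathbf{h}}(\{\mathbf{x}_1,\dots,\mathbf{x}_k\}) = \bigcup_{i=1}^{k}\Out_{a,\mathbf{h}}(\mathbf{x}_i)$, which is immediate from the definition of reachability in $G_{a,h}$. Hence for any fixed $u\in[n]$,
\[
\Pr_{\mathbf{h},\mathbf{x}_1,\dots,\mathbf{x}_k}\big[u\in\Out_{a,\mathbf{h}}(\{\mathbf{x}_i\})\big] \;\le\; \sum_{i=1}^{k}\Pr_{\mathbf{h},\mathbf{x}_i}\big[u\in\Out_{a,\mathbf{h}}(\mathbf{x}_i)\big].
\]
Each summand only involves the marginal law of a single $\mathbf{x}_i$, which is uniform on $[n]$ (independence of the $\mathbf{x}_i$'s plays no role here), together with a fresh sample $\mathbf{h}\sim\calH^{n,m,t,\kappa}$. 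Since $k\ge 1$ we have $t=\tfrac12\log(n/k)\le\tfrac12\log n$, and $\kappa=20\log n$, so the hypotheses of Lemma~\ref{lemma:single-ub} are met and each summand is at most $O(2^{t}/n)$. Therefore
\[
\Pr\big[u\in\Out_{a,\mathbf{h}}(\{\mathbf{x}_i\})\big]\;\le\; O\!\left(\frac{k\cdot 2^{t}}{n}\right),
\]
and plugging in $2^{t}=\sqrt{n/k}$ gives $k\cdot 2^{t}/n = k\sqrt{n/k}/n = \sqrt{k/n}$, which is the claimed bound.

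The ``main obstacle'' is essentially that there is none: this is a one-line corollary once Lemma~\ref{lemma:single-ub} is available, and the role of the statement is simply to control the expected running time of one trial of the tradeoff algorithm in Section~\ref{sec:tradeoff} (the $\Otilde(\sqrt{nk})$ bound). The genuinely hard direction for the tradeoff is the matching \emph{lower} bound on the per-trial success probability, i.e.\ Lemma~\ref{lemma:multi-lb}, whose proof follows the ``local'' coupling argument sketched in Section~\ref{sec:tech-extension}: reduce $\Pr[p,q\in\{\mathbf{x}^i_j\}]$ to a sum of simple events each touching only a constant number of walk-tree vertices, then evaluate and bound those events by invoking Lemma~\ref{lemma:extend-is-random} and Lemma~\ref{lemma:extend-lb-standard} with only $O(1)$ observed branches. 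That argument is where the full extended-walk apparatus of Section~\ref{sec:cjww} is actually needed.
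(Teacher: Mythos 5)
Your proof is correct and is essentially identical to the paper's: a union bound over the $k$ starting vertices, applying Lemma~\ref{lemma:single-ub} to each (noting $t=\tfrac12\log(n/k)\le\tfrac12\log n$), then substituting $2^t=\sqrt{n/k}$. Nothing to add.
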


Since Lemma~\ref{lemma:multi-ub} is an easy consequence of Lemma~\ref{lemma:single-ub}, we show its proof here.

\begin{proof}
We use Lemma~\ref{lemma:single-ub} and a union bound.
\[
\Pr_{\mathbf{h}, \mathbf{A}}[u\in \Out_{a,\mathbf{h}}(\{\mathbf{x}_i\})] \le \sum_{i=1}^{k} \Pr_{\mathbf{h}, \mathbf{x}_i}[u\in \Out_{a,\mathbf{h}}(\mathbf{x}_i)] \le k\cdot O\left( \frac{2^t}{n} \right) \le O\left( \sqrt{\frac{k}{n}} \right).
\]
\end{proof}

\begin{lemma}\label{lemma:multi-lb}
    Let $t = \frac{1}{2}\log (n/k)$ and $\kappa = 20\log n$. Sample $\mathbf{h}\sim \calH^{n,m,t,\kappa}$ and $\mathbf{x}_1,\dots, \mathbf{x}_k\sim [n]$. Then for every $u, v\in [n]$, $u\ne v$, it holds that:
    \[
    \Pr_{\mathbf{h}, (\mathbf{x}_i)}[u, v\in \Out_{a,\mathbf{h}}(\{\mathbf{x}_i\})] \ge \Omega\left( \frac{k}{F_2(a)} \right).
    \]
\end{lemma}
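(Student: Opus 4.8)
The plan is to re-run the proof of Lemma~\ref{lemma:double-lb} while carrying $k$ walk trees in parallel, one rooted at each $\mathbf{x}_i$. First I would dispose of a degenerate regime: if $F_2(a)\ge c_0 n^2/k$ for a small constant $c_0$, then $\Omega(k/F_2(a))=O(k^2/n^2)$, and since the two halves $\{\mathbf{x}_1,\dots,\mathbf{x}_{\lceil k/2\rceil}\}$ and $\{\mathbf{x}_{\lceil k/2\rceil+1},\dots,\mathbf{x}_k\}$ of the starting vertices are independent and $1-(1-1/n)^{k/2}=\Omega(k/n)$ when $k\le n$, the crude bound
\[
\Pr_{\mathbf{h},(\mathbf{x}_i)}[u,v\in\Out_{a,\mathbf{h}}(\{\mathbf{x}_i\})]\ \ge\ \Pr\big[u\in\{\mathbf{x}_1,\dots,\mathbf{x}_{\lceil k/2\rceil}\}\big]\cdot\Pr\big[v\in\{\mathbf{x}_{\lceil k/2\rceil+1},\dots,\mathbf{x}_k\}\big]\ \ge\ \Omega(k^2/n^2)\ \ge\ \Omega(k/F_2(a))
\]
already suffices. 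So from now on I assume $F_2(a)<c_0 n^2/k$, which together with $F_2(a)\ge n$ will produce a threshold $\gamma$ with $\Omega(1)\le\gamma\le t$ below.

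In the main regime, I would run Algorithm~\ref{algo:standard-walk} from each $\mathbf{x}_i$ to obtain tensors $\mathbf{T}_1,\dots,\mathbf{T}_k$, so that $u,v\in\Out_{a,\mathbf{h}}(\{\mathbf{x}_i\})$ holds iff some $(i_1,\ell^1)$ has $\mathbf{T}_{i_1}(\ell^1)=u$ and some $(i_2,\ell^2)$ has $\mathbf{T}_{i_2}(\ell^2)=v$. Unlike the single-walk case, $\Out_{a,\mathbf{h}}(\{\mathbf{x}_i\})$ may record $u$ in several trees, so identity~\eqref{eq:translate-to-tree} is no longer automatically a sum over disjoint events; instead I would fix a global order on (walk, index) pairs and partition the target event by the \emph{first} occurrence of $u$ and the \emph{first} occurrence of $v$, enforcing disjointness by hand exactly as Condition~$4$ does in the ``local proof'' of Section~\ref{sec:tech-extension}.

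The technical heart is a $k$-fold version of the extended-walk machinery of Section~\ref{sec:cjww-extend}: run $k$ copies of Algorithm~\ref{algo:extended-walk}, but let the $t$ remembered sets $D_1,\dots,D_t$ be \emph{shared} across all $k$ copies, so that along the union of the observed branches each entry of $\mathbf{h}_d$ is probed at most once (a fresh one-time edge is sampled whenever a repeated probe --- within one walk or across two --- would otherwise occur). With this sharing, the proof of Lemma~\ref{lemma:extend-is-random} goes through essentially verbatim: observing $c$ $\tau$-bounded branches spread over the $k$ walks, with $\mathbf{h}\sim\calH^{n,m,t,c\tau}$, yields $c$ uniform independent values, the normalization being $2^{-\sum_j|P(S_j)|+k'}/n^{c}$ with $k'$ the number of walks actually touched. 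Lemmas~\ref{lemma:exists-surgical}, \ref{lemma:surgical-upper-bound} and~\ref{lemma:extend-lb-standard} carry over likewise, with one addition: besides type-$1$ (within-walk collision) and type-$2$ (width-$\tau$) refutations I would add a type-$3$ ``cross-walk collision'' refutation, recording precisely the places where sharing $D_d$ makes the $k$-fold extended walk diverge from the $k$ independent standard walks.

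Finally I would run the wrap-up computation of Lemma~\ref{lemma:double-lb} with $k$ trees, restricting all indices to width-$\tau$, level-$\le\gamma$ ones (the counting bounds Lemma~\ref{theo:counting} and Corollary~\ref{corol:counting} are applied tree-by-tree, picking up a factor $k^{\#\text{trees}}$). The main term is $\mathrm{SE}_1\approx(k\,2^{\gamma}/n)^2=k^2 2^{2\gamma}/n^2$; choosing $\gamma=\tfrac12\log\!\big(n^2/(kF_2(a))\big)-C$ for a suitably large constant $C$ makes $\mathrm{SE}_1=\Theta(k/F_2(a))$, and the constraints $\gamma\le t=\tfrac12\log(n/k)$ and $\gamma=\Omega(1)$ hold because $F_2(a)\ge n$ and $F_2(a)<c_0 n^2/k$ respectively. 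The subtracted sums $\mathrm{SE}_2$ (within-walk collisions), $\mathrm{SE}_3$ (width-$\tau$, polynomially small) and the Condition-$4$ ``earlier occurrence'' sum are each easily $o(\mathrm{SE}_1)$ by the same manipulations as in Lemma~\ref{lemma:double-lb}; the main obstacle is the cross-walk sum $\mathrm{SE}_4$. A naive count of colliding branch pairs across the $k$ trees gives $(k\,2^{\gamma})^2$, a factor $k$ too large; the fix --- once more following the local proof --- is that in each subtracted event one of the two colliding indices is pinned to the observed $\ell^1$- or $\ell^2$-path (effective weighted length $\approx 2^{\gamma}$) while only the other ranges over all $\approx k\,2^{\gamma}$ observed branches, so the true count is $\approx 2^{\gamma}\cdot k\,2^{\gamma}$, and after the extra $1/n^2$ from hitting $(u,v)$ and the $F_2(a)/n^2$ from the collision one gets $\mathrm{SE}_4\lesssim 2^{-2C}\mathrm{SE}_1$. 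Together with checking that the $k$-fold extended walk couples faithfully to the $k$ standard walks once all three refutation types are excluded, this yields $\Pr_{\mathbf{h},(\mathbf{x}_i)}[u,v\in\Out_{a,\mathbf{h}}(\{\mathbf{x}_i\})]\ge\mathrm{SE}_1-\mathrm{SE}_2-\mathrm{SE}_3-\mathrm{SE}_4-o(1/n^2)\ge\Omega(k/F_2(a))$.
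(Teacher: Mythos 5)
Your plan is essentially identical to the paper's proof: the degenerate-regime cutoff, the multi-walk tensor with shared $D_1,\dots,D_t$ across the $k$ trees, the multi-walk analog of Lemma~\ref{lemma:extend-is-random}, the ``first occurrence'' disjointification (captured in the paper by the events $\good^{\ell^1,\ell^2}$ and Condition~$4$ of Claim~\ref{claim:exists-refute}), the choice $\gamma=\frac12\log(n^2/(kF_2(a)))-C$, and the tree-by-tree counting with Corollary~\ref{corol:counting} all appear verbatim in Section~\ref{sec:tradeoff}. The one place worth a small correction in phrasing: the key constraint that saves the extra factor of $k$ is that the later colliding index $r^2$ must lie in the same \emph{walk tree} as $\ell^1$ or $\ell^2$ (the condition $r^2_{t+1}=\ell^1_{t+1}$ or $r^2_{t+1}=\ell^2_{t+1}$ in Claim~\ref{claim:exists-refute}), not literally on the path $P(\ell^1)$ or $P(\ell^2)$ --- your parenthetical ``effective weighted length $\approx 2^\gamma$'' correctly refers to a single tree's weighted count, so the quantitative estimate is right, but the word ``path'' is misleading since the weighted count of a single root-to-$\ell^1$ path is $O(1)$, not $\Theta(2^\gamma)$. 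Also note the paper does not separate ``within-walk'' from ``cross-walk'' refutations the way you do; its conditions $1$--$2$ allow $r^1$ to live in any earlier tree and so absorb both cases, which gives the same $E_2+E_3$ accounting you end up with.
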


We defer the proof of Lemma~\ref{lemma:multi-lb} to Section~\ref{sec:proof-wrap-up}.

\paragraph*{Time-space upper bounds without random oracle.} Assuming Lemma~\ref{lemma:multi-lb}. We prove Theorem~\ref{theo:element-distinctness-algo} and \ref{theo:set-intersection-algo}.

\begin{reminder}{Theorem~\ref{theo:element-distinctness-algo}}
For every complexity bounds $S(n), T(n): \mathbb{N}\to \mathbb{N}$ such that $S(n)^{1/2}\cdot T(n) \ge n^{1.5}$, there is a Monte Carlo algorithm solving \textsc{Element Distinctness} in time $O(T(n)\cdot \polylog(n))$ and space $O(S(n)\cdot \polylog(n))$ with one-way access to random bits. Moreover, when there is a colliding pair, the algorithm reports one with high probability.
\end{reminder}

\begin{proof}
Let $a\in [m]^n$ be the input. Denote $S:=S(n)$. We set $t^{\mathsf{single}} = \frac{1}{2}\log(n)$, $t = \frac{1}{2}\log (n/S(n))$ and $\kappa = 20\log n$. 

\paragraph*{The algorithm.} Our algorithm first repeats the following trial for $\Theta(\log n)$ times.
\begin{itemize}
    \item Draw a random hash $\mathbf{h}\sim \calH^{n,m,t^{\mathsf{single}}, \kappa}$ and a starting vertex $\mathbf{x}\sim [n]$. Try to find a colliding pair by running $\COLLIDE(\mathbf{x})$ on $G_{a,\mathbf{h}}$.
\end{itemize}

The algorithm then repeats the following trial for $\Theta\left( \frac{n\log n}{S(n)} \right)$ times.

\begin{itemize}
    \item Draw a random hash $\mathbf{h}\sim \calH^{n,m,t,\kappa}$ and $S$ starting vertices $\mathbf{x}_1,\dots, \mathbf{x}_S\sim [n]$. Try to find a colliding pair by running $\COLLIDE(\left\{\mathbf{x}_1,\dots, \mathbf{x}_S\right\})$ on $G_{a,\mathbf{h}}$. 
\end{itemize}
The algorithm reports YES if it does not find any colliding pair. Otherwise it reports NO. 

\paragraph*{Time and space.} In expectation, the first bunch of trials takes $\Otilde(\sqrt{n})$ time by Lemma~\ref{lemma:bcm-cycle} and \ref{lemma:single-ub}. By Lemma~\ref{lemma:multi-ub} and \ref{lemma:bcm-cycle}, each trial in the second bunch takes $\Otilde(\sqrt{n\cdot S})$ time, which brings the total time to $\Otilde(n^{3/2}/S^{1/2})$. Since $S\le n$, the expected running time of the whole algorithm is $\Otilde(n^{3/2}/S^{1/2})$. The space complexity of the algorithm is $\Otilde(S) + O(\log^3 n) \le \Otilde(S)$.

\paragraph*{Correctness.} If $a$ is a YES instance, the algorithm always reports YES. If $a$ is a NO instance, depending on whether $F_2(a)\le 2n$, we consider two cases.
\begin{itemize}
    \item Suppose $F_2(a)\ge 2n$. In this case, by Lemma~\ref{lemma:double-lb} (see also the proof of Theorem~\ref{theo:improved-seed}), each trial in the first bunch succeeds in finding a colliding pair with probability $\Omega\left( \frac{F_2(a)-n}{F_2(a)} \right) \ge \Omega(1)$. With probability $1-n^{-\Omega(1)}$, at least one trial in the first bunch succeeds in finding a colliding pair.
    \item Otherwise we have $F_2(a)\le 2n \le O(n)$. Take $u\ne v$ to be a colliding pair (namely, $a_u = a_v$). By Lemma~\ref{lemma:multi-lb}, each trial in the second bunch succeeds in finding $(u,v)$ with probability $\Omega\left( \frac{S}{n} \right)$. Since we have $O((n/S)\log n)$ independent trials, with probability $1-n^{-\Omega(1)}$, the second bunch of trials succeeds in finding $(u,v)$.  
\end{itemize}
Combining two cases together shows that the algorithm finds a colliding pair with probability $1-n^{-\Omega(1)}$. This completes the proof.
\end{proof}

\begin{reminder}{Theorem~\ref{theo:set-intersection-algo}}
For every complexity bounds $S(n), T(n): \mathbb{N}\to \mathbb{N}$ such that $S(n)^{1/2}\cdot T(n) \ge n^{1.5}$, there is a Monte Carlo algorithm solving \textsc{Set Intersection} in time $O(T(n)\cdot \polylog(n))$ and space $O(S(n)\cdot \polylog(n))$ with one-way access to random bits. The algorithm prints elements in no particular order, and the same element may be printed multiple times.
\end{reminder}

\begin{proof}
Denote $S:=S(n)$. Suppose $a,b\in [m]^n$ are the input arrays. Define $c\in [m]^{2n}$ as the concatenation of $a$ and $b$. The algorithm repeats the following process for $O\left( \frac{n\log^2 n}{S(n)} \right)$ times:
\begin{itemize}
    \item Sample $\mathbf{h}\sim \calH^{n,m,t,\kappa}$ and $S$ starting vertices $\mathbf{x}_1,\dots \mathbf{x}_S\sim [n]$. Run $\COLLIDE(\left\{\mathbf{x}_1,\dots, \mathbf{x}_S\right\})$ on $G_{c,\mathbf{h}}$. Print \emph{all} colliding pairs found by $\COLLIDE(\mathbf{x})$.
\end{itemize}
The total running time is $\Otilde(n^{3/2}/S^{1/2})$. We argue the correctness now. Suppose $c_p = c_q$ is a colliding pair. With probability $\Omega\left( \frac{S}{F_2(c)} \right) = \Omega\left( \frac{S}{n} \right)$, the algorithm finds $(p,q)$ in one trial. Since we have $O\left( \frac{n\log^2 n}{S(n)} \right)$ independent trials, the probability that the algorithm misses $(p,q)$ is bounded by $\left( 1 - \Omega\left( \frac{S}{n} \right)\right)^{\frac{n\log^2 n}{S(n)}} \le n^{-\omega(1)}$. Union-bounding over all colliding pairs concludes the proof.
\end{proof}

\subsection{The Multi-Walks}

Towards proving Lemma~\ref{lemma:multi-lb}, we need to generalize the technical tools developed in Section~\ref{sec:cjww-standard} and \ref{sec:cjww-extend} to handle multiple starting vertices.

\paragraph*{The standard multi-walk.} Generalizing the idea of the standard walk (Algorithm~\ref{algo:standard-walk}), we consider the following standard multi-walk. In the multi-walk, we need to generate $k$ walk trees, one for each starting vertex $x_i$. By adding one more dimension (i.e., the $(t+1)$-th dimension) in the index, we connect the $k$ walk trees into a larger tree, which we call ``multi-walk tree''. The multi-walk returns a tensor $T:\mathbb{N}^{t+1}\to [n]\cup \{\star\}$, which is also denoted by $T=\Mwalk^\std(h, (x_i)_i)$.

\begin{algorithm2e}[H]
\caption{The Standard Multi-Walk}
\label{algo:standard-multi-walk}
\SetKwInput{KwInput}{Input}                
\SetKwInput{KwVariable}{Global Variables}              
\LinesNumbered

\DontPrintSemicolon
  
    \KwInput{$n,m\ge 1$. $t$ hash functions $h_1,\dots, h_t:[m]\to [n]\cup \{0\}$. The array $(a_1,\dots, a_n)\in [m]^n$. $k$ starting vertices $x_1,\dots, x_k\in [n]$}

    \SetKwFunction{FMain}{Main}
    \SetKwFunction{FWalk}{stdwalk}
    \SetKw{Break}{break}
    
    \KwVariable{
        \; ~~~~ A tensor $T:\mathbb{N}^{t}\times [k]\to [n]\cup \{\star\}$, initialized with $\star$'s.
        \; ~~~~ A set $D\subseteq [m]$, initialized with $\emptyset$.
    }
    
    \SetKwProg{Fn}{Program}{:}{\KwRet}
    \Fn{\FMain}{
        $\ell \gets (0, 0, \dots, 0) \in \mathbb{N}^{t+1}$ \;
        \For{$i = 1,\dots, k$}{
            $\ell_{t+1} \gets \ell_{t+1} + 1$ \tcp*{$\ell_{t+1} = i$ indicates the $i$-th walk tree}
            $T(\ell) \gets x_i$ \;
            $\stdwalk(t, x, \ell)$ \tcp*{The $\stdwalk$ subroutine in Algorithm~\ref{algo:standard-walk}}
        }
        \KwRet $T$\;
    }
\end{algorithm2e}

We naturally generalize the definition of index to a definition of multi-index.

\begin{definition}\label{def:index}
We use the term \emph{multi-index} to refer to $(t+1)$-dimensional integer vectors $\ell = (\ell_1,\dots, \ell_{t+1})$. For two indices $\ell^1 \ne \ell^2$, let $i\in [t]$ be the \emph{largest} integer such that $\ell^1_i \ne \ell^2_i$. Then we say $\ell^1 < \ell^2$ if $\ell^1_i < \ell^2_i$. The width of a multi-index $\ell$ is defined as $\width(\ell) := \max_{1\le i\le t} \{ \ell_i \}$ (we do NOT consider $\ell_{t+1}$ here). Call a multi-index $\ell$ $\tau$-bounded if $\width(\ell) \le \tau$. The level of a multi-index is $\level(\ell) := \max\{q : \forall i< q, \ell_i = 0\}$. For every $0\le i\le t+1$, we use $\ell_{<i}, \ell_{>i}$ to denote the length-$(i-1)$ prefix and length-$(t+1-i)$ suffix of $\ell$, respectively.
\end{definition}

The definition of ``walk tree'' naturally generalizes to a definition of multi-walk tree. For a set $S$ of multi-indices, let $P(S)$ denote the union of paths from root to multi-indices in $S$. Let $\widetilde{P}(S)$ denote the subset of $P(S)$ that contains all indices of level less than $t+1$.

\paragraph*{The extended multi-walk(s).} For every set $S = \{\ell^1,\dots, \ell^c\}$ of $\tau$-bounded multi-indices, we consider the following (randomized) $S$-extended multi-walk. Let $\mathbf{T}^S\sim \Mwalk^S(h, (x_i)_i)$ be the resulting tensor when running Algorithm~\ref{algo:extend-multi-walk} on $(h, (x_i)_i)$.

\begin{algorithm2e}[H]
\caption{The $S$-Extended Multi-Walk}
\label{algo:extend-multi-walk}
\SetKwInput{KwInput}{Input}                
\SetKwInput{KwVariable}{Global Variables}              
\LinesNumbered

\DontPrintSemicolon
    \KwInput{
        \; ~~~~ $n,m, t\ge 1$. $t$ hash functions $h_1,\dots, h_t$. The input array $a_1,\dots, a_n$.
        \; ~~~~ $k$ starting vertex $x_1,\dots, x_k\in [n]$. 
        \; ~~~~ A set of indices $S \subseteq \mathbb{N}^{t+1}$.
    }
    
    \SetKwFunction{FMain}{Main}
    \SetKwFunction{FWalk}{extwalk}
    \SetKw{Break}{break}
    
    \KwVariable{
        \; ~~~~ A constant $\tau \gets 5 \log n$
        \; ~~~~ A tensor $T:[0, \tau]^t\times [k]\to [n]\cup \{\star\}$, initialized with $\star$'s
        \; ~~~~ $t$ sets $D_1,D_2,\dots, D_t$, each initialized with $\emptyset$
    }
    
    \SetKwProg{Fn}{Program}{:}{\KwRet}
    \Fn{\FMain}{
        $\ell \gets (0, 0, \dots, 0) \in \mathbb{N}^{t+1}$ \;
        \For{$i = 1,\dots, k$}{
            $\ell_{t+1} \gets \ell_{t+1} + 1$ \;
            $T(\ell) \gets x_i$ \;
            $\extwalk(t, x, \ell)$ \tcp*{The $\extwalk$ subroutine in Algorithm~\ref{algo:extended-walk}}
        }
        \KwRet $T$\;
    }
\end{algorithm2e}

\paragraph*{Comparing $\mathbf{T}^S$ with $T$.} Fix $h$ and $(x_i)$. Let us compare $T^S\in \supp(\Mwalk^S(h, (x_i)_i))$ with $T=\Mwalk^\std(h, (x_i)_i)$. Note that both $\Mwalk^S$ and $\Mwalk^\std$ compute the multi-walk tree by computing $k$ walk trees. For each $i\in [k]$, consider the $i$-th walk tree (i.e., the walk that starts at $x_i$). Inside the $i$-th walk tree, the extended walk does not deviate from the standard walk until one the following events happens.
\begin{enumerate}
    \item A collision is found. Namely, there is $\ell^1 < \ell^2$, $\ell^2_{t+1} = i$ such that $T(\ell^1) \simeq T(\ell^2)$. In this case, we say the deviation happens due to {a collision}.
    \item The standard walk makes $\tau$-consecutive level-$d$ moves inside a function call $\stdwalk(d, *, *)$.  In this case, we say the deviation happens due to a \emph{long hike}.
\end{enumerate}

\subsection{Technical Preparations}

This section proves useful facts about the standard and extended multi-walks. Lemmas in this section are proved by properly extending the ideas developed in Section~\ref{sec:cjww-extend}. 

\paragraph*{Analyzing the extended multi-walk.} To begin with, the following lemma is the analog of Lemma~\ref{lemma:extend-is-random} in the multi-walk case.

\begin{lemma}\label{lemma:multi-extend-is-random}
Let $S = \{ \ell^1, \dots, \ell^c \}$ be $c$ $\tau$-bounded multi-indices. Suppose $\mathbf{h}_1,\dots, \mathbf{h}_t$ are $c\tau$-wise independent and $c$ vertices $\mathbf{x}_1,\dots, \mathbf{x}_c\sim [n]$ are uniformly chosen. Let $\mathbf{T}^S\sim \Mwalk^S(\mathbf{h}, (\mathbf{x}_i))$. For every $u_1,u_2,\dots, u_c \in [n]$, we have
\[
\Pr_{\mathbf{h}, \mathbf{x}, \mathbf{T}^S} \left[ \mathbf{T}^S(\ell^i) = u_i, \forall i\in [1,c] \right] = \frac{2^{-|\widetilde{P}(S)|}}{n^{c}}.
\]
\end{lemma}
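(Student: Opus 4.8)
The plan is to mirror the proof of Lemma~\ref{lemma:extend-is-random} almost verbatim, treating the extra $(t+1)$-th coordinate of a multi-index as a ``bookkeeping'' dimension that only records which of the $k$ walk trees an index belongs to, and which never triggers a probe to any $\mathbf{h}_d$. Concretely, I would run a \emph{downwards} induction on $d\in[t+1]$, where the inductive claim is: for every $d$, conditioning on the values $\mathbf{T}^S(r)$ for all $r\in\widetilde{P}(S)$ of level $\ge d$ (plus the level-$(t+1)$ indices, which are just the roots of the $k$ trees that appear in $P(S)$), the hash functions $\mathbf{h}_1,\dots,\mathbf{h}_{d-1}$ remain uniformly distributed. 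The base case is now $d=t+1$: the level-$(t+1)$ indices in $P(S)$ are exactly the roots $(0,\dots,0,i)$ of the walk trees touched by $S$, and $\mathbf{T}^S$ at such an index equals $\mathbf{x}_i$, which is independent of all $\mathbf{h}_j$. This is why the exponent in the statement is $2^{-|\widetilde P(S)|}$ rather than $2^{-|P(S)|+1}$: the $|S|$-many (or fewer) root indices contribute a factor $1/n^{?}$... actually more precisely, the $k$-tree roots are \emph{not} obtained by a coin flip of some $\mathbf{h}_d$, so they do not each cost a factor $1/2$; this is exactly the difference between $\widetilde P(S)$ (levels $<t+1$) and the full $P(S)$, and it is what makes the formula come out as $2^{-|\widetilde P(S)|}/n^c$.

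For the inductive step from $d+1$ to $d$, I would reuse the ``simulation'' argument from the proof of Lemma~\ref{lemma:extend-is-random} with no essential change: partition $\widetilde{P}(S)$'s level-$d$ indices $B^d$ into groups $B^d_j$ sharing a common $(t+1-d)$-suffix, observe each group corresponds to a single function call $\extwalk(d,\cdot,\cdot)$ whose parent value was already conditioned on, and simulate these $s\le c$ calls in order while fixing $h_1,\dots,h_{d-1}$ arbitrarily. Each turn of the simulation either probes a fresh entry of $\mathbf{h}_d$ (success with probability $\tfrac12$) or, on a repeated $a_x$, flips an independent coin and samples a uniform one-time element; the total number of turns is $|B^d|\le c\tau$, each entry of $\mathbf{h}_d$ is touched at most once, so $c\tau$-wise independence of $\mathbf{h}_d$ gives that the recorded values are uniform and independent conditioned on non-failure, \emph{regardless of $h_1,\dots,h_{d-1}$}. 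Bayes then restores uniformity of $\mathbf{h}_1,\dots,\mathbf{h}_{d-1}$, closing the induction. Iterating over all $t$ levels, the probability that no simulation fails is $\prod_{d=1}^{t}2^{-|B^d|}=2^{-|\widetilde P(S)|}$ (note the product runs over levels $\le t$, since the level-$(t+1)$ ``layer'' costs nothing), and conditioned on that the values at $S$ are uniform in $[n]$, giving the claimed $2^{-|\widetilde P(S)|}/n^c$.

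The only genuinely new bits to check are purely definitional: (i) the notion of ``parent'' and ``level'' for multi-indices is consistent with the tree having $k$ subtrees glued at a common root, so that $\widetilde P(S)$ really is the disjoint union over the relevant trees of the level-$(\le t)$ portions of their internal paths; and (ii) the $(t+1)$-th coordinate is never incremented inside $\extwalk$ (it is only advanced by the outer \texttt{for} loop in Algorithm~\ref{algo:extend-multi-walk}), so no $\mathbf{h}_d$ query is ever associated with a ``level-$(t+1)$ move'' — hence the layer $d=t+1$ contributes a factor $1$, not $\tfrac12$, to the product. I expect the main (minor) obstacle to be making the index bookkeeping airtight: specifically verifying that the groups $B^d_j$ are still in bijection with distinct $\extwalk(d,\cdot,\cdot)$ calls when $d\le t$ even though indices from different walk trees interleave in the global order — but since indices in one group share the full $(t+1-d)$-suffix (which includes the tree label $\ell_{t+1}$), groups never interleave and the argument of Lemma~\ref{lemma:extend-is-random} transfers directly. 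All the probabilistic content is already in Lemma~\ref{lemma:extend-is-random}; this lemma is essentially its ``$k$ disjoint copies glued at costless roots'' version.
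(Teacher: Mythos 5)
Your proposal matches the paper's proof, which is itself only a two-sentence sketch stating that the argument is identical to Lemma~\ref{lemma:extend-is-random} via downwards induction on $d\in[t+1]$, with $2^{-|\widetilde{P}(S)|}$ accounting for the coin flips when tracing $P(S)$. Your additional checks (that the level-$(t+1)$ layer costs no factor of $\tfrac12$, and that groups $B^d_j$ never interleave across trees because the shared suffix includes $\ell_{t+1}$) are exactly the right details to verify and are consistent with the paper's definitions of level, $\widetilde{P}(S)$, and parenthood for multi-indices.
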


\begin{proofsketch}
The proof is identical to that of Lemma~\ref{lemma:extend-is-random}. Namely, the proof is by downwards induction on $d\in [t+1]$. For each $d\in [t+1]$, having observed $\mathbf{T}(\ell)$ for every $\ell\in P(S)$ of level larger than $d$, $\mathbf{h}_1,\dots, \mathbf{h}_d$ are still uniformly random. The term $2^{-|\widetilde{P}(S)|}$ accounts for the fact that we may observe $\mathbf{h}_d(y) = 0$ when tracing the paths in $P(S)$.
\end{proofsketch}

\paragraph*{Coupling.} Fix a pair of vertices $u,v\in [n], u\ne v$ in $G_{a,\mathbf{h}}$. We will prove Lemma~\ref{lemma:multi-lb} by coupling the standard multi-walk with a family of extended multi-walks. First, we define the following collection of good events. For every pair of $\tau$-bounded multi-indices $\ell^1 \ne \ell^2$, let $\good^{\ell^1,\ell^2}(h, (x_i)_{i\in [k]})$ be the following event about $T=\Mwalk^{\std}(h, (x_i)_i)$.

\begin{enumerate}
    \item $T(\ell^1) = u$ and $T(\ell^2) = v$.
    \item For every $\ell^3 < \ell^1$, $T(\ell^3) \ne u$.
    \item For every $\ell^4 < \ell^2$, $T(\ell^4) \ne v$.
\end{enumerate}
Having imposed Condition 2 and 3, events $\{ \good^{\ell^1,\ell^2}\}_{\ell^1,\ell^2}$ are mutually disjoint. Therefore, we have
\begin{align}
\Pr_{\mathbf{h},(\mathbf{x}_i)}[u,v\in \Out_{a, \mathbf{h}}(\{ \mathbf{x}_i \})] \ge \sum_{\ell^1\ne \ell^2} \Pr_{\mathbf{h},\mathbf{x}_i}[\good^{\ell^1,\ell^2}(\mathbf{h}, (\mathbf{x}_i))]. \label{eq:multi-goal}
\end{align}

The following lemma lower-bounds $\Pr_{\mathbf{h},\mathbf{x}_i}[\good^{\ell^1,\ell^2}(\mathbf{h}, (\mathbf{x}_i))]$ for every pair $(\ell^1, \ell^2)$. It is proved by extending ideas behind Lemma~\ref{lemma:exists-surgical}-\ref{lemma:extend-lb-standard}.

\begin{lemma}\label{lemma:multi-surgical-upperbound}
Fix $h_1,\dots, h_t$, $(x_1,\dots, x_k)$. Let $S = \{\ell^1,\ell^2\}$. We have
\begin{align}
&~~~~ \mathbbm{1}\{\good^{\ell^1,\ell^2}(h, (x_i)_i)\} \notag \\
&\ge \Pr_{\mathbf{T}^S\sim \Mwalk^S(h, (x_i)_i)} \left[ \mathbf{T}^S(\ell^1) = u \land \mathbf{T}^S(\ell^2) = v \right] - \notag \\
&~~~~ \sum_{\substack{r^1 < r^2 < \ell^1: r^2_{t+1} = \ell^1_{t+1} \\ A = \{r^1, r^2\}}} \Pr_{\mathbf{T}^{S\cup A}} \left[ \left(  \mathbf{T}^{S\cup A}(\ell^1) = u \land \mathbf{T}^{S\cup A}(\ell^2) = v \right) \land (\mathbf{T}^{S\cup A}(r^1) \simeq \mathbf{T}^{S\cup A}(r^2)) \right] - \notag \\
&~~~~ \sum_{\substack{r^1 < r^2 < \ell^2: r^2_{t+1} = \ell^2_{t+1} \\ A = \{r^1, r^2\}}} \Pr_{\mathbf{T}^{S\cup A}} \left[ \left(  \mathbf{T}^{S\cup A}(\ell^1) = u \land \mathbf{T}^{S\cup A}(\ell^2) = v \right) \land (\mathbf{T}^{S\cup A}(r^1) \simeq \mathbf{T}^{S\cup A}(r^2)) \right] - \notag \\
&~~~~ \sum_{\substack{r < \max(\ell^1, \ell^2) \\ A = \{r\}} } \Pr_{\mathbf{T}^{S\cup A}} \left[ \left(  \mathbf{T}^{S\cup A}(\ell^1) = u \land \mathbf{T}^{S\cup A}(\ell^2) = v \right) \land \mathbf{T}^{S\cup A}(r) \in \{u, v\} \right] - \notag \\
&~~~~ \sum_{\substack{r < \max(\ell^1, \ell^2): \width(r) = \tau \\ A = \{r\}} } \Pr_{\mathbf{T}^{S\cup A}} \left[ \left(  \mathbf{T}^{S\cup A}(\ell^1) = u \land \mathbf{T}^{S\cup A}(\ell^2) = v \right) \land \mathbf{T}^{S\cup A}(r)\ne \star \right]. \label{eq:surgical-lowerbound}
\end{align}
\end{lemma}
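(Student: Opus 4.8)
\noindent The plan is to replay, essentially line by line, the development of Section~\ref{sec:cjww-extend}, using multi-indices in place of indices throughout and adding one new family of ``refutations'' to absorb Conditions~$2$ and~$3$ in the definition of $\good^{\ell^1,\ell^2}$. First dispose of the easy direction: if $\good^{\ell^1,\ell^2}(h,(x_i)_i)$ holds then the left side of~\eqref{eq:surgical-lowerbound} is $1$, while its right side is at most $\Pr_{\mathbf{T}^S}[\mathbf{T}^S(\ell^1)=u\land \mathbf{T}^S(\ell^2)=v]\le 1$, so the inequality is immediate. Hence assume $\good^{\ell^1,\ell^2}$ fails, so that $\mathbbm{1}\{\good^{\ell^1,\ell^2}\}=0$ and it remains to prove the right side is $\le 0$. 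By exactly the accounting in the proof of Lemma~\ref{lemma:extend-lb-standard}, this reduces to showing that every tensor $T^\ast\in\supp(\Mwalk^S(h,(x_i)_i))$ with $T^\ast(\ell^1)=u$ and $T^\ast(\ell^2)=v$ has its probability mass $\Pr[\mathbf{T}^S=T^\ast]$ subtracted by at least one of the four correction sums on the right side.

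Next I would set up the combinatorial objects. Generalize Definitions~\ref{def:type-1}, \ref{def:type-2} and~\ref{def:surgical} to multi-indices, reading $\probe$ and $\pre$ off Algorithm~\ref{algo:extend-multi-walk} exactly as one reads them off Algorithm~\ref{algo:extended-walk}. Besides the multi-index \emph{type-$1$} refutations (a pair $\{r^1,r^2\}$ with $r^1<r^2$ and $T^S(r^1)\simeq T^S(r^2)$, where either $r^2<\ell^1$ and $r^2_{t+1}=\ell^1_{t+1}$, or $r^2<\ell^2$ and $r^2_{t+1}=\ell^2_{t+1}$) and \emph{type-$2$} refutations (a singleton $\{r\}$ with $r<\max(\ell^1,\ell^2)$, $\width(r)=\tau$ and $T^S(r)\ne\star$), I introduce a new \emph{type-$3$} refutation: a singleton $\{r\}$ with $r<\max(\ell^1,\ell^2)$ and $T^S(r)\in\{u,v\}$. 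The notion of a \emph{surgical} refutation $A$ with respect to $S$ is copied verbatim from Definition~\ref{def:surgical}: for every $i\in[t]$, no two distinct level-$i$ indices of $P(A)\setminus P(S)$ are conflicting, and no level-$i$ index of $P(A)\setminus P(S)$ is conflicting with a level-$i$ index of $P(S)$.

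The heart of the argument is the multi-walk version of Lemma~\ref{lemma:exists-surgical}. Couple $\Mwalk^S(h,(x_i)_i)$ against the standard multi-walk $T=\Mwalk^{\std}(h,(x_i)_i)$, and split on whether $T^\ast$ agrees with $T$ on the whole prefix of the walk up to $\max(\ell^1,\ell^2)$. If it does, then $T(\ell^1)=u$ and $T(\ell^2)=v$, so $\good^{\ell^1,\ell^2}$ can fail only through Condition~$2$ or~$3$: there is $\ell^3<\ell^1$ with $T(\ell^3)=u$ (or $\ell^4<\ell^2$ with $T(\ell^4)=v$); since no collision or long hike occurred on that path, $\{\ell^3\}$ (resp.\ $\{\ell^4\}$) is a type-$3$ refutation that is moreover automatically surgical, because a collision-free path cannot carry a repeated $\probe$-value. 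If instead $T^\ast$ first deviates from $T$ at some index $r\le\max(\ell^1,\ell^2)$, then, by the ``Comparing $\mathbf{T}^S$ with $T$'' analysis, that deviation is caused either by a collision---yielding a type-$1$ refutation witnessed by a pair of $\pre$-indices lying below $\ell^1$ in its tree or below $\ell^2$ in its tree---or by a long hike---yielding a type-$2$ refutation at a width-$\tau$ index below $\max(\ell^1,\ell^2)$. If the refutation obtained this way is not surgical, run the level-by-level conflict-chasing of Lemma~\ref{lemma:exists-surgical}: at the top-most offending level $i$, replace $A$ by the pair of $\pre$'s of the first conflicting level-$i$ index, obtaining a refutation whose surgical requirement holds at all levels $\ge i$; after at most $t$ rounds one reaches a surgical refutation, and one checks en route that the type and the location constraints (the tree of $r^2$, and $<\ell^1$, $<\ell^2$ or $<\max(\ell^1,\ell^2)$) are inherited, since every replacement index is the $\pre$ of an index already on $P(A)\cup P(S)$.

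Finally, I would invoke the multi-walk analog of Lemma~\ref{lemma:surgical-upper-bound}: running $\Mwalk^S$ and $\Mwalk^{S\cup A}$ in parallel and using surgicality of $A$, $\Mwalk^{S\cup A}$ encounters no extra conflict while producing $T^\ast$, so $\Pr[\mathbf{T}^S=T^\ast]=\Pr[\mathbf{T}^{S\cup A}=T^\ast]$ on the common sub-tensor, and $A$ is a refutation of the matching kind for that tensor. Summing over all such $T^\ast$ and over the admissible $A$'s shows the first probability term on the right side of~\eqref{eq:surgical-lowerbound} is dominated by the sum of the four correction terms, completing the proof. The main obstacle I anticipate is precisely this case analysis together with the conflict-chasing: one must pin down how the standard multi-walk (a single shared set $D$ that halts on every collision, including cross-tree ones) and the $S$-extended multi-walk (per-level sets $D_i$ that remember only the $P(S)$-entries and continue for up to $\tau$ steps) can diverge, and certify that the surgical refutation extracted always lands in the range summed over by the corresponding correction term; the bookkeeping is heavier than in the single-walk case because of the coordinate $\ell_{t+1}$ and because $P(\ell^i)$ now contains the roots of all earlier trees, and a little extra care is needed to see that the type-$3$ refutations used in the no-deviation case really are surgical, so that no tree-changing conflict-chasing is ever needed there.
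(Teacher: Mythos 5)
The proposal is correct and follows essentially the same approach as the paper: both introduce type-$3$ (singleton, value in $\{u,v\}$) refutations to absorb Conditions $2$ and $3$ of $\good$, both carry over the surgical-refutation framework and conflict-chasing from Lemma~\ref{lemma:exists-surgical}, and both conclude via the $\Pr[\mathbf{T}^S=T^*]=\Pr[\mathbf{T}^{S\cup A}=T^*]$ comparison of Lemma~\ref{lemma:surgical-upper-bound}. The one noteworthy organizational difference is the case split: you split on whether $T^*$ deviates from the standard walk anywhere on the prefix, whereas the paper's Claim~\ref{claim:exists-refute} splits on whether $(T(\ell^1),T(\ell^2))=(u,v)$, then further on whether $T^*(r)=T(r)$ at the witness $r$; the paper's split is the cleaner one because it keeps every deviation localized to a path $P(r)$ with $T(r)\ne\star$, which is what lets one rule out collision-type deviations and land on a long hike (Condition $3$). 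With your split, the sub-case where the first deviation is a collision in a tree other than $\ell^1_{t+1},\ell^2_{t+1}$ does not immediately yield a Condition-$1$/$2$ refutation as you state, and needs the same $T(r)\ne\star$ observation to redirect to a type-$2$ or type-$3$ refutation; you flag this ("a little extra care is needed") but do not carry it out, so it is the one place where your sketch is thinner than the paper's argument.
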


\begin{proof}

The right hand side is always bounded by $1$. Therefore, if $\good^{\ell^1,\ell^2}(h, (x_i)_i) = 1$, there is nothing to prove. Now suppose $\good^{\ell^1,\ell^2}(h, (x_i)_i) = 0$. Consider a tensor $T^* \in \supp(\Mwalk^{S}(h, (x_i)_i)$ such that $T^*(\ell^1) = u, T^*(\ell^2) = v$.

\paragraph*{Defining refutations.} We claim the following.

\begin{claim}\label{claim:exists-refute}
Suppose $\good^{\ell^1,\ell^2}(h, (x_i)_i) = 0$. Consider a tensor $T^* \in \supp(\Mwalk^{S}(h, (x_i)_i)$ such that $T^*(\ell^1) = u, T^*(\ell^2) = v$. At least one of the following is true for $T^*$.
\begin{enumerate}
    \item There is $\{r^1, r^2\}$ such that $r^1 < r^2 < \ell^1$, $r^2_{t+1} = \ell^1_{t+1}$ and $T^*(r^1) \simeq T^*(r^2)$.
    \item There is $\{r^1, r^2\}$ such that $r^1 < r^2 < \ell^2$, $r^2_{t+1} = \ell^2_{t+1}$ and $T^*(r^1) \simeq T^*(r^2)$.
    \item There is $\{r\}$ such that $r < \max(\ell^1, \ell^2)$, $\width(r) = \tau$ and $T^*(r) = T(r) \ne \star$.
    \item There is $\{r\}$ such that $r < \max(\ell^1, \ell^2)$ and $T^*(r) =  T(r) \in \{u, v\}$.
\end{enumerate}
\end{claim}

\begin{subproof}
Consider $T = \Mwalk^{\std}(h, (x_i)_i)$. Suppose $(T(\ell^1), T(\ell^2))\ne (u,v)$. Since we have $(T^*(\ell^1),T^*(\ell^2)) = (u, v)$, this implies that the $S$-extended multi-walk must have deviated from the standard multi-walk before reaching $\ell^1$ or $\ell^2$. The deviation happens due to either a collision or a long hike. In either case, at least one of Condition $1$-$3$ happens.

Now suppose $(T(\ell^1), T(\ell^2))= (u,v)$ but $\good^{\ell^1,\ell^2}(u,v) = 0$. We show that Condition $3$ or $4$ holds for $T^*$. In fact, $\good^{\ell^1,\ell^2}(u,v) = 0$ implies that there is $r<\max(\ell^1,\ell^2)$ such that $T(r) \in \{u, v\}$. Now, if we have $T^*(r) \in \{u, v\}$, Condition $4$ holds for $T^*$ and we are done. Otherwise we know the $S$-extended walk deviates from $\Mwalk^{\std}(x,(x_i))$ before reaching $r$. Since $T(r)\ne \star$, the deviation is not due to a collision. Then there must be one $r' < r$ with $\width(r') = \tau$ such that $T^*(r') = T(r') \ne \star$. In this case, Condition $3$ holds for $T^*$.
\end{subproof}

Let $A$ be a set of at most two multi-indices. We call $A$ a refutation for $T^*$, if it satisfies one of the $4$ conditions in Claim~\ref{claim:exists-refute}. 

\paragraph*{Surgical refutations.} We extend the idea of ``surgical refutations'' to multi-walks. Before we continue, we associate with $T^* \in \supp(\Mwalk^{S}(h, (x_i)_i)$ the following information.
\begin{enumerate}
    \item For every $\ell$ of level less than $t+1$, define
    \[
    \probe(\ell) = 
    \begin{cases}
    a_x, & \text{ If $T^*(\ell)$ was set by accessing $h_{\level(\ell)}(x)$. } \\
    \star, & \text{Otherwise (namely, $\Mwalk^S(h, (x_i)_i)$ has never tried to write $T^*(\ell)$).}
    \end{cases}
    \]
    \item For every $\ell$ with $\probe(\ell) \ne \star$, we define $\pre(\ell) := \sup\{ r : r < \ell, T^*(r) \ne \star\}$. Note that we have $\probe(\ell) = a_{T^*(\pre(\ell))}$.
\end{enumerate}

Suppose $A$ is a refutation for $T^*$. We further call $A$ a surgical refutation for $T^*$, if all of the following hold.
\begin{itemize}
    \item $A$ is a refutation satisfying at least one condition from Claim~\ref{claim:exists-refute}.
    \item For every $i\le t$ and every two distinct level-$i$ indices $r^1, r^2\in \widetilde{P}(A) \cup \widetilde{P}(S)$ such that $\{r^1,r^2\}\not\subseteq P(S)$, it holds that $\probe(r^1) \ne \probe(r^2)$.
\end{itemize}

We have the following claim.

\begin{claim}\label{claim:exists-surgical}
Let $T^* \in \supp(\Mwalk^{S}(h, (x_i)_i)$ be a tensor with $T^*(\ell^1) = u, T^*(\ell^2) = v$. There is a surgical refutation for $T^*$.
\end{claim}

\begin{subproof}[Proofsketch]
By Claim~\ref{claim:exists-refute}, we know there exists at least one refutation for $T^*$. We argue the existence of surgical refutation below.

We first argue that if there is no refutation satisfying Condition $1$ or $2$, then every refutation is surgical. Take $\{r\}$ to be a refutation that satisfies Condition $3$ or $4$. Suppose $\probe(r^1) = \probe(r^2)$ for some $r^1,r^2\in P(r)\cup P(S)$ such that $\probe(r^1) = \probe(r^2)$, $r^1\ne r^2$ and $\{r^1,r^2\}\not\subseteq P(S)$. We argue that at least one of $r^1,r^2$ lies in $P(S)$. If it is not the case, then we know $P(r)$ contains two entries that have the same $\probe$ value. This is contradictory to $T(r)\ne \star$\footnote{Recall that the standard multi-walk halts a walk tree immediately after finding a collision. Therefore, if $P(r)$ contains a collision, the standard does not have a chance to set $T(r)$ to be non-star.}. However, if one of $r^1,r^2$ lies in $P(S)$, then we can find a refutation satisfying Condition $1$ or $2$ by taking $\pre(r^1)$ and $\pre(r^2)$ and noting that $\pre(\ell)_{t+1}=\ell_{t+1}$ for every multi-index $\ell$. This is again a contradiction. So, such $r^1, r^2$ do not exist and $\{r\}$ is itself surgical.

Now we consider the case that there exists a refutation satisfying Condition $1$ or $2$, it is easy to see that we can use the same argument as Lemma~\ref{lemma:exists-surgical} to find a surgical refutation satisfying Condition $1$ or $2$.
\end{subproof}

\paragraph*{Wrapping-up.} We are ready to conclude the proof now. For every $T^* \in \supp(\Mwalk^{S}(h, (x_i)_i)$ such that $T^*(\ell^1) = u, T^*(\ell^2) = v$, by Claim~\ref{claim:exists-surgical}, there is a surgical refutation for $T^*$. Let $A$ be the surgical refutation. Using the same argument as the proof for Lemma~\ref{lemma:surgical-upper-bound}, we have
\[
\Pr_{\mathbf{T}^S\sim \Mwalk^S(h, (x_i)_i)} \left[ \mathbf{T}^S = T^* \right] = \Pr_{\mathbf{T}^{S\cup A}\sim \Mwalk^{S\cup A}} \left[ \mathbf{T}^{S\cup A} = T^* \right].
\]
Summing up all relevant $T^*\in\supp(\Mwalk^{S}(h, (x_i)_i)$ completes the proof of Lemma~\ref{lemma:multi-surgical-upperbound}.
\end{proof}

\subsection{Proof of Lemma~\ref{lemma:multi-lb}}\label{sec:proof-wrap-up}

Having established Lemma~\ref{lemma:multi-extend-is-random} and \ref{lemma:multi-surgical-upperbound}, we are ready to prove Lemma~\ref{lemma:multi-lb}.

\begin{reminder}{Lemma~\ref{lemma:multi-lb}}
    Let $t = \frac{1}{2}\log (n/k)$ and $\kappa = 20\log n$. Sample $\mathbf{h}\sim \calH^{n,m,t,\kappa}$ and $\mathbf{x}_1,\dots, \mathbf{x}_k\sim [n]$. Then for every $u, v\in [n]$, $u\ne v$, it holds that:
    \[
    \Pr_{\mathbf{h}, (\mathbf{x}_i)}[u, v\in \Out_{a,\mathbf{h}}(\{\mathbf{x}_i\})] \ge \Omega\left( \frac{k}{F_2(a)} \right).
    \]
\end{reminder}

\begin{proof}
We first observe that $\Omega\left(  \frac{k^2}{n^2} \right)$ is a trivial lower bound when $k\ge 2$. To see this, note that with probability $\Omega\left( \frac{k}{n} \right)$ we have $u\in \{\mathbf{x}_1,\dots, \mathbf{x}_{k/2}\}$. Also, with probability $\Omega\left( \frac{k}{n} \right)$ we have $v\in \{ \mathbf{x}_{k/2+1},\dots, \mathbf{x}_{k} \}$. Therefore, with probability $\Omega\left( \frac{k^2}{n^2} \right)$ we have $u,v\in \{\mathbf{x}_i\}$.

For every $\gamma \le t$, we compare $\calH^{n,m,\gamma, \kappa}$ and $\calH^{n,m,t, \kappa}$. Note that having more layers of hash would not degrade the connectivity of the graph $G_{a,h}$. Therefore, it is easy to see that
\[
\Pr_{\mathbf{h}\sim \calH^{n,m,t,\kappa}, (\mathbf{x}_i)}[u, v\in \Out_{a,\mathbf{h}}(\{\mathbf{x}_i\})] \ge \Pr_{\mathbf{h}\sim \calH^{n,m,\gamma,\kappa}, (\mathbf{x}_i)}[u, v\in \Out_{a,\mathbf{h}}(\{\mathbf{x}_i\})].
\]
Let $C \ge 1$ be a large constant to be specified later. Set $\gamma = \frac{1}{2}\log(n^2/(kF_2(a))) - C$. Since $\Omega\left( \frac{k^2}{n^2} \right)$ is a trivial lower bound, we assume $\frac{k^2}{n^2}\le c \frac{k}{F_2(a)}$ holds for a sufficiently small $c > 0$ so that $\gamma \ge \frac{1}{2}\log(1/c) - C \ge 5$. In the following, we prove 
\begin{align}
\Pr_{\mathbf{h}\sim \calH^{n,m,\gamma,\kappa},(\mathbf{x}_i)}[u, v\in \Out_{a,\mathbf{h}}(\{\mathbf{x}_i\})] \ge \Omega\left( \frac{k}{F_2(a)} \right). \label{eq:multi-lb-goal}
\end{align}
From now on, we always use $\mathbf{h}$ to denote $\mathbf{h}\sim \calH^{n,m,\gamma, \kappa}$. We start with \eqref{eq:multi-goal}, which asserts that
\[
\Pr_{\mathbf{h},(\mathbf{x}_i)}[u,v\in \Out_{a, \mathbf{h}}(\{ \mathbf{x}_i \})] \ge \sum_{\ell^1\ne \ell^2} \Pr_{\mathbf{h},\mathbf{x}_i}[\good^{\ell^1,\ell^2}(\mathbf{h}, \mathbf{x}_i)].
\]
We use Lemma~\ref{lemma:multi-surgical-upperbound} and Lemma~\ref{lemma:multi-extend-is-random}. Consider Equation~\eqref{eq:surgical-lowerbound} given by Lemma~\ref{lemma:multi-surgical-upperbound}. We take expectation over $\mathbf{h},(\mathbf{x}_i)$ and sum up all pairs $(\ell^1,\ell^2)$ on both sides of \eqref{eq:surgical-lowerbound}. After that, we consider the right hand side of \eqref{eq:surgical-lowerbound}. We calculate the contribution from each line separately. The first line is
\[
E_1 := \sum_{\ell^1\ne \ell^2} \Pr_{\mathbf{T}^S\sim \Mwalk^S(h, (x_i)_i)} \left[ \mathbf{T}^S(\ell^1) = u \land \mathbf{T}^S(\ell^2) = v \right] \ge \sum_{\ell^1\ne\ell^2} \frac{2^{-|\widetilde{P}(\ell^1,\ell^2)|}}{n^2}.
\]
The second line is
\[
\begin{aligned}
E_2 &:= \sum_{\substack{\ell^1\ne \ell^2 \\ r^1 < r^2 < \ell^1: r^2_{\gamma+1} = \ell^1_{\gamma+1} \\ A = \{r^1, r^2\}}} \Pr_{\mathbf{T}^{S\cup A}} \left[ \left(  \mathbf{T}^S(\ell^1) = u \land \mathbf{T}^S(\ell^2) = v \right) \land (\mathbf{T}^{S\cup A}(r^1) \simeq \mathbf{T}^{S\cup A}(r^2)) \right] \\
& \le \sum_{\substack{\ell^1\ne \ell^2 \\ r^1<r^2 < \ell^1:r^2_{\gamma+1}=\ell^{1}_{\gamma+1}}} \frac{2^{-|\widetilde{P}(\{\ell^1,\ell^2,r^1,r^2\})|} \cdot F_2(a) }{n^4} + \sum_{\substack{\ell^1\ne \ell^2\\ r<\ell^1,r_{\gamma+1}=\ell^1_{\gamma+1}}} \frac{2^{-|\widetilde{P}(\{\ell^1,\ell^2,r\})|}\cdot F_{\infty}(a) }{n^3}.
\end{aligned}
\]
Here, the second term appears because it might be possible that $\ell^2\in \{r^1,r^2\}$. In this case, conditioning on $\mathbf{T}^S(\ell^1) = u, \mathbf{T}^S(\ell^2)=v$ (which happens with probability $\frac{1}{n^2}$), the probability that $\mathbf{T}^S(r) \simeq \mathbf{T}^S(\ell^2)$ is bounded by $\frac{F_{\infty}(a)}{n}$. Similarly, the third line is
\[
E_3 \le \sum_{\substack{\ell^1\ne \ell^2\\ r^1<r^2 < \ell^2:r^2_{\gamma+1}=\ell^{2}_{\gamma+1}}} \frac{2^{-|\widetilde{P}(\{\ell^1,\ell^2,r^1,r^2\})|} \cdot F_2(a) }{n^4} + \sum_{\substack{\ell^1\ne \ell^2\\ r<\ell^2,r_{\gamma+1}=\ell^2_{\gamma+1}}} \frac{2^{-|\widetilde{P}(\{\ell^1,\ell^2,r\})|}\cdot F_{\infty}(a) }{n^3}.
\]
The fourth line is
\[
\begin{aligned}
E_4 &:= \sum_{\substack{\ell^1\ne \ell^2 \\ r < \max(\ell^1, \ell^2) \\ A = \{r\}} } \Pr_{\mathbf{T}^{S\cup A}} \left[ \left(  \mathbf{T}^S(\ell^1) = u \land \mathbf{T}^S(\ell^2) = v \right) \land \mathbf{T}^{S\cup A}(r) \in \{u, v\} \right] \\
&\le \sum_{\substack{\ell^1\ne \ell^2\\ r < \max(\ell^1,\ell^2)}} \frac{2^{-|\widetilde{P}(\{\ell^1,\ell^2,r\})|}\cdot 2}{n^3}.
\end{aligned}
\]
Finally, the last line is
\[
\begin{aligned}
E_5 &:= \sum_{\substack{\ell^1\ne \ell^2 \\ r < \max(\ell^1, \ell^2): \width(r) = \tau \\ A = \{r\}} } \Pr_{\mathbf{T}^{S\cup A}} \left[ \left(  \mathbf{T}^S(\ell^1) = u \land \mathbf{T}^S(\ell^2) = v \right) \land \mathbf{T}^{S\cup A}(r)\ne \star \right] \\
&\le  \sum_{\substack{\ell^1\ne \ell^2\\ r < \max(\ell^1,\ell^2) : \width(r) = \tau}} \frac{2^{-|\widetilde{P}(\{\ell^1,\ell^2,r\})|}}{n^2}.
\end{aligned}
\]
By Lemma~\ref{lemma:multi-surgical-upperbound}, we conclude that
\[
\sum_{\ell^1\ne \ell^2} \Pr_{\mathbf{h},\mathbf{x}_i}[\good^{\ell^1,\ell^2}(\mathbf{h}, \mathbf{x}_i)] \ge E_1 - E_2 - E_3 - E_4 - E_5.
\]
Now we bound $E_1,\dots, E_5$. First,
\[
\begin{aligned}
E_1 
& \ge \sum_{\ell^1\ne\ell^2} \frac{2^{-|\widetilde{P}(\ell^1,\ell^2)|}}{n^2} \\
& = \sum_{\ell^1,\ell^2} \frac{2^{-|\widetilde{P}(\ell^1,\ell^2)|}}{n^2} - \sum_{\ell} \frac{2^{-|\widetilde{P}(\ell^1)|}}{n^2} \\
& \ge \frac{k^2}{n^2} (2 - 2^{-\tau} )^{2\gamma} - \frac{k}{n^2} (2-2^{-\tau})^\gamma & \text{(Similar to \eqref{eq:cjww-first-trick})}\\
& \ge \frac{k^2}{2 n^2} \cdot 2^{2\gamma} \\
& \ge \Omega\left( \frac{k}{2^{2C} F_2(a)} \right).
\end{aligned}
\]
Here, the big-$\Omega$ hides an absolute constant independent of $n$ and $C$. The second-to-last inequality is valid as long as $\gamma \ge 5$ and $\gamma\ll 2^{\tau}$. Second,
\[
\begin{aligned}
E_2 &\le \sum_{\substack{\ell^1, \ell^2 \\ r^1, r^2 : r^2_{\gamma+1} = \ell^{1}_{\gamma+1}}} \frac{2^{-|\widetilde{P}(\{\ell^1,\ell^2,r^1,r^2\})|} \cdot F_2(a) }{n^4} + \sum_{\substack{\ell^1, \ell^2\\ r: r_{\gamma+1} = \ell^1_{\gamma+1}}} \frac{2^{-|\widetilde{P}(\{\ell^1,\ell^2,r\})|}\cdot F_{\infty}(a) }{n^3} \\
&\le \frac{24 k^3 2^{4(\gamma+1)} F_2(a)}{n^4} + \frac{6\cdot k^2 2^{3(\gamma+1)} F_{\infty}(a) }{n^3}. &\text{(Corollary~\ref{corol:counting})} \\
&\le O\left( \frac{k}{F_2(a) \cdot 2^{3C}} \right). & \text{($F_{\infty}(a)\le F_2(a)^{1/2}$)}
\end{aligned}
\]
Again the big-$O$ hides a constant independent of $C$. A similar bound holds for $E_3$. Then,
\[
\begin{aligned}
E_4 \le \sum_{\substack{\ell^1, \ell^2, r} } \frac{2^{-|\widetilde{P}(\{\ell^1,\ell^2,r\})|}\cdot 2}{n^3} \le \frac{6 k^3 2^{3(\gamma+1)}}{n^3} \le O\left( \frac{k}{F_2(a) \cdot 2^{3C}} \right).
\end{aligned}
\]
For $E_5$, we enumerate the index $j\in [\gamma]$ such that $r_j = \tau$. Similar to \eqref{eq:cjww-third-trick}, it follows that
\[
E_5 \le \sum_{\substack{\ell^1, \ell^2, r : \width(r) = \tau}} \frac{2^{-|\widetilde{P}(\{\ell^1,\ell^2,r\})|}}{n^2} \le k^3 \cdot \frac{2^{-\tau}}{n^2}\cdot \gamma \cdot (\tau+1)^2 \cdot 6\cdot 2^{3(\gamma+1)} \le O\left( \frac{1}{n^3 \cdot 2^{3C}}  \right).
\]
Finally, choosing a large enough $C \ge 1$ ensures that
\[
E_1 - E_2 - E_3 - E_4 - E_5 \ge \Omega\left( \frac{k}{F_2(a)\cdot 2^{2C}} \right) \ge \Omega\left( \frac{k}{F_2(a)} \right),
\]
which completes the proof.
\end{proof}

\section{The Connecting Property of the Pseudorandom Hash}\label{sec:connecting}

This section, we prove Theorem~\ref{theo:connecting-intro}. The precise statement we will prove is the following.

\begin{theorem}\label{theo:connectingtheo}
   For every $c\ge 1$, there is a constant $D_c\ge 1$ satisfying the following. For all sufficiently large $n\ge 1$ and $m\ge n$, let $t=\frac{1}{2}\log(n) - D_c$ and $\kappa = 5c\log n$. Consider $\mathbf{h}\sim \calH^{n,m,t,\kappa}$. For every integer sequence $a\in [m]^n$ that contains distinct elements\footnote{Namely, $a_i\ne a_j$ holds for every $1\le i < j \le n$ and every $k$ vertices $1\le u_1 < \dots < u_k\le n$.}, it holds that
   \[
   \Pr_{\mathbf{h}\sim \calH^{n,m,t,\kappa},\mathbf{x}}[ u_1,\dots, u_c \in \Out_{a,\mathbf{h}}(\mathbf{x})] \ge \Omega_c(n^{-c/2}).
   \]
\end{theorem}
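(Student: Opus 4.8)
The plan is to follow the same overall strategy as the proof of Lemma~\ref{lemma:double-lb} (and its multi-walk variant Lemma~\ref{lemma:multi-lb}), now with $c$ target vertices instead of $2$. Since $a$ is injective, $F_2(a)=n$, so the goal $\Omega_c(n^{-c/2})$ matches the ``main term'' $2^{c\gamma}/n^c$ we will extract. First I would translate the event to the walk tree: because the standard walk halts as soon as it revisits a colliding vertex, the events $\calE(\ell^1,\dots,\ell^c) = \{\mathbf{T}(\ell^i) = u_i \ \forall i\}$ over \emph{ordered} tuples of distinct indices are mutually disjoint, giving
\[
\Pr_{\mathbf{h},\mathbf{x}}[u_1,\dots,u_c\in \Out_{a,\mathbf{h}}(\mathbf{x})] \ge \sum_{\ell^1,\dots,\ell^c \text{ distinct}} \Pr_{\mathbf{h},\mathbf{x}}[\mathbf{T}(\ell^i)=u_i \ \forall i].
\]
(One must also impose ``minimality'' conditions analogous to Conditions 2--3 of $\good^{\ell^1,\ell^2}$ so the events stay disjoint; with $c$ vertices this means, for each $i$, no earlier index maps to $u_i$.) Then I restrict the sum to $\tau$-bounded tuples lying in $I^\gamma\times\dots\times I^\gamma$ for a parameter $\gamma = \frac12\log n - D_c$, exactly as in Lemma~\ref{lemma:double-lb}.

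The core estimate is a $c$-fold generalization of Lemma~\ref{lemma:extend-lb-standard}: for fixed $h,x$ and a tuple $S=\{\ell^1,\dots,\ell^c\}$,
\[
\mathbbm{1}\{T(\ell^i)=u_i \ \forall i\} \ge \Pr_{\mathbf{T}^S}[\mathbf{T}^S(\ell^i)=u_i \ \forall i] - \big(\text{type-1 refutation terms}\big) - \big(\text{type-2 refutation terms}\big),
\]
where the refutation machinery (Definitions~\ref{def:type-1}--\ref{def:surgical}, Lemmas~\ref{lemma:exists-surgical}--\ref{lemma:surgical-upper-bound}) carries over verbatim since it never used $|S|=2$. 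Here we must also subtract terms for the minimality conditions (an extra index $r$ with $\mathbf{T}^{S\cup\{r\}}(r)\in\{u_1,\dots,u_c\}$), just like $E_4$ in the proof of Lemma~\ref{lemma:multi-lb}. Taking expectations over $\mathbf{h},\mathbf{x}$ and applying Lemma~\ref{lemma:extend-is-random}, the main term becomes $\sum_{\ell^1,\dots,\ell^c} 2^{1-|P(\{\ell^i\})|}/n^c$; using $|P(\{\ell^i\})|\le \sum_i |P(\ell^i)| - (c-1)$ and $\sum_\ell 2^{1-|P(\ell)|} \ge (2-2^{-\tau})^\gamma \ge 2^\gamma$, this is at least roughly $2^{c\gamma - O(1)}/n^c \ge \Omega_c(n^{-c/2})$ once $\gamma = \frac12\log n - D_c$. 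The error terms each involve $|S|+1 \le c+2$ branches, so Lemma~\ref{lemma:extend-is-random} needs $\kappa \ge (c+2)\tau$-wise independence, comfortably covered by $\kappa = 5c\log n$ and $\tau=5\log n$; and each error term carries an extra factor of $1/n$ (a collision $a_{\mathbf{T}(r^1)}=a_{\mathbf{T}(r^2)}$ happens with probability $\le F_2(a)/n^2 = 1/n$ since $a$ is injective, or a fresh hit of $\{u_i\}$ with probability $\le c/n$, or a long hike with probability $\le 2^{-\tau}$). Counting indices with Corollary~\ref{corol:counting} bounds each error sum by $O_c(2^{(c+2)\gamma}/n^{c+1}) \le O_c(2^{c\gamma}/n^c)\cdot O_c(2^{2\gamma}/n) = o_c(2^{c\gamma}/n^c)$, and the long-hike terms are polynomially smaller ($\le n^{-\omega(1)}$); so choosing $D_c$ large enough makes main minus errors $\ge \Omega_c(n^{-c/2})$.

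The main obstacle I anticipate is purely bookkeeping rather than conceptual: with $c$ target vertices the list of ``refutation types'' multiplies, since a false positive can now occur at any of the $c$ indices, and the minimality conditions generate $c$ families of extra refutations, so one must be careful to (i) state the right $c$-vertex analog of Lemma~\ref{lemma:exists-surgical}/\ref{lemma:surgical-upper-bound} with enough refutation slots (still only $A$ of size $\le 2$ suffices, since a single collision or long hike is always the \emph{first} deviation), and (ii) verify that after subtracting all $O_c(1)$ families of error sums, each bounded by Corollary~\ref{corol:counting} times an extra $1/n$, the constant in front of the main $2^{c\gamma}/n^c$ term stays positive. This is exactly the role of the constant $D_c$: enlarging $D_c$ shrinks $\gamma$, which shrinks the main term by $2^{-cD_c}$ but shrinks each error term by $2^{-(c+2)D_c}$, so for $D_c$ large the main term dominates — the same ``constant-chasing'' as in the proof of Lemma~\ref{lemma:double-lb}, just with $c$ in place of $2$. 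I would also double-check the edge cases $c=1$ (where the statement reduces to a lower bound matching Lemma~\ref{lemma:single-ub}'s upper bound up to constants, provable directly) and the trivial lower bound $\mathbf{x}=u_1,\ \mathbf{h}(a_{u_1})=u_2,\dots$ giving $\Omega(n^{-c})$, which ensures $\gamma\ge 5$ can be assumed throughout.
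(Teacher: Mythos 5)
Your proposal is essentially correct and uses the same core machinery as the paper (standard walk $\to$ extended walk $\to$ Lemma~\ref{lemma:extend-is-random}, Lemma~\ref{lemma:extend-lb-standard}, refutation types, counting via Corollary~\ref{corol:counting}). The one genuine difference is the initial decomposition. You propose ordered tuples with a fixed assignment $\ell^i \mapsto u_i$, plus minimality conditions (``no earlier index maps to $u_i$'') to make the events disjoint; this is the exact analogue of the $\good^{\ell^1,\ell^2}$ events from Lemma~\ref{lemma:multi-lb}, and it forces you to subtract the $E_4$-type error terms (extra index hitting some $u_j$). The paper instead sums over \emph{sorted} tuples $\ell^1 < \dots < \ell^c$ with the \emph{set}-equality event $\{T(\ell^1),\dots,T(\ell^c)\}=\{u_1,\dots,u_c\}$, and uses the observation that the walk tensor $T$ of an injective $a$ contains at most one repeated value, so the multiplicity of each sorted tuple is at most $2$, giving the bound $\mathbbm{1}[\text{all }u_i\text{ hit}] \ge \tfrac12 \sum_{\ell^1<\dots<\ell^c}\mathbbm{1}[\text{set equality}]$. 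The paper's trick is strictly lighter: it dispenses with both the minimality conditions and the $E_4$ error family, so there is one fewer category of refutation to manage. Both routes yield the same asymptotics, but the factor-of-$\tfrac12$ decomposition is the cleaner choice here precisely because $a$ is injective. Also note that since the theorem already fixes $t = \tfrac12 \log n - D_c$, the $I^\gamma$ restriction is unnecessary (it does no harm, but $\gamma$ and $t$ coincide).

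A few small slips worth flagging. First, your unqualified claim that the events $\{\mathbf T(\ell^i)=u_i,\forall i\}$ are ``mutually disjoint'' is false as stated (two different tuples can both hit, because one $u_i$ may be the duplicated value); your own parenthetical about minimality is what rescues it, so make that the primary statement rather than an afterthought. Second, the inequality $\sum_\ell 2^{1-|P(\ell)|}\ge(2-2^{-\tau})^\gamma\ge 2^\gamma$ runs the wrong way at the end since $2-2^{-\tau}<2$; the intended bound is $(2-2^{-\tau})^{ct}\ge 2^{ct-1}$ using $ct\ll 2^\tau$, as in the paper. Third, the independence bookkeeping: Lemma~\ref{lemma:extend-is-random} with $|S|=c+2$ (the type-1 refutation terms) requires $(c+2)\tau$-wise independence, which is $5(c+2)\log n$, \emph{not} ``comfortably covered'' by $\kappa=5c\log n$; this is actually a discrepancy in the theorem statement itself, but you should not repeat the assertion that the given $\kappa$ suffices without at least noting that $\kappa$ should be increased by an $O_c(1)$ factor. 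Finally, ``polynomially smaller ($\le n^{-\omega(1)}$)'' is internally contradictory; the long-hike term is of size $\widetilde O_c(n^{-(c+9)/2})$, which is polynomially smaller than the main term $\Theta_c(n^{-c/2})$ but not $n^{-\omega(1)}$ in absolute terms.
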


\paragraph*{Proving Theorem~\ref{theo:connecting-intro}.} Note that the seed length for $\calH^{n,m,t,\kappa}$ is bounded by $O(t\kappa\log(n+m))$. Although every hash $h$ in $ \supp(\calH^{n,m,t,\kappa})$ has co-domain $[n]\cup \{-1\}$, we can slightly modify the hash family by replacing every ``$-1$'' value in $h$ with $1\in [n]$. This modification does not degrade the connectivity of the graph $G_{a,h}$. Therefore, Theorem~\ref{theo:connectingtheo} implies Theorem~\ref{theo:connecting-intro}. 

Before proving Theorem~\ref{theo:connectingtheo}, we need another corollary, which is for later use. The proof is in Appendix~\ref{app:technical}.

\begin{corollary}\label{corol:counting2}
For any fixed positive integer $c$ and $t$, it holds that
\[
\sum_{\substack{\ell^1\le\ldots\le\ell^c\\ \exists i\in [c-1], \ell^i=\ell^{i+1}}}2^{1-|P^{(t)}(\{\ell^1,\ldots,\ell^c\})|}\le 2^{(c-1)\cdot (t+1)}.
\]
\end{corollary}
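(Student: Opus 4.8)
The plan is to deduce Corollary~\ref{corol:counting2} from Lemma~\ref{theo:counting} by regrouping the sum according to the underlying set of indices. The first observation is that a non-decreasing tuple $\ell^1\le\cdots\le\ell^c$ has $\ell^i=\ell^{i+1}$ for some $i\in[c-1]$ if and only if the set $S=\{\ell^1,\ldots,\ell^c\}$ has cardinality at most $c-1$: for a non-decreasing tuple, $\ell^i=\ell^j$ with $i<j$ forces $\ell^i=\ell^{i+1}$, so ``no repeated consecutive pair'' is equivalent to ``strictly increasing'', i.e.\ $|S|=c$. Since $P^{(t)}(S)=\bigcup_{\ell\in S}P^{(t)}(\ell)$ depends only on the set $S$ and not on multiplicities, I would split the left-hand side of the corollary over all sets $S\subseteq S^t$ with $1\le|S|\le c-1$.

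The second step is a small counting lemma: for a fixed set $S$ with $|S|=k$, the number of non-decreasing $c$-tuples whose underlying set is exactly $S$ equals the number of compositions of $c$ into $k$ positive parts (write $S$ in increasing order and record the multiplicity of each of its $k$ values), namely $\binom{c-1}{k-1}$; moreover every such tuple automatically contains a repeated consecutive pair because $k<c$. Combining the first two steps,
\[
\sum_{\substack{\ell^1\le\cdots\le\ell^c\\ \exists i\in[c-1],\ \ell^i=\ell^{i+1}}}2^{1-|P^{(t)}(\{\ell^1,\ldots,\ell^c\})|}
\;=\;\sum_{k=1}^{c-1}\binom{c-1}{k-1}\!\!\sum_{\substack{S\subseteq S^t\\ |S|=k}}\!\!2^{1-|P^{(t)}(S)|}
\;=\;\sum_{k=1}^{c-1}\binom{c-1}{k-1}\,f(k,t),
\]
and Lemma~\ref{theo:counting}, which gives $f(k,t)\le 2^{kt}$, reduces the problem to bounding the elementary sum $\sum_{k=1}^{c-1}\binom{c-1}{k-1}2^{kt}$.

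The third and final step is a crude binomial estimate with slack. Substituting $j=k-1$ and pulling out a factor $2^{t}$,
\[
\sum_{k=1}^{c-1}\binom{c-1}{k-1}2^{kt}
=2^{t}\sum_{j=0}^{c-2}\binom{c-1}{j}2^{jt}
\le 2^{t}\cdot 2^{(c-2)t}\sum_{j=0}^{c-2}\binom{c-1}{j}
=2^{(c-1)t}\bigl(2^{c-1}-1\bigr)
\le 2^{(c-1)(t+1)},
\]
where I used $2^{jt}\le 2^{(c-2)t}$ for $0\le j\le c-2$ and $\sum_{j=0}^{c-2}\binom{c-1}{j}=2^{c-1}-1$. (The case $c=1$ is vacuous since $[c-1]=\emptyset$, so we may assume $c\ge 2$ throughout.)

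I do not expect a serious obstacle here: the statement is a bookkeeping consequence of the already-established Lemma~\ref{theo:counting}. The only point deserving a line of care is the second step — checking that grouping by underlying set is a genuine partition (each eligible tuple has a unique underlying set of size $\le c-1$) and that the count of non-decreasing $c$-tuples over a fixed $k$-element set is exactly $\binom{c-1}{k-1}$ — after which the final inequality already holds with a multiplicative slack of $2^{c-1}/(2^{c-1}-1)$.
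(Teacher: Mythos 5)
Your proof is correct and takes essentially the same route as the paper: the paper classifies the tuples by the consecutive-equality pattern $(d_i)$ (there are $2^{c-1}-1$ patterns with at least one zero, each contributing at most $f(c-k,t)\le 2^{(c-1)t}$), while you group by the underlying set and count compositions — but $\sum_{k=1}^{c-1}\binom{c-1}{k-1}=2^{c-1}-1$, so the two decompositions are the same bookkeeping and give the identical final bound $(2^{c-1}-1)\cdot 2^{(c-1)t}\le 2^{(c-1)(t+1)}$.
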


Next, we prove Theorem~\ref{theo:connectingtheo}.

\begin{proof}
Since $a_i\ne a_j$ holds for every $1\le i<j\le n$, we have $F_2(a)=n$. For every $h_1,\dots, h_t$ and $x$, we consider the standard walk $T = \walk^{\std}(h,x)$. Since $T$ contains at most one pair of duplicate elements, we have
\[
\mathbbm{1}[u_1,\dots, u_c\in \Out_{a,h}(x)] \ge \frac{1}{2} \cdot \sum_{\ell^1 < \dots < \ell^c} \mathbbm{1}[\{T(\ell^1),\dots, T(\ell^c)\} = \{u_1,\dots, u_c\}].
\]
Taking an expectation over $\mathbf{h}\sim \calH^{n,m,t,\kappa}$ and $\mathbf{x}\sim [n]$, we have
\[
\begin{aligned}
& ~~~~ \Pr_{\mathbf{h_1},\dots, \mathbf{h}_t, \mathbf{x}}[u_1,\ldots,u_c\in \Out_{a,\mathbf{h}}(\mathbf{x})] 
 \notag \\ 
 & \ge \frac{1}{2}\sum_{\ell^1 < \ldots < \ell^c}  \Pr_{\mathbf{h}_1,\dots,\mathbf{h}_t, \mathbf{x}}\left[  \{ \mathbf{T}(\ell^1), \ldots, \mathbf{T}(\ell^c)\} = \{u_1, \ldots, u_c\} \right]. \label{eq:translate-to-tree-new}
\end{aligned}
\]

Fix one $\tau$-bounded tuple to the right hand side of \eqref{eq:translate-to-tree-new}. According to Lemma~\ref{lemma:extend-is-random} and \ref{lemma:extend-lb-standard}, it holds that
\[
\begin{aligned}
& ~~~~ \Pr_{\mathbf{h}, \mathbf{x}}\left[  \{ \mathbf{T}(\ell^1), \ldots, \mathbf{T}(\ell^c)\} = \{u_1, \ldots, u_c\} \land \left( \forall \omega^1 < \omega^2 < \ell^c, \mathbf{T}(\omega^1) \not\simeq \mathbf{T}(\omega^2) \right)\right]  \\
& \ge \Pr_{\mathbf{T} \sim \mathbf{T}^{\{\ell^1,\ldots,\ell^c\}}}[ \{ \mathbf{T}(\ell^1),\ldots, \mathbf{T}(\ell^c)\} = \{u_1, \ldots, u_c\} ] - \\
& ~~~~ \sum_{\omega^1 < \omega^2 < \ell^c} \Pr_{\mathbf{T} \sim \mathbf{T}^{\{\ell^1,\ldots,\ell^c,\omega^1,\omega^2\}}}[ \{ \mathbf{T}(\ell^1), \ldots,\mathbf{T}(\ell^c)\} = \{u_1,\ldots, u_c\} \land \mathbf{T}(\omega^1) \simeq \mathbf{T}(\omega^2) ] - \\
& ~~~~ \sum_{\omega^3 < \ell^c: \width(\omega^3) = \tau} \Pr_{\mathbf{T} \sim \mathbf{T}^{\{\ell^1,\ldots,\ell^c,  \omega^3\}}}[ \{ \mathbf{T}(\ell^1), \ldots,\mathbf{T}(\ell^c)\} = \{u_1,\ldots, u_c\} \land \mathbf{T}(\omega^3) \ne \star ] \\
& \ge \frac{2^{1-|P(\{\ell^1,\ldots,\ell^c\})|}}{n^c} - \sum_{\omega^1,\omega^2}\frac{2^{1-|P(\{\ell^1,\ldots,\ell^c,\omega^1,\omega^2\})|}}{n^{c+1}}-\sum_{\omega^3:\width(\omega^3) = \tau}\frac{2^{1-|P(\{\ell^1,\ldots,\ell^c,\omega^3\})|}}{n^c}.
\end{aligned}
\]
Similarly, we sum up all $\tau$-bounded tuples.
\begin{align}
 & ~~~~ \Pr_{\mathbf{h_1},\dots, \mathbf{h}_t, \mathbf{x}}[u_1,\ldots,u_c\in \Out_{a,\mathbf{h}}(\mathbf{x})] \notag\\ 
 & = \sum_{\ell^1 < \ldots < \ell^c}  \Pr_{\mathbf{h}_1,\dots,\mathbf{h}_t, \mathbf{x}}\left[  \{ \mathbf{T}(\ell^1), \ldots, \mathbf{T}(\ell^c)\} = \{u_1, \ldots, u_c\} \land \left( \forall \omega^1 < \omega^2 < \ell^c, \mathbf{T}(\omega^1) \not\simeq \mathbf{T}(\omega^2) \right)\right] \notag\\ 
 & \ge \sum_{\ell^1<\ldots<\ell^c}\frac{2^{1-|P(\{\ell^1,\ldots,\ell^c\})|}}{n^c} - \sum_{\ell^1,\ldots,\ell^c,\omega^1,\omega^2}\frac{2^{1-|P(\{\ell^1,\ldots,\ell^c,\omega^1,\omega^2\})|}}{n^{c+1}}-\sum_{\ell^1,\ldots,\ell^c,\omega^3:\width(\omega^3) = \tau}\frac{2^{1-|P(\{\ell^1,\ldots,\ell^c,\omega^3\})|}}{n^c}. \label{eq:bonus}
\end{align}
Now, we bound $3$ terms above separately. For the first one, it holds that
\[
\begin{aligned}
\sum_{\ell^1<\ldots<\ell^c}\frac{2^{1-|P(\{\ell^1,\ldots,\ell^c\})|}}{n^c}
\ge \frac{1}{n^c}\cdot \left(\frac{1}{c!}\cdot \sum_{\ell^1,\ldots,\ell^c}2^{1-|P(\{\ell^1,\ldots,\ell^c\})|}-\sum_{\substack{\ell^1\le\ldots\le\ell^c\\ \exists i\in [c-1], \ell^i=\ell^{i+1}}}2^{1-|P(\{\ell^1,\ldots,\ell^c\})|}\right).
\end{aligned}
\]
Based on Corollary~\ref{corol:counting2}, we have
\[
\begin{aligned}
\sum_{\substack{\ell^1\le\ldots\le\ell^c\\ \exists i\in [c-1], \ell^i=\ell^{i+1}}}2^{1-|P(\{\ell^1,\ldots,\ell^c\})|}\le 2^{(c-1)\cdot (t+1)}.
\end{aligned}
\]
Note that $|P(\{\ell^1,\ldots,\ell^c\})|\le \sum_{i=1}^{c}|P(\ell^i)|-(c-1)$. Therefore,
\[
\begin{aligned}
\sum_{\ell^1,\ldots,\ell^c}2^{1-|P(\{\ell^1,\ldots,\ell^c\})|} & \ge \left(\sum_{\ell}2^{1-|P(\ell)|}\right)^c \\
&= \left(\prod_{i=1}^{t}\sum_{\ell_i=0}^{\tau}2^{-\ell_i}\right)^c \\
&= (2-2^{-\tau})^{c\cdot t} \\
& \ge 2^{c \cdot t-1}.
\end{aligned}
\]
Here, the last inequality holds since $c\cdot t \ll 2^{\tau}$. For the first term in \eqref{eq:bonus}, we conclude that
\begin{align}
\sum_{\ell^1<\ldots<\ell^c}\frac{2^{1-|P(\{\ell^1,\ldots,\ell^c\})|}}{n^c} \ge \frac{2^{c\cdot t-1}}{c!\cdot n^c}-\frac{2^{(c-1)\cdot (t+1)}}{n^c} \ge \frac{2^{c\cdot t-2}}{c!\cdot n^c}. \label{bonus-first}
\end{align}
The last inequality holds because we assumed that $n$ is sufficiently large, which implies that $t$ is large enough so that $\frac{2^{ct-1}}{c!} > 2\cdot 2^{(c-1)(t+1)}$.

Using Corollary \ref{corol:counting}, we deduce that
\begin{align}
    \sum_{\ell^1,\ldots,\ell^c,\omega^1,\omega^2}\frac{2^{1-|P(\{\ell^1,\ldots,\ell^c,\omega^1,\omega^2\})|}}{n^{c+1}} \le \frac{(c+2)!\cdot 2^{(c+2)\cdot (t+1)}}{n^{c+1}}. \label{bonus-second}
\end{align}
For the third term, we use the same method as \eqref{eq:cjww-third-trick} to obtain a good upper bound. Since $\width(\omega^3) = \tau$, we enumerate $i\in [t]$ and $\omega^3$ such that $\omega^3_i = \tau$. Recall that for every $r\in \mathbb{N}^{t}$ and $i\in [t]$, we defined a $(t-1)$-dimensional index $r_{-i} := (r_1,\dots, r_{i-1},r_{i+1},\dots, r_t)$. Also, for every $t$-dimensional indices $\ell^1,\ldots,\ell^c$, it holds that
\[
\begin{aligned}
    |P^{(t)}(\{\ell^1,\ldots,\ell^c,\omega^3\})|\ge\tau+|P^{(t-1)}(\{\ell^1_{-i},\ldots,\ell^c_{-i},\omega^3_{-i}\})|.
\end{aligned}
\]
As a result, we have

\begin{align}
& ~~~~ \sum_{\ell^1,\ldots,\ell^c,\omega^3:\width(\omega^3) = \tau}\frac{2^{1-|P(\{\ell^1,\ldots,\ell^c,\omega^3\})|}}{n^c} \notag\\
& \le \frac{1}{n^c}\sum_{i=1}^{t}\sum_{\ell^1,\ldots,\ell^c,\omega^3:\omega^3_i = \tau}2^{1-|P^{(t)}(\{\ell^1,\ldots,\ell^c,\omega^3\})|} \notag\\
& \le \frac{1}{n^c}\sum_{i=1}^{t}\sum_{\ell^1,\ldots,\ell^c,\omega^3:\omega^3_i = \tau}2^{1-\tau-|P^{(t-1)}(\{\ell^1_{-i},\ldots,\ell^c_{-i}, \omega^3_{-i}\})|} \notag\\
& = \frac{1}{n^c}\sum_{i=1}^{t}2^{-\tau}\cdot (\tau+1)^c\cdot \sum_{\ell^1_{-i},\ldots,\ell^c_{-i},\omega^3_{-i}}2^{1-|P^{(t-1)}(\{\ell^1_{-i},\ldots,\ell^c_{-i},\omega^3_{-i}\})|} \notag\\
& \le \frac{1}{n^c}\sum_{i=1}^{t}2^{-\tau}\cdot (\tau+1)^c\cdot (c+1)!\cdot 2^{(c+1)\cdot t} \notag\\
& = \frac{(c+1)!\cdot t\cdot (\tau+1)^c}{n^c}\cdot 2^{(c+1)\cdot t-\tau}. \label{bonus-third}
\end{align}
Plugging \eqref{bonus-first}, \eqref{bonus-second} and \eqref{bonus-third} back in \eqref{eq:bonus}, we have
\[
\begin{aligned}
 & ~~~~ \Pr_{\mathbf{h}, \mathbf{x}}[u_1,\ldots,u_c\in \Out_{a,\mathbf{h}}(\mathbf{x})] \ge \frac{2^{c\cdot t-2}}{c!\cdot n^c}- \frac{(c+2)!\cdot 2^{(c+2)\cdot (t+1)}}{n^{c+1}}-\frac{(c+1)!\cdot t\cdot (\tau+1)^c}{n^c}\cdot 2^{(c+1)\cdot t-\tau}. \\
\end{aligned}
\]
Recall the threshold $\tau = 5\log n$ and $t =\frac{1}{2}\log(n)-D_c$. Now, we set $D_c = D$ to be a sufficiently large constant such that 
\[
\begin{aligned}
    \frac{1}{c!\cdot 2^{c\cdot D+2}}>\frac{(c+2)!}{2^{(c+2)(D-1)}} + \frac{1}{D}.
\end{aligned}
\]
Finally,
\[
\begin{aligned}
 \Pr_{\mathbf{h}, \mathbf{x}}[u_1,\ldots,u_c\in \Out_{a,\mathbf{h}}(\mathbf{x})] & \ge \frac{1}{c!\cdot 2^{c\cdot D+2}\cdot n^{c/2}}-\frac{(c+2)!}{2^{(c+2)(D-1)}\cdot n^{c/2}}- \frac{(c+1)!\cdot t \cdot (\tau+1)^c}{2^{(c+1)\cdot D}\cdot  n^{(c+9)/2}} \\
 & \ge \Omega(n^{-c/2}), \\
\end{aligned}
\]
which completes the proof.

\end{proof}

\section*{Acknowledgements}

We would like to thank Lijie Chen, Ce Jin, Ryan Williams and Hongxun Wu for insightful discussions about their work \cite{ChenJWW22}. In particular, we thank Ce Jin for pointing us to Dinur's lower bound \cite{DBLP:conf/eurocrypt/Dinur20-lowerbound}.

X. Lyu was supported by ONR DORECG award N00014-17-1-2127.

\nocite{DBLP:journals/joc/OorschotW99}

\bibliographystyle{alpha}
\bibliography{mybib}

\newcommand{\etalchar}[1]{$^{#1}$}
\begin{thebibliography}{BFadH{\etalchar{+}}87}

\bibitem[Ajt05]{DBLP:journals/toc/Ajtai05}
Mikl{\'{o}}s Ajtai.
\newblock A non-linear time lower bound for boolean branching programs.
\newblock {\em Theory Comput.}, 1(1):149--176, 2005.

\bibitem[AW89]{DBLP:journals/acr/AjtaiW89}
Mikl{\'{o}}s Ajtai and Avi Wigderson.
\newblock Deterministic simulation of probabilistic constant depth circuits.
\newblock {\em Adv. Comput. Res.}, 5:199--222, 1989.

\bibitem[BCM13]{DBLP:conf/focs/BeameCM13}
Paul Beame, Rapha{\"{e}}l Clifford, and Widad Machmouchi.
\newblock Element distinctness, frequency moments, and sliding windows.
\newblock In {\em 54th Annual {IEEE} Symposium on Foundations of Computer
  Science, {FOCS} 2013, 26-29 October, 2013, Berkeley, CA, {USA}}, pages
  290--299. {IEEE} Computer Society, 2013.

\bibitem[BFadH{\etalchar{+}}87]{DBLP:journals/siamcomp/BorodinFHUW87}
Allan Borodin, Faith~E. Fich, Friedhelm~Meyer auf~der Heide, Eli Upfal, and Avi
  Wigderson.
\newblock A time-space tradeoff for element distinctness.
\newblock {\em {SIAM} J. Comput.}, 16(1):97--99, 1987.

\bibitem[BRRY14]{DBLP:journals/siamcomp/BravermanRRY14}
Mark Braverman, Anup Rao, Ran Raz, and Amir Yehudayoff.
\newblock Pseudorandom generators for regular branching programs.
\newblock {\em {SIAM} J. Comput.}, 43(3):973--986, 2014.

\bibitem[CJWW22]{ChenJWW22}
Lijie Chen, Ce~Jin, R.~Ryan Williams, and Hongxun Wu.
\newblock Truly low-space element distinctness and subset sum via pseudorandom
  hash functions.
\newblock In {\em Proceedings of the 2022 Annual ACM-SIAM Symposium on Discrete
  Algorithms (SODA)}, pages 1661--1678, 2022.

\bibitem[Din20]{DBLP:conf/eurocrypt/Dinur20-lowerbound}
Itai Dinur.
\newblock Tight time-space lower bounds for finding multiple collision pairs
  and their applications.
\newblock In Anne Canteaut and Yuval Ishai, editors, {\em Advances in
  Cryptology - {EUROCRYPT} 2020 - 39th Annual International Conference on the
  Theory and Applications of Cryptographic Techniques, Zagreb, Croatia, May
  10-14, 2020, Proceedings, Part {I}}, volume 12105 of {\em Lecture Notes in
  Computer Science}, pages 405--434. Springer, 2020.

\bibitem[FK18]{DBLP:conf/focs/ForbesK18}
Michael~A. Forbes and Zander Kelley.
\newblock Pseudorandom generators for read-once branching programs, in any
  order.
\newblock In Mikkel Thorup, editor, {\em 59th {IEEE} Annual Symposium on
  Foundations of Computer Science, {FOCS} 2018, Paris, France, October 7-9,
  2018}, pages 946--955. {IEEE} Computer Society, 2018.

\bibitem[GMR{\etalchar{+}}12]{DBLP:conf/focs/GopalanMRTV12}
Parikshit Gopalan, Raghu Meka, Omer Reingold, Luca Trevisan, and Salil~P.
  Vadhan.
\newblock Better pseudorandom generators from milder pseudorandom restrictions.
\newblock In {\em 53rd Annual {IEEE} Symposium on Foundations of Computer
  Science, {FOCS} 2012, New Brunswick, NJ, USA, October 20-23, 2012}, pages
  120--129. {IEEE} Computer Society, 2012.

\bibitem[HLV18]{DBLP:journals/siamcomp/HaramatyLV18}
Elad Haramaty, Chin~Ho Lee, and Emanuele Viola.
\newblock Bounded independence plus noise fools products.
\newblock {\em {SIAM} J. Comput.}, 47(2):493--523, 2018.

\bibitem[IMZ12]{DBLP:conf/focs/ImpagliazzoMZ12}
Russell Impagliazzo, Raghu Meka, and David Zuckerman.
\newblock Pseudorandomness from shrinkage.
\newblock In {\em 53rd Annual {IEEE} Symposium on Foundations of Computer
  Science, {FOCS} 2012, New Brunswick, NJ, USA, October 20-23, 2012}, pages
  111--119. {IEEE} Computer Society, 2012.

\bibitem[INW94]{DBLP:conf/stoc/ImpagliazzoNW94}
Russell Impagliazzo, Noam Nisan, and Avi Wigderson.
\newblock Pseudorandomness for network algorithms.
\newblock In Frank~Thomson Leighton and Michael~T. Goodrich, editors, {\em
  Proceedings of the Twenty-Sixth Annual {ACM} Symposium on Theory of
  Computing, 23-25 May 1994, Montr{\'{e}}al, Qu{\'{e}}bec, Canada}, pages
  356--364. {ACM}, 1994.

\bibitem[Knu81]{DBLP:books/aw/Knuth81}
Donald~E. Knuth.
\newblock {\em The Art of Computer Programming, Volume {II:} Seminumerical
  Algorithms, 2nd Edition}.
\newblock Addison-Wesley, 1981.

\bibitem[LV20]{DBLP:journals/toc/LeeV20}
Chin~Ho Lee and Emanuele Viola.
\newblock More on bounded independence plus noise: Pseudorandom generators for
  read-once polynomials.
\newblock {\em Theory Comput.}, 16:1--50, 2020.

\bibitem[MP80]{DBLP:journals/tcs/MunroP80}
J.~Ian Munro and Mike Paterson.
\newblock Selection and sorting with limited storage.
\newblock {\em Theor. Comput. Sci.}, 12:315--323, 1980.

\bibitem[MRT19]{DBLP:conf/stoc/MekaRT19-width3}
Raghu Meka, Omer Reingold, and Avishay Tal.
\newblock Pseudorandom generators for width-3 branching programs.
\newblock In Moses Charikar and Edith Cohen, editors, {\em Proceedings of the
  51st Annual {ACM} {SIGACT} Symposium on Theory of Computing, {STOC} 2019,
  Phoenix, AZ, USA, June 23-26, 2019}, pages 626--637. {ACM}, 2019.

\bibitem[NZ96]{DBLP:journals/jcss/NisanZ96}
Noam Nisan and David Zuckerman.
\newblock Randomness is linear in space.
\newblock {\em J. Comput. Syst. Sci.}, 52(1):43--52, 1996.

\bibitem[Pol75]{Pollard1975AMC}
John~M. Pollard.
\newblock A monte carlo method for factorization.
\newblock {\em BIT Numerical Mathematics}, 15:331--334, 1975.

\bibitem[PP93]{DBLP:journals/tcs/Patt-ShamirP93}
Boaz Patt{-}Shamir and David Peleg.
\newblock Time-space tradeoffs for set operations.
\newblock {\em Theor. Comput. Sci.}, 110(1):99--129, 1993.

\bibitem[PR98]{DBLP:conf/focs/PagterR98}
Jakob Pagter and Theis Rauhe.
\newblock Optimal time-space trade-offs for sorting.
\newblock In {\em 39th Annual Symposium on Foundations of Computer Science,
  {FOCS} '98, November 8-11, 1998, Palo Alto, California, {USA}}, pages
  264--268. {IEEE} Computer Society, 1998.

\bibitem[TX13]{DBLP:conf/coco/TrevisanX13}
Luca Trevisan and Tongke Xue.
\newblock A derandomized switching lemma and an improved derandomization of
  {AC0}.
\newblock In {\em Proceedings of the 28th Conference on Computational
  Complexity, {CCC} 2013, K.lo Alto, California, USA, 5-7 June, 2013}, pages
  242--247. {IEEE} Computer Society, 2013.

\bibitem[Vad12]{DBLP:journals/fttcs/Vadhan12-pseudorandomness}
Salil~P. Vadhan.
\newblock Pseudorandomness.
\newblock {\em Found. Trends Theor. Comput. Sci.}, 7(1-3):1--336, 2012.

\bibitem[vOW99]{DBLP:journals/joc/OorschotW99}
Paul~C. van Oorschot and Michael~J. Wiener.
\newblock Parallel collision search with cryptanalytic applications.
\newblock {\em J. Cryptol.}, 12(1):1--28, 1999.

\bibitem[Yao94]{DBLP:journals/siamcomp/Yao94}
Andrew~Chi{-}Chih Yao.
\newblock Near-optimal time-space tradeoff for element distinctness.
\newblock {\em {SIAM} J. Comput.}, 23(5):966--975, 1994.

\end{thebibliography}

\appendix

\section{The Proof of Lemma~\ref{theo:counting}}\label{app:technical}

In this section, we provide a proof of Lemma~\ref{theo:counting}. Recall its statement. 

\begin{reminder}{Lemma~\ref{theo:counting}}
For any fixed positive integer $c$ and $t$, let $S^t$ be the set of all $t$-dimensional indices. Denote $f(c,t)$ as $\sum_{S\subseteq S^t, |S|=c}2^{1-|P^{(t)}(S)|}$. Then, we have
\[
f(c,t) \le 2^{c\cdot t}.
\]
\end{reminder}

We first need to prove the following lemma, which is for later use.

\begin{lemma}\label{lemma:combinatorics}
    For every non-negative integer $r \ge 0$, it holds that
    
\[
\sum_{k=0}^{+\infty}2^{-k}\cdot \binom{k}{r}=2.
\]
\end{lemma}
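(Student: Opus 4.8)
The plan is to prove the identity by induction on $r$, using Pascal's rule to turn the sum into a self-referential equation. Write $S_r := \sum_{k=0}^{\infty} 2^{-k}\binom{k}{r}$. First I would record that this series converges: all terms are nonnegative, and since $\binom{k}{r}\le k^r$ we have $2^{-k}\binom{k}{r}\le 2^{-k/2}$ for all sufficiently large $k$, so $S_r$ is a finite nonnegative real. Because every summand is nonnegative, the reindexings and term-by-term splittings used below are automatically valid.

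For the base case $r=0$ we have $\binom{k}{0}=1$ for all $k\ge 0$, hence $S_0=\sum_{k\ge 0}2^{-k}=2$. For the inductive step, fix $r\ge 1$ and assume $S_{r-1}=2$. Since $\binom{0}{r}=0$, the sum defining $S_r$ effectively ranges over $k\ge 1$, where Pascal's rule gives $\binom{k}{r}=\binom{k-1}{r-1}+\binom{k-1}{r}$. Substituting and reindexing $j=k-1$,
\[
S_r=\sum_{k\ge 1}2^{-k}\binom{k-1}{r-1}+\sum_{k\ge 1}2^{-k}\binom{k-1}{r}=\tfrac12\sum_{j\ge 0}2^{-j}\binom{j}{r-1}+\tfrac12\sum_{j\ge 0}2^{-j}\binom{j}{r}=\tfrac12 S_{r-1}+\tfrac12 S_r.
\]
Rearranging gives $S_r=S_{r-1}=2$, which closes the induction.

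Two alternative routes would also work: evaluating the generating-function identity $\sum_{k\ge 0}\binom{k}{r}x^k=x^r/(1-x)^{r+1}$ at $x=1/2$, or the probabilistic observation that $2^{-k-1}\binom{k}{r}$ is exactly the probability that the $(r+1)$-st head in a sequence of fair coin flips occurs on flip $k+1$, so these probabilities sum to $1$ and $S_r=2$. I expect no real obstacle here; the only technical point is justifying the manipulation of infinite series, and that is handled for free by the nonnegativity of all terms.
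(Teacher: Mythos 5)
Your induction via Pascal's rule is correct and complete, but it takes a genuinely different route from the paper. The paper's proof is the probabilistic one you relegate to an aside: $2^{-k-1}\binom{k}{r}$ is the probability that the $(r+1)$-st head in a sequence of fair coin flips lands on flip $k+1$ (a negative binomial distribution), and since the $(r+1)$-st head appears with probability $1$, the sum is $1$, giving $S_r=2$ immediately with no induction. Your approach is more elementary in the sense that it invokes only Pascal's rule and absolute convergence, with no appeal to probabilistic interpretation; its cost is the need to first verify that $S_r$ is finite before you can legitimately subtract $\tfrac12 S_r$ from both sides of $S_r=\tfrac12 S_{r-1}+\tfrac12 S_r$, a point you correctly handle via the $\binom{k}{r}\le k^r$ bound. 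The paper's argument gets finiteness for free, since it identifies the terms as probabilities of disjoint events. Both are short and correct; the induction generalizes more mechanically to base $p\ne 1/2$ (where it would yield $S_r = (1-p)^{-r-1}p^r$ after adjusting the base case), whereas the probabilistic version makes the combinatorial meaning of the identity transparent.
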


\begin{proof}
    Consider tossing an unbiased coin for infinite times. The probability that we get at least $r+1$ heads is $1$. From another perspective, suppose the $(r+1)$-th head appears at the $(k+1)$-th round. This implies that we get a head in the $(k+1)$-th round and exactly $r$ heads in the first $k$ rounds. In this case, there are $\binom{k}{r}$ possible results for the first $k$ rounds. The probability of getting each possible result is $2^{-k-1}$. Therefore, we have
    \[
    \sum_{k=0}^{+\infty}2^{-k-1}\cdot \binom{k}{r}=1,
    \]
    which is equivalent to the lemma.
\end{proof}

Now, we can start the proof of Theorem \ref{theo:counting}. We finish it by induction on $c$.

\begin{proof}
    \item \underline{\textsl{Base Case:}} When $c=1$, it holds that
    \[
    \begin{aligned}
    f(1,t)&=\sum_{S\subseteq S^t, |S|=1}2^{1-|P^{(t)}(S)|} \\
    &=\sum_{\ell\in S^t}2^{1-|P^{(t)}(\ell)|} \\
    &=\sum_{\ell\in S^t}2^{-\sum_{i=1}^{t}\ell_i} \\
    &=\prod_{i=1}^{t}\sum_{\ell_i=0}^{\tau}2^{-\ell_i} \\
    &= (2-2^{-\tau})^t \\
    &\le 2^t,\\
    \end{aligned}
    \]
    which implies $f(c,t)\le 2^{c\cdot t}$ when $c=1$.
    
    \item \underline{\textsl{Inductive Step:}}
    Assume for any positive integer $c'<c$ and $t$, it holds $f(c',t)\le 2^{c'\cdot t}$. Now, let us prove that $f(c,t)\le 2^{c\cdot t}$ for any positive integer $t$.
    
    When $t=1$, suppose the $c$ indices in $S$ are $\ell^1<\ell^2<\ldots<\ell^c$. As $t=1$, we have $0\le \ell^1_1<\ell^2_1<\ldots<\ell^c_1\le \tau$. For every $\ell^c_1$ from $0$ to $\tau$, there are $\binom{\ell^c_1}{c-1}$ ways to choose $(\ell_1^i)_{i=1}^{c-1}$. Each of them contributes $2^{-\ell^c_1}$ to $f(c,1)$. By Lemma \ref{lemma:combinatorics}, we have
    \[
    \begin{aligned}
    f(c,1) &= \sum_{\ell^c_1=0}^{\tau}2^{-\ell^c_1}\cdot \binom{\ell^c_1}{c-1} \\
    &\le \sum_{\ell^c_1=0}^{+\infty}2^{-\ell^c_1}\cdot \binom{\ell^c_1}{c-1} \\
    &= 2 < 2^c.
    \end{aligned}
    \]
    
    When $t>1$, suppose the $c$ indices in $S$ are $\ell^1<\ell^2<\ldots<\ell^c$, which implies $\ell^1_t\le \ell^2_t\le\ldots\le\ell^c_t$. For $0\le j \le \tau$, denote $S_{j}:=\{\ell:\ell\in S \land \ell_t=j\}$. Also, let $S_{j}':=\{\omega:\exists\ell \in S_j, \forall i\in[t-1], \omega_i=\ell_i\}$, which is a subset of $S^{t-1}$. Obviously, $S_0, S_1, \ldots, S_\tau$ form a partition of $S$. Let $k\in\{0,\ldots,\tau\}$ be the largest integer such that $S_{k}\neq\emptyset$. According to Definition \ref{def:walk-tree}, it holds that 
    \[
    |P^{(t)}(S)|-1=k+\sum_{i: S_i \neq\emptyset}(|P^{(t-1)}(S_i')|-1).
    \]
    Therefore, for set $S$, its contribution to $f(c,t)$ is
    \[
    2^{1-|P^{(t)}(S)|}=2^{-k}\cdot\prod_{i: S_i \neq\emptyset}2^{1-|P^{(t-1)}(S_i')|}.
    \]
    Now, let us get an upper bound for $f(c,t)$ when $t>1$. 
    
    \[
    \begin{aligned}
    f(c,t) &= \sum_{S\subseteq S^t, |S|=c}2^{1-|P^{(t)}(S)|} \\
    &= \sum_{k=0}^{\tau}~~ \sum_{S\subseteq S^t, |S|=c, \ell^c_t=k}2^{1-|P^{(t)}(S)|} \\
    &= \sum_{k=0}^{\tau}~~ \sum_{S\subseteq S^t, |S|=c, \ell^c_t=k}2^{-k}\cdot\prod_{i: S_i \neq\emptyset}2^{1-|P^{(t-1)}(S_i')|} \\
    &= \sum_{k=0}^{\tau}2^{-k}\cdot ~~ \sum_{\substack{S_0',S_1',\ldots,S_{k}'\subseteq S^{t-1} \\ |S_0'|+|S_1'|+\ldots+|S_{k}'|=c \\ |S_{k}'|>0}}\prod_{i: S_i \neq\emptyset}2^{1-|P^{(t-1)}(S_i')|} \\
    &= \sum_{k=0}^{\tau}2^{-k}\cdot ~~ \left(\sum_{\substack{S_{k}'\subseteq S^{t-1}, |S_{k}'|=c}}2^{1-|P^{(t-1)}(S_k')|}+\sum_{\substack{S_0',S_1',\ldots,S_{k}'\subseteq S^{t-1} \\ |S_0'|+|S_1'|+\ldots+|S_{k}'|=c \\ c>|S_{k}'|>0}}\prod_{i: S_i \neq\emptyset}2^{1-|P^{(t-1)}(S_i')|}\right) \\
    &= \sum_{k=0}^{\tau}2^{-k}\cdot ~~ \left(f(c,t-1)+\sum_{\substack{n_0+n_1+\ldots+n_k=c \\ n_i\ge 0 \\ c>n_k>0}}~~\prod_{i: n_i>0}\sum_{\substack{S_i'\subseteq S^{t-1} \\ |S_i'|=n_i}}2^{1-|P^{(t-1)}(S_i')|}\right) \\
    &= \sum_{k=0}^{\tau}2^{-k}\cdot ~~ \left(f(c,t-1)+\sum_{\substack{n_0+n_1+\ldots+n_k=c \\ n_i\ge 0 \\ c>n_k>0}}~~\prod_{i: n_i>0}f(n_i,t-1)\right). \\
    \end{aligned}
    \]
    
    For non-negative integers $n_0, n_1, \ldots, n_k$, if $c>n_k>0$ holds when $\sum_{i=0}^{k}=c$, we have $n_i<c$ for all $0\le i \le k$. According to our assumption of the inductive step, we have $f(n_i,t-1)\le 2^{n_i\cdot (t-1)}$. Therefore,
    
    \[
    \begin{aligned}
    f(c,t) &= \sum_{k=0}^{\tau}2^{-k}\cdot  \left(f(c,t-1)+\sum_{\substack{n_0+n_1+\ldots+n_k=c \\ n_i\ge 0 \\ c>n_k>0}}~~\prod_{i: n_i>0}f(n_i,t-1)\right) \\
    &\le \sum_{k=0}^{\tau}2^{-k}\cdot  \left(f(c,t-1)+\sum_{\substack{n_0+n_1+\ldots+n_k=c \\ n_i\ge 0 \\ c>n_k>0}}~~\prod_{i: n_i>0}2^{n_i\cdot (t-1)}\right) \\
    &= \sum_{k=0}^{\tau}2^{-k}\cdot  \left(f(c,t-1)+\sum_{\substack{n_0+n_1+\ldots+n_k=c \\ n_i\ge 0 \\ c>n_k>0}}~~2^{c\cdot (t-1)}\right) \\
    &= \sum_{k=0}^{\tau}2^{-k}\cdot  \left(f(c,t-1)+\left(\binom{c+k-1}{c-1}-1\right)\cdot 2^{c\cdot (t-1)}\right) \\
    &\le \sum_{k=0}^{+\infty}2^{-k}\cdot  \left(f(c,t-1)+\left(\binom{c+k-1}{c-1}-1\right)\cdot 2^{c\cdot (t-1)}\right) \\
    &\le \sum_{k=0}^{+\infty}2^{-k}\cdot \left(f(c,t-1)-2^{c\cdot (t-1)}\right)+2^{c\cdot t-1}\cdot \sum_{k=0}^{+\infty}2^{-(c+k-1)}\binom{c+k-1}{c-1} \\
    &\le 2\cdot \left(f(c,t-1)-2^{c\cdot (t-1)}\right)+2^{c\cdot t}. \\
    \end{aligned}
    \]
    The last step is based on Lemma \ref{lemma:combinatorics}. In this way, we have
    \[
    \begin{aligned}
    f(c,t)-2^{c\cdot t} &\le 2\cdot \left(f(c,t-1)-2^{c\cdot (t-1)}\right) \\
    &\le 2^2\cdot \left(f(c,t-2)-2^{c\cdot (t-2)}\right) \\
    &\le \ldots \\
    &\le 2^{t-1}\cdot \left(f(c,1)-2^c\right) \\
    &\le 0,
    \end{aligned}
    \]
    which implies $f(c,t)\le 2^{c\cdot t}$. This completes the proof.
    
\end{proof}

Now, we prove Corollary~\ref{corol:counting}. Recall its statement as follows.

\begin{reminder}{Corollary~\ref{corol:counting}}
For any fixed positive integer $c$ and $t$, it holds that
\[
\sum_{\ell^1,\ldots,\ell^c\in S^t}2^{1-|P^{(t)}(\{\ell^1,\ldots,\ell^c\})|}\le c!\cdot 2^{c\cdot (t+1)}.
\]
\end{reminder}

\begin{proof}
It is obvious that

\begin{align}
\sum_{\ell^1,\ldots,\ell^c\in S^t}2^{1-|P^{(t)}(\{\ell^1,\ldots,\ell^c\})|}\le c!\cdot \sum_{\ell^1\le\ldots\le\ell^c\in S^t}2^{1-|P^{(t)}(\{\ell^1,\ldots,\ell^c\})|}.
\label{eq:counting}
\end{align}

For $1\le i < c$, let $d_i$ be an integer which is $0$ or $1$. If $\ell^i=\ell^{i+1}$, set $d_i$ as $0$. Otherwise, set $d_i$ as $1$. There are totally $2^{c-1}$ possibilities for sequence $(d_i)_{i=1}^{c-1}$. For each possibility, if there are exactly $k$ zeros in the sequence, its contribution to the right side of \eqref{eq:counting} is $c!\cdot f(c-k,t)$, which is no larger than $c!\cdot 2^{(c-k)\cdot t}\le c!\cdot 2^{c\cdot t}$. Therefore, it holds that
\[
\begin{aligned}
    \sum_{\ell^1,\ldots,\ell^c\in S^t}2^{1-|P^{(t)}(\{\ell^1,\ldots,\ell^c\})|} &\le c!\cdot \sum_{\ell^1\le\ldots\le\ell^c\in S^t}2^{1-|P^{(t)}(\{\ell^1,\ldots,\ell^c\})|}  \\
    & \le 2^{c-1}\cdot c!\cdot 2^{c\cdot t}  \\
    & \le c!\cdot 2^{c\cdot (t+1)}, 
\end{aligned}
\]
which finishes the proof.

\end{proof}

Here we give the proof of Corollary~\ref{corol:counting2}. Recall its statement as follows.

\begin{reminder}{Corollary~\ref{corol:counting2}}
For any fixed positive integer $c$ and $t$, it holds that
\[
\sum_{\substack{\ell^1\le\ldots\le\ell^c\\ \exists i\in [c-1], \ell^i=\ell^{i+1}}}2^{1-|P^{(t)}(\{\ell^1,\ldots,\ell^c\})|}\le 2^{(c-1)\cdot (t+1)}.
\]
\end{reminder}
\begin{proof}
We define sequence $(d_i)_{i=1}^{c-1}$ the same way as that in the proof of Corollary~\ref{corol:counting}. There are totally $2^{c-1}-1$ possibilities, because there must be some $i\in [c-1]$ such that $d_i=0$. For each possibility, if there are exactly $k$ zeros in the sequence, its contribution is $f(c-k,t)$, which is no larger than $2^{(c-k)\cdot t}\le 2^{(c-1)\cdot t}$. Therefore, it holds that
\[
\begin{aligned}
    \sum_{\substack{\ell^1\le\ldots\le\ell^c\\ \exists i\in [c-1], \ell^i=\ell^{i+1}}}2^{1-|P^{(t)}(\{\ell^1,\ldots,\ell^c\})|} & \le (2^{c-1}-1)\cdot 2^{(c-1)\cdot t} \le 2^{(c-1)\cdot (t+1)},
\end{aligned}
\]
which finishes the proof.
\end{proof}

\section{$c$-Connecting is Asymptotically Optimal}\label{sec:connecting-ub}

In this section, we show that the probability lower bound in Definition~\ref{def:connecting} is asymptotically optimal by proving the following claim.

\begin{claim}
For every constant $c\ge 2$, there is a constant $C_c \ge 1$ satisfying the following. For all large enough $n,m\ge 1$ such that $m\ge n^2$, suppose $\calH$ is a distribution over hash functions $h:[m]\to [n]$. Then there is an injective mapping $a:[n]\to [m]$ and $c$ vertices $1 \le u_1 < \dots < u_c \le [n]$ such that 
\begin{align}
\Pr_{\mathbf{h}\sim \calH, \mathbf{x}\sim [n]}[u_i\in \Out_{a,\mathbf{h}}(\mathbf{x}), \forall i\in [c]] \le C_c\cdot n^{-c/2}. \label{eq:connecting-ub}
\end{align}
\end{claim}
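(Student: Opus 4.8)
The plan is to prove the upper bound by an averaging argument: I will exhibit a \emph{distribution} $\mathbf{a}$ over injective mappings $a:[n]\to[m]$ together with a fixed choice of $c$ target vertices, and show that the expected connecting probability over $\mathbf{a}$ (and over the random $\mathbf{h}\sim\calH$, $\mathbf{x}\sim[n]$) is at most $C_c\cdot n^{-c/2}$; this immediately yields one particular $a$ achieving the bound. The natural choice is to take $\mathbf{a}$ a uniformly random injection from $[n]$ into $[m]$ (here the hypothesis $m\ge n^2$ is what makes a random injection behave essentially like a random function: collisions among the $n$ chosen values are absent by definition, and more importantly the value $a_x$ is nearly uniform in $[m]$ and nearly independent across $x$). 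For the targets I would just fix $u_i=i$ for $i\in[c]$, or even average over the choice of the $c$-subset as well if that is cleaner. The key point is that for a random injection $\mathbf{a}$, the functional graph $G_{\mathbf{a},h}$ has the property that the out-neighbour of a vertex $x$ is $h(\mathbf{a}_x)$ where $\mathbf{a}_x$ is (almost) a fresh uniform point of $[m]$, so the graph $G_{\mathbf{a},h}$ looks like the functional graph of a random function $[n]\to[n]$ obtained by composing $h$ with a random injection.

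\textbf{Key steps.} First I would formalize the averaging step: $\Ex_{\mathbf{a}}\Pr_{\mathbf{h},\mathbf{x}}[\forall i,\ u_i\in\Out_{\mathbf{a},\mathbf{h}}(\mathbf{x})] \ge \min_a \Pr_{\mathbf{h},\mathbf{x}}[\cdots]$, so it suffices to bound the left-hand side. Second, swap the order of expectation and condition on $\mathbf{h}$: for fixed $h$, I need $\Ex_{\mathbf{a},\mathbf{x}}[\ind\{u_1,\dots,u_c\in\Out_{\mathbf{a},h}(\mathbf{x})\}]$. Third — the heart of the argument — for fixed $h$ I would bound the probability that $c$ specified vertices all lie on a single trajectory. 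Since $\Out_{\mathbf{a},h}(\mathbf{x})$ is a ``rho-shaped'' set (a path leading into a cycle), the event that $u_1,\dots,u_c$ all lie in it forces them to appear in some order along the trajectory; for each of the $c!$ orderings, the event that they appear consecutively-reachable in that order is a conjunction of ``reaching'' events. For a random injection, the probability of going from one fixed vertex to another fixed vertex in exactly $s$ steps is roughly $\binom{\text{cycle-structure factor}}{}\cdot$(something)$\cdot n^{-1}$ summed over $s$, and one shows $\Pr_{\mathbf{a},\mathbf{x}}[u_{j}\text{ reaches }u_{j+1}]\le O(n^{-1/2})$ uniformly, because reaching a \emph{specified} vertex from a \emph{random} vertex in a random functional graph happens with probability $\Theta(n^{-1/2})$ (birthday bound: a random trajectory has length $O(\sqrt n)$, and each of those $O(\sqrt n)$ vertices equals the target with probability $O(1/n)$). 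Applying this once to get $\mathbf{x}$ onto $u_{\sigma(1)}$ (prob $O(n^{-1/2})$) and then $c-1$ more times to chain to $u_{\sigma(2)},\dots,u_{\sigma(c)}$, and using near-independence of the fresh injection values at each new vertex, gives $O(n^{-c/2})$ per ordering, hence $O(c!\cdot n^{-c/2})$ total; set $C_c=O(c!)$. Fourth, handle the small discrepancy between ``random injection into $[m]$'' and ``random function into $[m]$'': since $m\ge n^2$, the total variation distance contributed by conditioning on injectivity over the (at most $\Otilde(\sqrt n)\cdot c$ many) vertices actually visited is $O(\mathrm{poly}(n)/m)=O(\mathrm{poly}(n)/n^2)=o(n^{-c/2})$ for... — actually I should be a little careful here, since $n^{-c/2}$ can be much smaller than $n^{-2}$ is not true for $c\le 4$; instead the cleaner route is never to invoke injectivity-correction globally but to note that along a trajectory of length $\le L=O(\sqrt n\log n)$ visited before any repeat, the conditional law of each successive $\mathbf{a}_x$ given the previous ones is uniform over $[m]$ minus at most $L$ forbidden values, which is within a $(1+O(L/m))$ multiplicative factor of uniform, and $L/m = O(\sqrt n\log n/n^2)$, so over $cL$ steps the cumulative distortion is $1+o(1)$ — this is fine regardless of $c$.

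\textbf{Main obstacle.} The step I expect to be most delicate is making the ``reach a fixed vertex with probability $O(n^{-1/2})$'' claim fully rigorous for a random functional graph, \emph{and} correctly handling the fact that the $c$ reaching events are not literally independent: once the trajectory from $\mathbf{x}$ has hit $u_{\sigma(1)}$ it has already ``used up'' the injection values at the visited vertices, so continuing from $u_{\sigma(1)}$ toward $u_{\sigma(2)}$ must be analyzed conditionally. The right way to handle this is a step-by-step exposure (principle of deferred decisions): reveal $\mathbf{a}_x$ only when the walk first reaches $x$; as long as no vertex repeats (which holds until the walk length exceeds $O(\sqrt n\log n)$, an event of negligible failure probability), every newly revealed $\mathbf{a}_x$ is fresh and near-uniform, so the conditional probability that the $j$-th leg of the walk ever reaches $u_{\sigma(j)}$ is $O(n^{-1/2})$ by a union bound over the $O(\sqrt n\log n)$ steps of that leg, each step landing on the fixed target with conditional probability $O(1/n)$. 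Multiplying the $c$ conditional bounds and summing over the $c!$ orderings gives the claim; the $\log n$ factors from the ``no repeat'' truncation are absorbed into $C_c$ if we are content with $C_c=O(c!\,\mathrm{polylog})$, or removed with a slightly more careful tail estimate. I would also need the trivial observation that if $\Out_{\mathbf{a},h}(\mathbf{x})$ contains all of $u_1,\dots,u_c$ then in particular $\mathbf{x}$ reaches $\min$ of them and they are linearly ordered by reachability, which is what licenses the ``pick an ordering $\sigma$'' decomposition in the first place.
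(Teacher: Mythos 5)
There is a genuine gap in your proposal, and it stems from an implicit assumption you make right at the start: that for a random injection $\mathbf{a}:[n]\hookrightarrow[m]$ and a \emph{fixed} $h$, the composed map $f=h\circ\mathbf{a}$ ``looks like a random function $[n]\to[n]$.'' It does not. The value $f(x)=h(\mathbf{a}_x)$ at a fresh vertex $x$ is distributed according to the pushforward $\nu_h(y)=|h^{-1}(y)|/m$, which an adversarial $\calH$ can make arbitrarily skewed. Two downstream claims then break. First, the per-leg bound ``$\Pr[u_j \text{ reaches }u_{j+1}]\le O(n^{-1/2})$ uniformly'' is false: each step lands on a fixed target $u$ with probability $\nu_h(u)$, not $O(1/n)$, and if $u$ is a heavy image of $h$ this can be $\Theta(1)$. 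Second, and as a consequence, your preferred choice of fixing the targets $u_i=i$ does not work even after averaging over $\mathbf{a}$: if $\calH$ is concentrated on an $h$ whose image is exactly $\{1,\dots,c\}$ with roughly balanced preimages, then a short calculation shows $\Pr[u_1,\dots,u_c\in\Out_{\mathbf{a},\mathbf{h}}(\mathbf{x})]=\Theta_c(1)$, not $O(n^{-c/2})$. So for that $\calH$ the bound must be witnessed by \emph{different} vertices $u_i$, and the freedom to pick the $u_i$'s is essential, not a cosmetic option as your ``or even average over the choice of the $c$-subset if that is cleaner'' suggests.

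The paper's proof handles both issues at once by bounding the single quantity $\Ex_{\mathbf{a},\mathbf{h},\mathbf{x}}\bigl[\binom{|\Out_{\mathbf{a},\mathbf{h}}(\mathbf{x})|}{c}\bigr]$ by $O_c(n^{c/2})$ and expanding $\binom{|\Out|}{c}=\sum_{u_1<\dots<u_c}\ind[u_1,\dots,u_c\in\Out]$: this simultaneously averages over the $c$-subset (so no one subset needs to behave) and only uses the collision-probability lower bound $\|\nu_h\|_2^2\ge 1/n$ to control the trajectory length, never a pointwise $\nu_h(u)=O(1/n)$ estimate. The heavy targets and the short walk cancel automatically in the moment. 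If you instead average over $u$-subsets in your framework, you recover exactly the paper's quantity and the per-leg decomposition becomes redundant. (Your worry about the $\mathrm{polylog}$ from the ``no repeat'' truncation is a secondary issue; as you note it can be removed with a sharper tail estimate, and the paper faces the same unproved-but-standard exponential-decay step. But the $\nu_h$ issue and the necessity of averaging over the $u_i$'s are the real obstacles.)
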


\begin{proofsketch}
Let $\calH$ be an arbitrary distribution over hash functions $h:[m]\to [n]$. Let $\calA$ denote the uniform distribution over all mappings from $[n]$ to $[m]$. Consider sampling $\mathbf{a}\in \calA$ (Note that $\mathbf{a}$ is not necessarily injective). For every fixed $h\in \supp(\calH)$, $(h\circ \mathbf{a})$ is a random mapping from $[n]$ to $[n]$ where $\Pr_{\mathbf{a}}[(h\circ \mathbf{a})(x) = y] = \frac{|h^{-1}(y)|}{m}$. By a birthday-paradox style argument, we have
\[
\Ex_{\mathbf{a}\sim \calA,\mathbf{h}\sim \calH,\mathbf{x}\sim [n]} \left[ | \Out_{\mathbf{a},\mathbf{h}}(\mathbf{x})| \right] \le O(\sqrt{n}).
\]
In fact, with some extra effort, one can show that the distribution of $|\Out_{\mathbf{a},\mathbf{h}}(\mathbf{x})|$ decays to $0$ exponentially fast, and there is a constant $D_c\ge 1$ ($D_c$ depends on $c$) such that
\[
\Ex_{\mathbf{a}\sim \calA,\mathbf{h}\sim \calH,\mathbf{x}\sim [n]} \left[ \binom{|\Out_{\mathbf{a},\mathbf{h}}(\mathbf{x})|}{c} \right] \le D_c n^{c/2}.
\]
Let $A^I = \{a | a:[n]\to [m] \text{ is injective}\}$. By a simple union bound, we have $\Pr_{\mathbf{a}\sim \calA}[\mathbf{a}\in A^I] \ge 1 - \frac{1}{m}\binom{n}{2} \ge \frac{1}{2}$. Moreover, conditioning on $\mathbf{a}\in A^I$, $\mathbf{a}$ is uniformly distributed in $A^I$. Hence,
\[
\Ex_{\mathbf{a}\sim A^I,\mathbf{h}\sim \calH,\mathbf{x}\sim [n]} \left[ \binom{|\Out_{\mathbf{a},\mathbf{h}}(\mathbf{x})|}{c} \right] \le \frac{1}{\Pr_{\mathbf{a}\sim \calA}[\mathbf{a}\in A^I]}\cdot  \Ex_{\mathbf{a}\sim \calA,\mathbf{h}\sim \calH,\mathbf{x}\sim [n]} \left[ \binom{|\Out_{\mathbf{a},\mathbf{h}}(\mathbf{x})|}{c} \right] \le 2 D_c n^{c/2}.
\]
We choose $C_c = c!\cdot 2^{c+5} \cdot D_c$. Suppose the theorem statement does not hold. Then:
\[
\begin{aligned}
\Ex_{\mathbf{a}\sim A^I,\mathbf{h}\sim \calH,\mathbf{x}\sim [n]} \left[ \binom{|\Out_{\mathbf{a},\mathbf{h}}(\mathbf{x})|}{c} \right]
&= \Ex_{\mathbf{a}\sim A^I}  \Ex_{\mathbf{h}\sim \calH,\mathbf{x}\sim [n]} \left[ \binom{|\Out_{\mathbf{a},\mathbf{h}}(\mathbf{x})|}{c} \right] \\
&\ge \Ex_{\mathbf{a}\sim A^I} \left( \binom{n}{c} \cdot C_c \cdot n^{-c/2} \right) \\
&\ge \frac{C_c}{c!\cdot 2^{c}} \cdot n^{c/2} & \text{(Assume $n\ge 2c$)}\\
&> 2 D_cn^{c/2}.
\end{aligned}
\]
This leads to a contradiction. Therefore, there must be an injective mapping $a\in A^I$ and $c$ vertices $1\le u_1 < \dots < u_c\le n$ for which \eqref{eq:connecting-ub} holds.
\end{proofsketch}

\end{document}